\def\bbalpha{{\boldsymbol\alpha}}
\def\biota{{\boldsymbol\iota}}
\def\bbell{{\boldsymbol\ell}}
\def\bbchi{{\boldsymbol\chi}}
\def\bbsigma{{\boldsymbol\sigma}}
\def\bblambda{{\boldsymbol\lambda}}
\def\bbbeta{{\boldsymbol\beta}}
\def\bbdelta{{\boldsymbol\delta}}
\def\bomega{{\boldsymbol\omega}}
\def\bbA{{\boldsymbol A}}
\def\bbC{{\boldsymbol C}}
\def\bL{{\boldsymbol L}}
\def\bcdot{{\boldsymbol\cdot}}
\def\bb{{\boldsymbol b}}
\def\be{{\boldsymbol e}}
\def\bu{{\boldsymbol u}}
\newtheorem{thm}{\textbf{Theorem}}[section]
\newtheorem{prop}[thm]{\textbf{Proposition}}
\newtheorem{cor}[thm]{\textbf{Corollary}}
\newtheorem{lem}[thm]{\textbf{Lemma}}
\newtheorem{que}[thm]{\textbf{Question}}
\newtheorem{defn*}{\textbf{Definitions}}
\newtheorem{defn}[thm]{\textbf{Definition}}
\newtheorem{remark}[thm]{\textbf{Remark}}
\title{Criteria of Stabilizability for Switching-control Systems with Solvable Linear Approximations\thanks{Received by the editors August 19, 2010; accepted
for publication (in revised form) January 9, 2012; published
electronically XX XX, 2012. This work was supported in part by National Natural Science Foundation of China
(Grant No. 11071112) and PAPD of Jiangsu Higher Education.
\URL sicon/xx-x/xxxx.html}
}%
\author{Xiongping Dai\thanks{Department of Mathematics, Nanjing University, Nanjing 210093, People's Republic of China
(xpdai@nju.edu.cn).}
}%
\begin{document}
%siam_id=69967
%CODEN=SJCODC
\slugger{sicon}{2011}{XX}{X}{1--32}%
%\slugger{sicon}{20XX}{XX}{X}{xx--xx+16}%
%\slugger{siap}{200X}{XX}{X}{xx--xx+11}%

\maketitle

\setcounter{page}{1}

%%%%%%%%%%%%%%%%%%%%%%%%%%%%%%%%%%%%%%%%%%%%%%%%%%%%%%%%%%%%%%%%%%%%%%%
\begin{abstract}
We study the stability and stabilizability of a continuous-time switched control system that consists of the time-invariant $n$-dimensional subsystems
\begin{equation*}
\dot{x}=A_ix+B_i(x)u\quad (x\in\mathbb{R}^n, t\in\mathbb{R}_+\textrm{ and }u\in\mathbb{R}^{m_i}),\qquad
\textrm{where }i\in\{1,\dotsc,N\}
\end{equation*}
and a switching signal $\sigma(\bcdot)\colon\mathbb{R}_+\rightarrow\{1,\dotsc,N\}$ which orchestrates switching between these subsystems above, where $A_i\in\mathbb{R}^{n\times n}, n\ge1, N\ge2, m_i\ge1$, and where $B_i(\bcdot)\colon\mathbb{R}^n\rightarrow\mathbb{R}^{n\times m_i}$ satisfies the
condition $\|B_i(x)\|\le\bbbeta\|x\|\;\forall x\in\mathbb{R}^n$. We show that, if $\{A_1,\dotsc,A_N\}$ generates a solvable Lie algebra over the field $\mathbbm{C}$ of complex numbers and there exists an element $\bbA$ in the convex hull $\mathrm{co}\{A_1,\dotsc,A_N\}$ in $\mathbb{R}^{n\times n}$ such that the affine system $\dot{x}=\bbA x$ is exponentially stable, then there is a constant $\bbdelta>0$ for which one can design ``sufficiently many" piecewise-constant switching signals $\sigma(t)$ so that the switching-control systems
\begin{equation*}
\dot{x}(t)=A_{\sigma(t)}x(t)+B_{\sigma(t)}(x(t))u(t),\quad x(0)\in\mathbb{R}^n\textrm{ and }
t\in\mathbb{R}_+
\end{equation*}
are globally exponentially stable, for any measurable external inputs $u(t)\in\mathbb{R}^{m_{\sigma(t)}}$ with $\|u(t)\|\le\bbdelta$.
\end{abstract}
%%%%%%%%%%%%%%%%%%%%%%%%%%%%%%%%%%%%%%%%%%%%%%%%%%%%%%%%%%%%%%%%%%%%%%%%
\begin{keywords}
Switching-control system, stabilizability, Lyapunov exponent, Liao-type exponent
\end{keywords}

\begin{AMS}
Primary: 93C15, 34H05 Secondary: 37N35, 93D20, 93D15, 93C73
\end{AMS}
\begin{DOI}
xx.xx/xxxxxx
\end{DOI}

\pagestyle{myheadings}

\thispagestyle{plain}

\markboth{\sc{X.~Dai}}{Criteria of Stability and Stabilizability}
%%%%%%%%%%%%%%%%%%%%%%%%%%%%%%%%%%%%%%%%%%%%%%%%%%%%%%%%%%%%%%%%%%%%%
%%%%%%%%%%%%%%%%%%%%%%%%%%%%%%%%%%%%%%%%%%%%%%%%%%%%%%%%%%%%%%%%%%%%%
\section{Introduction}\label{sec1}%
Let $\mathbb{R}^n$ be the real $n$-dimensional Euclidean space with an inner product $\langle\cdot,\cdot\rangle$ which gives rise to a vector norm $\|\cdot\|$ on it.
In this paper, we will focus on the stability and stabilizability issues for the continuous-time switched control system
\begin{align}
&\dot{x}(t)=A_{\sigma(t)}x(t)+B_{\sigma(t)}(x(t))u&& (x(0)\in\mathbb{R}^n\textrm{ and } t>0),\label{eq1.1}\\
\intertext{the subsystems of which are time-invariant continuous-time control systems}
&\dot{x}(t)=A_ix(t)+B_i(x(t))u&& (x(0)\in\mathbb{R}^n, t>0\textrm{ and }u\in\mathbb{R}^{m_i})\label{eq1.2}\\
\intertext{for $i\in\{1,\dotsc,N\}$, where $A_i\in\mathbb{R}^{n\times n}$ for all indices $i$, and $\sigma\colon(0,+\infty)\rightarrow\{1,\dotsc,N\}$
is piecewise constant and left-continuous having at most finite number of discontinuities on any finite interval of $\mathbb{R}_+:=(0,\infty)$. Here $n\ge1, N\ge2$ and $m_i\ge1$ all are integers. We assume that the matrix-valued functions $B_i(x)\in\mathbb{R}^{n\times m_i}$ are continuous with respect to $x\in\mathbb{R}^n$ satisfying the linear growth condition:}
&\|B_i(x)\|\le\bbbeta\|x\|&&\forall x\in\mathbb{R}^n\label{eq1.3}
\end{align}
for each $i\in\{1,\dotsc,N\}$, for some constant $\bbbeta>0$. However $B_i(x)$ does not need to be Lipschitz continuous, not even locally, with respect to $x\in\mathbb{R}^n$.

For any given $\sigma(t)$ and $u(t)$, the switched control system (\ref{eq1.1}) is said to be (globally) \textit{exponentially stable}, provided that for any given initial state $x_0\in\mathbb{R}^n$, its solutions $x(t)=\phi(t,x_0,\sigma,u)$ with
$x(0)=x_0$, that are absolutely continuous in $t$ but not necessarily unique because of the lack of the Lipschitz condition of the nonlinear terms $B_i(x)$, are such that
\begin{equation*}
\pmb{\chi}^+(x_0,\sigma,u):=\limsup_{t\to+\infty}\frac{1}{t}\log\|\phi(t,x_0,\sigma,u)\|<0
\end{equation*}
if $x(t)$ is forwardly complete, i.e., $x(t)$ may be extended on $\mathbb{R}_+$.

The stability issues of such switched systems include several interesting phenomena.
For example, even when all the subsystems (\ref{eq1.2}) are exponentially stable, (\ref{eq1.1}) may have divergent trajectories for certain switching signals $\sigma(t)$;
see, e.g. \cite{Branicky98, Margaliot06}. Another noticeable fact is that one may carefully switch between unstable subsystems to make (\ref{eq1.1}) exponentially stable; see, e.g. \cite{Utkin77, DBPLA}. As these examples suggest, the stability of switched systems depends not only upon the dynamics of each subsystems but also upon the properties of the switching signals. Therefore, the stability study of switched systems might be roughly divided into two kinds of problems~\cite{LA09}:
\begin{enumerate}
\item[$(\mathds{Q}1)$] one is the stability analysis of switched systems under given sets of admissible switching signals (all switching signals or switching signals obeying some constraints);

\item[$(\mathds{Q}2)$] the other is the synthesis of stabilizing switching signals for a given collection of dynamical/control systems.
\end{enumerate}

\noindent In the present paper, the question that we are concerned with is a complex of the above two kinds of problems. In our context, all subsystems (\ref{eq1.2}) are not necessarily exponentially stable themselves, but there exists an exponentially stable system in their convex hull $\mathrm{co}\{A_1,\dotsc,A_N\}$ in $\mathbb{R}^{n\times n}$. For this, we want to seek some kind of condition that may guarantee the existence of switching systems (\ref{eq1.1}) that are globally exponentially stable; and further to describe such stable switching signals.

To describe the switching signals that we are of interest to goal here, we need to introduce the classical symbolic space. Let
\begin{equation}\label{eq1.4}
\varSigma_{\!N}^+=\left\{\biota=(\biota_k)_{k=1}^{+\infty}\,|\,\biota_k\in\{1,\dotsc,N\}\, \forall k\ge1\right\}
\end{equation}
be the one-sided symbolic sequence space, which is compact and metrizable, endowed with the standard product topology. Then, there gives rise to the canonical symbolic dynamical system\,---\,the one-sided Markovian (forward) shift transformation:
\begin{equation}\label{eq1.5}
\theta\colon\varSigma_{\!N}^+\rightarrow\varSigma_{\!N}^+;\quad (\biota_k)_{k=1}^{+\infty}\mapsto(\biota_{k+1})_{k=1}^{+\infty}.
\end{equation}
Observe that the shift $\theta$ is continuous and surjective, not generally $1$-to-$1$. For any vector $\vec{\alpha}=(\alpha_1,\dotsc,\alpha_N)\in\mathbb{R}^N$ with $0<\alpha_i<1$ and $\alpha_1+\cdots+\alpha_N=1$, we can naturally define a probability measure/distribution, written as $\mathds{P}_{\vec{\alpha}}$, on $\varSigma_{\!N}^+$ in this way: for any cylinder sets of length $k\ge1$
\begin{equation}\label{eq1.6}
[i_1,\dotsc,i_{k}]:=\left\{\biota\in\varSigma_{\!N}^+\,|\,\biota_1=i_1,\dotsc,\biota_{k}=i_{k}\right\},
\end{equation}
we have
\begin{equation}\label{eq1.7}
\mathds{P}_{\vec{\alpha}}([i_1,\dotsc,i_{k}])=\alpha_{i_1}\cdots\alpha_{i_{k}},
\end{equation}
for every words $(i_1,\dotsc,i_{k})\in\{1,\dotsc,N\}^k$. Then, $(\varSigma_{\!N}^+,\theta,\mathds{P}_{\vec{\alpha}})$ is an ergodic dynamical system, see e.g. \cite{NS,Walters82}; that is to say, firstly $\mathds{P}_{\vec{\alpha}}(\theta^{-1}B)=\mathds{P}_{\vec{\alpha}}(B)$ for any Borel subsets $B$ of $\varSigma_{\!N}^+$, and secondly $\mathds{P}_{\vec{\alpha}}(B)=0$ or $1$ if the probability of the symmetric difference $\mathds{P}_{\vec{\alpha}}(B\triangle\theta^{-1}B)=0$. In addition, it is easy to see that every $\biota\in\varSigma_{\!N}^+$ is a density point of $\mathds{P}_{\vec{\alpha}}$; that is, every neighborhood of $\biota$ has positive measure of $\mathds{P}_{\vec{\alpha}}$.

Now, we make a convention for our convenience: To any $\biota\in\varSigma_{\!N}^+$, there corresponds a continuous-time piecewise constant left-continuous switching signal
\begin{equation}\label{eq1.8}
\sigma_{\biota}(\bcdot)\colon\mathbb{R}_+\rightarrow\{1,\dotsc,N\};\quad \sigma_{\biota}(t)=
\biota_k\;\textrm{whenever }k-1<t\le k\;\forall k\in\mathbb{N},
\end{equation}
where and in the sequel $\mathbb{N}=\{1,2,\dotsc\}$. Here we do not care the value $\sigma_{\biota}(0)$, since the initial value $x(0)=x_0$ is given previously.

Then, associated to the collection of subsystems given as in (\ref{eq1.2}), there generates a switching-control dynamical system
\begin{equation*}
\dot{x}=A_{\sigma_{\biota}(t)}x+B_{\sigma_{\biota}(t)}(x)u,\quad x(0)\in\mathbb{R}^n, t\in\mathbb{R}_+,\textrm{ and } u\in\mathbb{R}^{m_{\sigma_{\biota}(t)}},\leqno{(\ref{eq1.1})_{\biota}}
\end{equation*}
for any $\biota\in\varSigma_{\!N}^+$. Our problem concerned here can now be stated as follows:

\begin{que}\label{que1.1}%%%
Let $\vec{\alpha}=(\alpha_1,\dotsc,\alpha_N)\in\mathbb{R}^N$ be a positive probability vector; that is, $0<\alpha_k<1$ for $1\le k\le N$ and $\alpha_1+\cdots+\alpha_N=1$.
\begin{enumerate}
\item[$(\mathbbm{1})$] If the linear affine equation
\begin{equation*}
\dot{x}(t)=(\alpha_1A_1+\cdots+\alpha_NA_N)x(t),\quad x(0)\in\mathbb{R}^n\textrm{ and } t\in\mathbb{R}_+
\end{equation*}
is exponentially stable then, can one design switching signals $\sigma_{\biota}(t)$ with low switching frequency so that the corresponding systems $(\ref{eq1.1})_{\biota}$ steered by $\sigma_{\biota}(t)$,
are exponentially stable?

\item[$(\mathbbm{2})$] What condition can guarantee that $(\ref{eq1.1})_{\biota}$ are globally exponentially stable for $\mathds{P}_{\vec{\alpha}}$-almost sure $\biota\in\varSigma_{\!N}^+$?
\end{enumerate}
\end{que}

\noindent Question~\ref{que1.1}.($\mathbbm{1}$), corresponding to the above $(\mathds{Q}2)$, is to find switching signals $\sigma_\biota(t)$ to steer the switched systems $(\ref{eq1.1})_\biota$ globally exponentially stable. And Question~\ref{que1.1}.($\mathbbm{2}$), corresponding to the above $(\mathds{Q}1)$, is also one of the fundamental problems for the stability analysis of switched systems.
Here we will give a unified positive solution to this question under an additional algebraic condition\,---\,solvability.

Let $\mathcal{A}=\{A_1,\dotsc,A_N\}\subset\mathbb{R}^{n\times n}$ be arbitrarily given. Then under the Lie bracket $[A_i,A_j]=A_iA_j-A_jA_i$ for all $A_i,A_j\in\mathbb{R}^{n\times n}$, $\mathcal{A}$ generates a Lie algebra, write $\mathcal{A}_{\mathrm{LA}}$, over the field $\mathbbm{C}$ of complex numbers; that is the smallest Lie algebra containing $\mathcal{A}$ over the field $\mathbbm{C}$. Letting
\begin{equation*}
\mathcal{A}_{\mathrm{LA}}^{(0)}=\mathcal{A}_{\mathrm{LA}},\ \mathcal{A}_{\mathrm{LA}}^{(1)}=\left[\mathcal{A}_{\mathrm{LA}}^{(0)},\mathcal{A}_{\mathrm{LA}}^{(0)}\right],\ \ldots,\
\mathcal{A}_{\mathrm{LA}}^{(\ell)}=\left[\mathcal{A}_{\mathrm{LA}}^{(\ell-1)},\mathcal{A}_{\mathrm{LA}}^{(\ell-1)}\right], \dotsc,
\end{equation*}
$\mathcal{A}$ is called {\it solvable} over $\mathbbm{C}$, provided that $\mathcal{A}_{\mathrm{LA}}^{(\ell)}=\{0_{n\times n}\}$ for some integer $\ell\ge1$, where $0_{n\times n}$ denotes the zero matrix in $\mathbbm{C}^{n\times n}$. For example, abelian or nilpotent $\mathcal{A}$ implies solvable; see, e.g., \cite{Hum}.

It is well known that the Lie algebra $\mathcal{A}_{\mathrm{LA}}$ plays a very important role in the theory of reachability and controllability, for example, see \cite{Sontag, BP}.
When each of the subsystems $A_i$ is exponentially stable, some stability criteria, for arbitrary piecewise constant switching signals, of (\ref{eq1.1}) have been developed under the solvability condition and an additional higher regularity of $B_i(x)$, such as smoothness or analyticity, with respect to the state-variable $x\in\mathbb{R}^n$; for example, see~\cite{NB94, Gurvits, SNS, LHM99, AL01, ML06, HBF}.

Under this algebraic solvability condition, our main result obtained in this paper can be formulated as follows:

\begin{thm}\label{thm1.2}%%%
Assume that $\vec{\alpha}=(\alpha_1,\dotsc,\alpha_N)\in\mathbb{R}^N$ is a positive probability vector. Let $\mathcal{A}=\{A_1,\dotsc,A_N\}\subset\mathbb{R}^{n\times n}$ be solvable over the complex-number field $\mathbbm{C}$. If it holds that
\begin{equation*}
\dot{x}(t)=(\alpha_1A_1+\cdots+\alpha_NA_N)x(t),\; x(0)\in\mathbb{R}^n\textrm{ and } t\in\mathbb{R}_+,\textrm{ is exponentially stable}, \leqno{(\star)}
\end{equation*}
then there hold the following two statements.
\begin{description}
\item[(1)] For $\mathds{P}_{\vec{\alpha}}$-a.s.~$\biota\in\varSigma_{\!N}^+$, the linear switched systems
\begin{equation*}
\dot{x}(t)=A_{\sigma_{\biota}(t)}x(t),\quad x(0)\in\mathbb{R}^n\textrm{ and }t\in\mathbb{R}_+\leqno{(\ref{eq1.1})_{\biota,0}}
\end{equation*}
are exponentially stable.

\item[(2)] Moreover, if condition $(\ref{eq1.3})$ holds, then for any sufficiently small $\varepsilon>0$, one can find a Borel subset $W\subset\varSigma_{\!N}^+$
with $\mathds{P}_{\vec{\alpha}}(W)\ge1-\varepsilon$ and a constant $\bbdelta>0$ such that for each $\biota\in W$, the switching-control systems
\begin{equation*}
\dot{x}(t)=A_{\sigma_{\biota}(t)}x(t)+B_{\sigma_{\biota}(t)}(x(t))u(t),\quad x(0)\in\mathbb{R}^n\textrm{ and } t\in\mathbb{R}_+ \leqno{(\ref{eq1.1})_{\biota,\varepsilon}}
\end{equation*}
are globally exponentially stable, for any measurable external input $u(\bcdot)$ with $u(t)\in\mathbb{R}^{m_{\sigma_\biota(t)}}$ and $\|u(t)\|\le\bbdelta$.
\end{description}
\end{thm}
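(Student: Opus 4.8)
The plan is to reduce everything to a single scalar exponent estimate via Lie-algebraic triangularization. Since $\mathcal{A}=\{A_1,\dots,A_N\}$ is solvable over $\mathbbm{C}$, by Lie's theorem there is a common basis of $\mathbbm{C}^n$ in which every $A_i$ is upper triangular; equivalently there is $P\in\mathrm{GL}(n,\mathbbm{C})$ with $P^{-1}A_iP$ upper triangular for all $i$. Conjugating the whole system by $P$ changes norms only by a fixed factor $\|P\|\,\|P^{-1}\|$, hence does not affect the sign of $\pmb{\chi}^+$ or of the Lyapunov exponent; so I may assume each $A_i$ is upper triangular. For each $\biota\in\varSigma_{\!N}^+$ consider the fundamental matrix $\Phi_\biota(t)$ of $(\ref{eq1.1})_{\biota,0}$. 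Because the product of upper-triangular matrices is upper triangular, $\Phi_\biota(t)$ is upper triangular, and its diagonal entries are the scalar products $\prod$ of $e^{(t_k-t_{k-1})\lambda_{j}(A_{\biota_k})}$, where $\lambda_j(A_i)$ is the $j$-th diagonal (eigenvalue) of $A_i$. The diagonal of $\mathbbm{A}=\alpha_1A_1+\cdots+\alpha_NA_N$ is $\sum_i\alpha_i\lambda_j(A_i)$, and hypothesis $(\star)$ says $\sum_i\alpha_i\mathrm{Re}\,\lambda_j(A_i)<0$ for every $j=1,\dots,n$.

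The key step for (1) is to promote this convex-combination inequality to an almost-sure statement along symbolic orbits. Fix $j$ and put $f_j(\biota)=\mathrm{Re}\,\lambda_j(A_{\biota_1})$; this is a locally constant, hence continuous and integrable, function on $\varSigma_{\!N}^+$ with $\int f_j\,d\mathds{P}_{\vec\alpha}=\sum_i\alpha_i\mathrm{Re}\,\lambda_j(A_i)<0$. By Birkhoff's ergodic theorem applied to the ergodic system $(\varSigma_{\!N}^+,\theta,\mathds{P}_{\vec\alpha})$,
\begin{equation*}
\frac1k\sum_{r=0}^{k-1}f_j(\theta^r\biota)\To \sum_i\alpha_i\mathrm{Re}\,\lambda_j(A_i)<0\qquad\text{for }\mathds{P}_{\vec\alpha}\text{-a.e.~}\biota.
\end{equation*}
Since $\sigma_\biota$ switches only at integer times, $\log$ of the $j$-th diagonal entry of $\Phi_\biota(k)$ equals $\sum_{r=0}^{k-1}\mathrm{Re}\,\lambda_j(A_{\biota_{r+1}})=\sum_{r=0}^{k-1}f_j(\theta^r\biota)$, so each diagonal entry decays exponentially a.s.; controlling the off-diagonal entries (polynomial-in-$t$ growth factors from the triangular structure, dominated by the strictly negative diagonal rates) and the fractional-time piece $t\in(k-1,k]$ (bounded by $\max_i e^{\|A_i\|}$) gives $\pmb{\chi}^+(x_0,\sigma_\biota,0)<0$ for all $x_0$, which is (1). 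I would make the off-diagonal control quantitative: if $\lambda:=\max_j\sum_i\alpha_i\mathrm{Re}\,\lambda_j(A_i)<0$, then for a.e.~$\biota$ there is $C(\biota),K(\biota)$ with $\|\Phi_\biota(t)\|\le C(\biota)(1+t)^{K}e^{\lambda t/2}$, say, using that a product of $k$ triangular matrices picks up at most $\binom{k+n}{n}$-type combinatorial factors.

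For (2) the idea is a perturbation/Gronwall argument, but the a.s.~constants $C(\biota)$ are not uniform, which is why an $\varepsilon$ of measure must be thrown away. I would make the estimate in the previous paragraph uniform on a large set: by Egorov's theorem (or by choosing a sublevel set of the measurable function $\biota\mapsto\sup_{t\ge0}\|\Phi_\biota(t)\|e^{-\lambda t/2}$) there is $W\subset\varSigma_{\!N}^+$ with $\mathds{P}_{\vec\alpha}(W)\ge1-\varepsilon$ and a single constant $M=M(\varepsilon)$ so that $\|\Phi_\biota(t)\|\le M e^{\lambda t/2}$ for all $t\ge0$ and all $\biota\in W$. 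Then for $\biota\in W$ and any admissible $u$, the variation-of-constants formula gives for any solution $x(t)$ of $(\ref{eq1.1})_{\biota,\varepsilon}$
\begin{equation*}
\|x(t)\|\le Me^{\lambda t/2}\|x_0\|+\int_0^t Me^{\lambda(t-s)/2}\,\|B_{\sigma_\biota(s)}(x(s))\|\,\|u(s)\|\,ds\le Me^{\lambda t/2}\|x_0\|+M\bbbeta\bbdelta\!\int_0^t e^{\lambda(t-s)/2}\|x(s)\|\,ds,
\end{equation*}
using $(\ref{eq1.3})$ and $\|u(s)\|\le\bbdelta$. Multiplying by $e^{-\lambda t/2}$ and applying Gronwall to $y(t):=e^{-\lambda t/2}\|x(t)\|$ yields $\|x(t)\|\le M\|x_0\|e^{(\lambda/2+M\bbbeta\bbdelta)t}$; choosing $\bbdelta<-\lambda/(2M\bbbeta)=\bbdelta(\varepsilon)>0$ makes the exponent strictly negative, giving global exponential stability of $(\ref{eq1.1})_{\biota,\varepsilon}$ for every $\biota\in W$ and every admissible $u$ (and, in passing, forward completeness of all solutions, since the Gronwall bound precludes blow-up). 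The main obstacle is exactly the non-uniformity in step (1): Birkhoff gives decay with $\biota$-dependent constants, and one must convert this into a uniform bound on a set of measure $\ge1-\varepsilon$ before the Gronwall perturbation can absorb the nonlinear term with a single $\bbdelta$; handling the triangular off-diagonal terms carefully (so that the polynomial factors are genuinely dominated) is the technical heart.
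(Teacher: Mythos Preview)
Your proof of part~(1) is essentially the paper's own: Lie's theorem followed by Birkhoff's ergodic theorem applied to the diagonal entries, with the off-diagonal terms controlled by the fact that for a bounded upper-triangular $C(t)$ the maximal Lyapunov exponent equals $\max_j\limsup_{T\to\infty}\frac{1}{T}\int_0^T\mathfrak{Re}(c^{jj}(t))\,dt$ (the paper isolates this as its Theorem~2.1). Your description of the off-diagonal growth as ``polynomial-in-$t$'' is not quite accurate, but the conclusion is correct.

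Part~(2), however, has a real gap. Egorov (or a sublevel-set argument) gives you a set $W$ and a constant $M$ with $\|\Phi_\biota(t)\|\le Me^{\lambda t/2}$ for all $t\ge0$ and $\biota\in W$. But the variation-of-constants formula requires a bound on the \emph{transition matrix} $\Phi_\biota(t)\Phi_\biota(s)^{-1}$, and you have silently used $\|\Phi_\biota(t)\Phi_\biota(s)^{-1}\|\le Me^{\lambda(t-s)/2}$ inside your integral. That does not follow: by the cocycle property $\Phi_\biota(t)\Phi_\biota(s)^{-1}$ is (up to a bounded fractional-time correction) $\Phi_{\theta^k\biota}(t-s)$ with $k=\lfloor s\rfloor$, so you would need $\theta^k\biota\in W$ for \emph{all} $k\ge0$. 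For an ergodic shift the forward-invariant core $\bigcap_{k\ge0}\theta^{-k}W$ has measure $0$ or $1$; it will be $0$ unless the constant $C(\biota)=\sup_{t\ge0}\|\Phi_\biota(t)\|e^{-\lambda t/2}$ is essentially bounded, which it is not (e.g.\ in the scalar two-symbol case $a_1=-2$, $a_2=1$, $\alpha_1=\alpha_2=\tfrac12$, sequences starting with $k$ copies of $2$ have positive measure and force $C(\biota)\ge e^{5k/4}$). In short, you have only \emph{nonuniform} exponential stability on $W$, and the Gronwall perturbation argument needs \emph{uniform} exponential stability along the whole forward orbit. This is precisely the Perron obstruction the paper flags in Remark~4.4: negative Lyapunov exponent alone does not guarantee robustness under perturbations $\|f(x,t)\|\le L\|x\|$ with $\gamma=0$.

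The paper circumvents this by abandoning the static complex triangularization and instead using a \emph{real, time-dependent} triangularization via moving orthonormal frames (the Liao construction, Section~5.2). For the resulting real upper-triangular system it introduces a \emph{Liao-type exponent} (Section~5.1), which is a coarser block-wise quantity
\[
\bbchi_*^+=\limsup_{m\to\infty}\frac{1}{T_{\mathbf{k}_m}}\sum_{i=0}^{m-1}\max_{1\le j\le n}\int_{T_{\mathbf{k}_i}}^{T_{\mathbf{k}_{i+1}}}S_{\bbsigma(t)}^{jj}(t)\,dt
\]
that dominates the Lyapunov exponent and, crucially, \emph{does} imply the uniform transition-matrix control needed for a Gronwall-type perturbation (Theorem~5.3). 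A strengthened ergodic theorem (Theorem~5.19) is then used to show that on a set of measure $\ge1-\varepsilon$ one can choose a common block length so that $\bbchi_*^+<0$, which is exactly the uniformity your Egorov step cannot deliver.
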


Although the statement (1) of Theorem~\ref{thm1.2} is a direct consequence of the statement (2) in the case $\varepsilon=0$, the separated formulations are convenient for our arguments later.

\begin{remark}\label{rem1.3}%%%
Under the solvability property of $\mathcal{A}$, condition $(\star)$ is not only sufficient but also necessary for the statement $\mathrm{(1)}$ of Theorem~\ref{thm1.2};
see Theorem~\ref{thm4.2} below for the full details.
\end{remark}

\begin{remark}\label{rem1.4}%%%
Our arguments presented here imply that replacing the solvability of $\mathcal{A}$ by a more general condition that $\mathcal{A}$ admits a simultaneous triangularization, the statements of Theorem~\ref{thm1.2} and Remark~\ref{rem1.3} still hold.
\end{remark}

Determining whether or not a set $\mathcal{A}$ of matrices admits a simultaneous triangularization is itself a long studied problem; for example, see \cite{Laffey78, AI, RR}.

\begin{remark}\label{rem1.5}%%%
The statements of Theorem~\ref{thm1.2} are given in an almost sure sense. However, since an arbitrary point $\biota\in\varSigma_{\!N}^+$ is a density point of $\mathds{P}_{\vec{\alpha}}$, we can choose a sequence of stable switching sequence $\biota^{(\ell)}\in\varSigma_{\!N}^+$ such that $\biota^{(\ell)}\to\biota$ as $\ell\to+\infty$.
Secondly, it is a well known fact that the Hausdorff dimension of $\mathds{P}_{\vec{\alpha}}$ is equal to its entropy up to a constant multiplicator \cite{DHX08}. So, if the barycenter $(A_1+\dotsm+A_N)/N$ of the convex hull $\mathrm{co}\{A_1,\dotsc,A_N\}$ is stable, then the set
\begin{equation*}
\left\{\biota\in\varSigma_{\!N}^+\,|\,\dot{x}=A_{\sigma_\biota(t)}x\textrm{ is exponentially stable}\right\}
\end{equation*}
has the same Hausdorff dimension as the symbolic space $\varSigma_{\!N}^+$ under any standard metrics. Thirdly, from \cite{Dai10}, it follows that for every point $\biota$ in the basin of $\mathds{P}_{\vec{\alpha}}$, $\sigma_\biota(t)$ is a stable switching signal under the $(\star)$-condition.
\end{remark}

\begin{remark}\label{rem1.6}%%%
For any $A\in\mathbb{R}^{n\times n}$, it is stable $\mathit{iff}$ each of its eigenvalues has a negative real part. So, if there exists a stable $\bbA$ in the convex hull
$\mathrm{co}\{A_1,\dotsc,A_N\}$ in $\mathbb{R}^{n\times n}$, then one always can find a positive probability vector $\vec{\alpha}=(\alpha_1+\dotsm+\alpha_N)\in\mathbb{R}^N$ such that the convex combination $\alpha_1A_1+\cdots+\alpha_NA_N$ is stable.
\end{remark}

Let us see our Theorem~\ref{thm1.2} from the viewpoint of approximations of solutions of the linear affine equation
\begin{equation}\label{eq1.9}
\dot{x}(t)=(\alpha_1A_1+\cdots+\alpha_NA_N)x(t),\quad x(0)\in\mathbb{R}^n\textrm{ and } t\in\mathbb{R}_+
\end{equation}
by solutions of the differential inclusion
\begin{equation}\label{eq1.10}
\dot{y}(t)\in\{A_1,\dotsc, A_N\}y(t),\quad y(0)\in\mathbb{R}^n\textrm{ and } t\in\mathbb{R}_+.
\end{equation}
There is a stabilizing switching strategy, proposed in \cite{WPD}, as follows: If condition $(\star)$ of Theorem~\ref{thm1.2} holds, i.e., (\ref{eq1.9}) is stable, then from the continuous version of the Filippov-Wa\v{z}ewski relaxation theorem (cf.~\cite{FR, ISW}), it easily follows that to any initial state $y(0)=\xi$, one can find an associated switching signal $\sigma_{\!\xi}(t)$ such that the solution $y(t,\xi,\sigma_{\!\xi})$, with $y(0)=\xi$, of the switched system
\begin{equation*}
\dot{y}(t)=A_{\sigma_{\!\xi}(t)}y(t),\quad y(0)\in\mathbb{R}^n\textrm{ and } t\in\mathbb{R}_+\leqno{(\ref{eq1.10})_\xi}
\end{equation*}
converges to zero exponentially fast; that is, it holds that
\begin{equation*}
\bbchi^+(\xi,\sigma_\xi,0)=\limsup_{t\to+\infty}\frac{1}{t}\log\|y(t,\xi,\sigma_{\!\xi})\|<0;
\end{equation*}
this phenomenon is called {\it pointwise exponentially stabilizable} in \cite{Sun04,Sun06}. Yet one cannot claim the global stability of the above switching system $(\ref{eq1.10})_\xi$ steered by such switching signal $\sigma_{\!\xi}(t)$! That is to say, different initial states $\xi$ may define different switching signals $\sigma_\xi(t)$ suggested by \cite{WPD}. In \cite{Tok87} also see \cite{Sun04}, based on the Baker-Campbell-Haudorff formula J. Tokarzewski found a periodically switched signal $\bbsigma(t)$ which enables the individual switched dynamical system
\begin{equation*}
\dot{x}(t)=A_{\bbsigma(t)}x(t),\quad x(0)\in\mathbb{R}^n\textrm{ and }t\in\mathbb{R}_+\leqno{(\ref{eq1.10})_{\bbsigma}}
\end{equation*}
exponentially stable. However, the period of $\bbsigma(t)$ is sufficiently small and essentially the above switched dynamical system $(\ref{eq1.10})_{\bbsigma}$ defined by the periodic $\bbsigma(t)$ is a small perturbation of the stable system (\ref{eq1.9}) there. Moreover, it is well known that the set of all periodically switched signals in $\varSigma_{\!N}^+$ is countable, $0$-Hausdorff-dimensional, and has only $\mathds{P}_{\vec{\alpha}}$-measure zero.

Clearly, under the additional solvability or simultaneous triangularization conditions, our statement (1) of Theorem~\ref{thm1.2} presented in this paper is much more stronger than those mentioned above. For the nonlinear case, the relaxation theorem requires an additional Lipschitz condition for the nonlinear controlled part. There is a similar comparison if we additionally assume the Lipschitz continuity of $B_i(x)$ with respect to $x\in\mathbb{R}^n$ for all indices $1\le i\le N$.

This paper is organized as follows. The rest Sections~\ref{sec2}, \ref{sec3}, \ref{sec4}, and \ref{sec5} of the paper are all devoted to proving Theorem~\ref{thm1.2}.
In Section~\ref{sec2}, we will provide an exponential stability criterion for a time-dependent continuous-time linear equation whose coefficient matrix is upper-triangular and complex, see Theorem~\ref{thm2.1} below. In Section~\ref{sec3}, we will introduce a
continuous-time symbolic semiflow by suspension of the classical one-sided Markovian shift transformation $(\varSigma_{\!N}^+,\theta)$. Then, we can think of our switching dynamical systems as skew-product semiflows driven by the continuous-time symbolic semiflow. Borrowing the symbolic semiflow and Lie's theorem of triangularization, we can apply ergodic theory to proving Theorem~\ref{thm1.2}. We will prove the statements (1) of Theorem~\ref{thm1.2} and Remark~\ref{rem1.3} in Section~\ref{sec4} and the statement (2) of Theorem~\ref{thm1.2} in Section~\ref{sec5}. Since in our context $B_i(x)$ may lack the higher regularity in $x$, the classical Lyapunov stability theorems cannot work here for proving the statement (2) of Theorem~\ref{thm1.2}. So, we will employ in Section~\ref{sec5} a new tool\,---\,Liao-type exponents, first introduced in \cite{DHX10} and then perfected by the recent work \cite{DHX11}.
In fact, there we will prove a more general result Proposition~\ref{prop5.2}.

Finally we are going to conclude this introductory section with a question for further study.

\begin{que}
Let $\mathcal{A}=\{A_1,\dotsc,A_N\}\subset\mathbb{R}^{n\times n}$ and $\vec{\alpha}=(\alpha_1,\dotsc,\alpha_N)$ be a positive probability vector. If the $(\star)$-condition is satisfied then, do the statements of Theorem~\ref{thm1.2} still hold without the solvability condition of $\mathcal{A}$?
\end{que}

%%%%%%%%%%%%%%%%%%%%%%%%%%%%%%%%%%%%%%%%%%%%%%%%%%%%%%%%%%%%%%%%%%%%%
%%%%%%%%%%%%%%%%%%%%%%%%%%%%%%%%%%%%%%%%%%%%%%%%%%%%%%%%%%%%%%%%%%%%%

\section{Lyapunov exponents for linear differential equations}\label{sec2}%

In this section, we will consider a general linear differential equation
\begin{equation}\label{eq2.1}
\dot{x}(t)=C(t)x(t),\quad x(0)\in\mathbbm{C}^n\textrm{ and } t\in\mathbb{R}_+,
\end{equation}
where $C(t)=\left[c^{ij}(t)\right]_{1\le i,j\le n}\in\mathbbm{C}^{n\times n}$, with complex elements, is a Borel-measurable matrix-valued function of the time-variable $t$ on $\mathbb{R}_+$.
We assume that the matrix function $C(t)$ is bounded on $\mathbb{R}_+$; that is,
\begin{equation}\label{eq2.2}
{\sup}_{t\in\mathbb{R}_+}\|C(t)\|<\infty.
\end{equation}
It is well known that under condition (\ref{eq2.2}), for every initial state $x_0\in\mathbbm{C}^n$ there exists a unique solution of (\ref{eq2.1}), written as
$x(t)=\varPhi(t) x_0$, which is defined on $\mathbb{R}_+$ such that $x(0)=x_0$. This implies that $\varPhi(t)$ is the principal matrix of (\ref{eq2.1}); that is to say, \begin{equation*}
\varPhi(0)=\mathrm{Id}_{\mathbbm{C}^n}\textrm{ (the unit matrix)}\quad \textrm{and}\quad\dot{\varPhi}(t)=C(t)\varPhi(t)\textrm{ for \textsf{Leb}-a.s. }t\in\mathbb{R}_+.
\end{equation*}
Here \textsf{Leb} denotes the usual Lebesgue measure of $\mathbb{R}_+$. Then, the number
\begin{equation}\label{eq2.3}
\bbchi^+:=\limsup_{t\to+\infty}\frac{1}{t}\log\|\varPhi(t)\|\in\mathbb{R}\cup\{-\infty\}
\end{equation}
is called the (maximal) {\it Lyapunov exponent} of (\ref{eq2.1}). Clearly, for every nonzero initial state $x_0\in\mathbbm{C}^n$, its Lyapunov exponent
\begin{equation}\label{eq2.4}
\bbchi^+(x_0):=\limsup_{t\to+\infty}\frac{1}{t}\log\|\varPhi(t)x_0\|\le\bbchi^+.
\end{equation}
If $\bbchi^+<0$, then we call (\ref{eq2.1}) {\it exponentially stable}. According to the classical Lyapunov theory, see e.g.~\cite{Lya}, $\bbchi^+(x_0)$ can take at most $n$ distinct values for all $x_0\in\mathbbm{C}^n\setminus\{\mathbf{0}\}$. The basic question is: \textit{Does there hold $\max\{\bbchi^+(x_0)\,|\,x_0\in\mathbbm{C}^n\setminus\{\mathbf{0}\}\}=\bbchi^+$?} If $C(t)\equiv C(0)$ for all $t\in\mathbb{R}_+$ then the answer is positive. In general, we will see this is still true from Theorem~\ref{thm2.1} below.

For the system (\ref{eq2.1}), a very interesting fact is that, generally speaking, the stability of the time-invariant systems
\begin{equation*}
\dot{x}(t)=C(T)x(t),\quad x(0)\in\mathbbm{C}^n\textrm{ and } t\in\mathbb{R}_+,
\end{equation*}
for every $T>0$, cannot imply the stability of (\ref{eq2.1}); this point is well-illustrated by the
classical Marcus-Yamabe example. Consider the linear periodic differential equation
\begin{equation*}
\dot{x}(t)=\left[\begin{array}{ll}-2+2\cos^2t&1-\sin2t\\
-1-\sin2t&-2+2\sin^2t\end{array}\right]x(t)=A(t)x(t),\quad x(0)\in\mathbb{R}^2\textrm{ and } t\in\mathbb{R};
\end{equation*}
one checks that, for each $T\in\mathbb{R}$, $A(T)$ admits $\bblambda=-1$ as an eigenvalue of algebraic multiplicity $2$ and so $\dot{x}(t)=A(T)x(t)$ are stable for all $T$; however, the differential equation admits the exponentially unstable solution $x(t)=(-e^t\cos t, e^t\sin t)^\mathrm{T}$, where $^\mathrm{T}$ means the transpose operation of a square matrix or a column/row vector.

Particularly, we will be interested in the upper-triangular equations. For this, our result is the following, which is implicitly contained in the proof of the widely known Perron-Lyapunov regularity theorem~\cite{Lya}.

\begin{thm}\label{thm2.1}%%%
If $C(t)=\left[c^{ij}(t)\right]\in \mathbbm{C}^{n\times n}$ is upper-triangular, i.e., $c^{ij}(t)\equiv0$ for all $n\ge i>j\ge1$, for $t\in\mathbb{R}_+$, then it holds that
\begin{equation*}
\bbchi^+=\max\{\vartheta_i\,|\,i=1,\dotsc,n\},
\end{equation*}
where
\begin{equation*}
\vartheta_i:=\limsup_{T\to+\infty}\frac{1}{T}\int_0^T\mathfrak{Re}(c^{ii}(t))\,\mathrm{d}t\quad \forall i=1,\dotsc,n.
\end{equation*}
Thus, (\ref{eq2.1}) is exponentially stable if and only if $\vartheta_i<0$ for all $i=1,\dotsc,n$.
\end{thm}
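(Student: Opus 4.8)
The plan is to prove the theorem by directly integrating the triangular system from the bottom row upward, exploiting the fact that upper-triangularity decouples the last equation and makes each row an inhomogeneous scalar equation driven by the rows below it. Write $\varPhi(t)=[\varphi^{ij}(t)]$ for the principal matrix; since $C(t)$ is upper-triangular, so is $\varPhi(t)$, and the diagonal entries satisfy $\varphi^{ii}(t)=\exp\!\bigl(\int_0^t c^{ii}(s)\,\mathrm{d}s\bigr)$. Hence $\frac{1}{t}\log|\varphi^{ii}(t)|=\frac{1}{t}\int_0^t\mathfrak{Re}(c^{ii}(s))\,\mathrm{d}s$, whose $\limsup$ is exactly $\vartheta_i$. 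This immediately gives the lower bound $\bbchi^+\ge\max_i\vartheta_i$, because $\|\varPhi(t)\|$ dominates each $|\varphi^{ii}(t)|$ (the diagonal entries appear as entries of the matrix). The substantive content is the reverse inequality $\bbchi^+\le\max_i\vartheta_i$.

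For the upper bound I would argue by a downward induction on the row index, or equivalently on the size of the trailing block. Fix an arbitrary $\eps>0$ and set $\mu:=\max_i\vartheta_i$. For the last coordinate, $x_n(t)=\varphi^{nn}(t)x_n(0)$, so $|x_n(t)|\le K_\eps e^{(\mu+\eps)t}\|x_0\|$ for a suitable constant $K_\eps$ (using boundedness of $C(t)$ to control the $\limsup$ uniformly, i.e. $\int_0^t\mathfrak{Re}(c^{nn}(s))\,\mathrm{d}s\le(\vartheta_n+\eps)t+\mathrm{const}$). Now suppose inductively that $|x_j(t)|\le K_\eps' e^{(\mu+\eps)t}\|x_0\|$ has been established for all $j>i$. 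The $i$-th equation reads $\dot{x}_i(t)=c^{ii}(t)x_i(t)+\sum_{j>i}c^{ij}(t)x_j(t)$, whose variation-of-constants solution is
\begin{equation*}
x_i(t)=\varphi^{ii}(t)x_i(0)+\int_0^t\frac{\varphi^{ii}(t)}{\varphi^{ii}(s)}\sum_{j>i}c^{ij}(s)x_j(s)\,\mathrm{d}s.
\end{equation*}
Bounding $|\varphi^{ii}(t)/\varphi^{ii}(s)|\le K_\eps e^{(\vartheta_i+\eps)(t-s)}$ (again via the uniform estimate on partial integrals of $\mathfrak{Re}(c^{ii})$, valid for $s\le t$), using $\sup_t\|C(t)\|<\infty$ to bound the off-diagonal entries, and inserting the inductive bound on $x_j(s)$, one gets an integral dominated by $\mathrm{const}\cdot e^{(\mu+\eps)t}\int_0^t e^{-\eps s}\,\mathrm{d}s\|x_0\|\le\mathrm{const}\cdot e^{(\mu+\eps)t}\|x_0\|$ (if $\vartheta_i=\mu$) or an even smaller term otherwise. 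This closes the induction, yielding $\|\varPhi(t)x_0\|\le\widetilde{K}_\eps e^{(\mu+\eps)t}\|x_0\|$ for all $x_0$, hence $\bbchi^+\le\mu+\eps$; letting $\eps\downarrow0$ finishes the upper bound. The final equivalence statement is then immediate, since $\bbchi^+<0\iff\max_i\vartheta_i<0\iff\vartheta_i<0$ for every $i$.

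The main obstacle I anticipate is making the bound $\bigl|\varphi^{ii}(t)/\varphi^{ii}(s)\bigr|\le K_\eps e^{(\vartheta_i+\eps)(t-s)}$ genuinely uniform in the pair $(s,t)$ with $s\le t$, not merely for $s=0$. A $\limsup$ condition only controls $\frac1T\int_0^T$ for large $T$, so one needs $\int_s^t\mathfrak{Re}(c^{ii})\,\mathrm{d}r\le(\vartheta_i+\eps)(t-s)+C_\eps$ with $C_\eps$ independent of $s$; this does follow from combining the $\limsup$ estimate at the endpoints with the boundedness hypothesis (\ref{eq2.2}), which bounds $|\int_s^t\mathfrak{Re}(c^{ii})\,\mathrm{d}r|\le(\sup_r\|C(r)\|)(t-s)$ on any bounded window and thereby absorbs the finitely many "bad" short intervals into the additive constant. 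Care is also needed because solutions are only absolutely continuous and the ODEs hold a.e., but since $C(t)$ is merely bounded and measurable this is exactly the Carathéodory setting and causes no real difficulty. Everything else is routine Gronwall-type bookkeeping.
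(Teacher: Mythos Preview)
There is a genuine gap exactly where you anticipate one. The uniform bound $\int_s^t\mathfrak{Re}(c^{ii})\,\mathrm{d}r\le(\vartheta_i+\eps)(t-s)+C_\eps$ for all $0\le s\le t$ does \emph{not} follow from the hypotheses: the $\limsup$ controls $\int_0^T$ from above, but $\int_s^t=\int_0^t-\int_0^s$ requires a \emph{lower} bound on $\int_0^s$, i.e.\ control of the $\liminf$. Boundedness of $C(\cdot)$ only handles windows of bounded length and cannot absorb the discrepancy when both $s$ and $t-s$ are large. Concretely, with $c^{11}(t)=\tfrac{d}{dt}\bigl(t\sin\log(1+t)\bigr)$ (bounded, $\vartheta_1=1$, $\liminf$-average $-1$), choosing $s$ with $\sin\log(1+s)=-1$ and $t$ with $\sin\log(1+t)=+1$ (so $t\approx e^{\pi}s$) gives $\int_s^t\mathfrak{Re}(c^{11})=t+s$, which exceeds $(1+\eps)(t-s)+C_\eps$ for small $\eps$ and large $s$. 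So your inductive step, which always integrates from $0$, cannot close.

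The paper's argument differs from yours precisely here: instead of bounding the principal matrix, it builds a special fundamental system column by column, choosing for each off-diagonal entry the integration endpoint $\varDelta\in\{0,+\infty\}$ according to the sign of $\vartheta_2-\vartheta_1$ (the classical Perron--Lyapunov ``normal basis'' trick). This freedom to integrate from $+\infty$ is the missing idea in your approach. That said, the paper's intermediate estimate $\|\phi_{22}(t)\|/\|\phi_{11}(t)\|\le K_\eps e^{(\vartheta_2-\vartheta_1+\eps)t}$ has the \emph{same} defect as yours (it tacitly needs a lower bound on $\|\phi_{11}\|$). In fact, in the example above together with $c^{22}\equiv0$, $c^{12}\equiv1$, one checks that the $(1,2)$-entry of the principal matrix has exponent at least $1+e^{-\pi}>\max_i\vartheta_i$, while the $\varDelta=+\infty$ integral diverges; so the theorem as stated seems to require an additional regularity hypothesis such as the existence of the limits $\lim_{T\to\infty}\frac1T\int_0^T\mathfrak{Re}(c^{ii})$. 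This extra hypothesis is automatically satisfied in the paper's applications (Section~4), where the $\vartheta_i$ arise as Birkhoff ergodic averages, so the downstream results are unaffected; but at the stated level of generality neither your argument nor the paper's is complete.
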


Here $\mathfrak{Re}(c)$ means the real part of a complex number $c\in\mathbbm{C}$. This result shows that if the ``time-average" time-invariant system
\begin{equation*}
\dot{x}(t)=\widetilde{C}x(t),\quad x\in\mathbbm{C}^n\textrm{ and }t\in\mathbb{R}_+\qquad \textrm{where }\widetilde{C}=\limsup_{T\to+\infty}\frac{1}{T}\int_0^TC(t)\,\mathrm{d}t,
\end{equation*}
is exponentially stable, then so is (\ref{eq2.1}). Our condition $(\star)$, formulated in Theorem~\ref{thm1.2}, is essentially an other kind of average\,---\,``spatial average". This point will be well illustrated in Section~\ref{sec4}.

\begin{proof}
We only consider here the simple, but nontrivial, case of the order $n=2$; the general case can be similarly proved as this.

Now, we define the matrix-valued function $\varPhi_\varDelta(t)=[\phi_{ij}(t)]\in\mathbbm{C}^{2\times 2}$ for $t\in\mathbb{R}_+$ and $0\le\varDelta\le+\infty$ as follows:
\begin{align*}
&\phi_{11}(t)=\exp\left(\int_0^tc^{11}(\tau)\mathrm{d}\tau\right),&&\phi_{21}(t)=0,\\
\intertext{and}&\phi_{12}(t)={\int}_{\!\!\varDelta}^t c^{12}(s)\phi_{22}(s)\exp\left(\int_s^tc^{11}(\tau)\mathrm{d}\tau\right)\mathrm{d}s, &&\phi_{22}(t)=\exp\left(\int_0^tc^{22}(\tau)\mathrm{d}\tau\right).
\end{align*}
It is easily seen that for any constant $\varDelta\in[0,+\infty]$, the columns of $\varPhi_\varDelta(t)$ form a basis of solutions of (\ref{eq2.1}) in the case $n=2$. So, for any $\varDelta\ge0$, we have by (\ref{eq2.3})
$$\bbchi^+=\limsup_{t\to+\infty}\frac{1}{t}\log\|\varPhi_\varDelta(t)\|.$$
Therefore, we need to prove only that $\bbchi^+\le\max\{\vartheta_1,\vartheta_2\}$.
We next consider the two solutions of (\ref{eq2.1}) described in $\varPhi_\varDelta(t)$ and will show that
\begin{equation*}
\pmb{\phi}_1(t):=\left[\begin{matrix}\phi_{11}(t)\\ \phi_{21}(t)\end{matrix}\right]\quad\textrm{and}\quad\pmb{\phi}_2(t):=\left[\begin{matrix}\phi_{12}(t)\\ \phi_{22}(t)\end{matrix}\right]
\end{equation*}
satisfy, respectively, that
\begin{align*}
\bbchi^+(\pmb{\phi}_{1})&:=\limsup_{t\to+\infty}\frac{1}{t}\log\|\pmb{\phi}_{1}(t)\|=\vartheta_1\\
\intertext{and}\bbchi^+(\pmb{\phi}_{2})&:=\limsup_{t\to+\infty}\frac{1}{t}\log\|\pmb{\phi}_{2}(t)\|=\vartheta_2
\end{align*}
for some choice of the constant $\varDelta$.

In fact, from $\vartheta_i=\limsup_{t\to+\infty}\frac{1}{t}\int_0^t\mathfrak{Re}(c^{ii}(\tau))\,\mathrm{d}\tau$ for $i=1,2$, it easily follows that
\begin{equation*}
\vartheta_1=\limsup_{t\to+\infty}\frac{1}{t}\log\|\phi_{11}(t)\|\quad\textrm{and}\quad\vartheta_2=\limsup_{t\to+\infty}\frac{1}{t}\log\|\phi_{22}(t)\|.
\end{equation*}
Then for any constant $\varepsilon>0$, one can find a constant $K_{\varepsilon}>0$ such that
\begin{equation*}
\frac{\|\phi_{22}(t)\|}{\|\phi_{11}(t)\|}\le K_{\varepsilon}e^{(\vartheta_2-\vartheta_1+\varepsilon)t}\quad\forall t>0.
\end{equation*}
Now, we need only to estimate
$\bbchi^+(\phi_{12}):=\limsup_{t\to+\infty}\frac{1}{t}\log\|\phi_{12}(t)\|$.
For that, observe that for each $\varepsilon>0$, we have
\begin{align*}
\bbchi^+(\phi_{12})&\le\limsup_{t\to+\infty}\frac{1}{t}\left\{\log\|e^{\int_0^tc^{11}(\tau)\mathrm{d}\tau}\|+\log\|\int_{\!\!\varDelta}^tc^{12}(s)\phi_{22}(s)e^{\int_s^0c^{11}(\tau)\mathrm{d}\tau}\mathrm{d}s\|\right\}\\
&\le\limsup_{t\to+\infty}\frac{1}{t}\left\{\log\|\phi_{11}(t)\|+\log\|\int_{\!\!\varDelta}^tK\|\phi_{22}(s)\|/\|\phi_{11}(s)\|\mathrm{d}s\|\right\}\\
&\le\vartheta_1+\limsup_{t\to+\infty}\frac{1}{t}\log\|\int_{\!\!\varDelta}^tKK_{\varepsilon}e^{(\vartheta_2-\vartheta_1+\varepsilon)s}\mathrm{d}s\|.
\end{align*}
Here we exploit the fact that $\|c^{12}(t)\|\le K$ for some constant $K$ by condition (\ref{eq2.2}). We set $\varDelta=0$ if $\vartheta_2-\vartheta_1\ge 0$, and $\varDelta=+\infty$ if $\vartheta_2-\vartheta_1<0$. Then, for every $\varepsilon>0$ so small that $\vartheta_2-\vartheta_1+\varepsilon<0$ if $\vartheta_2-\vartheta_1<0$, we can obtain that
\begin{align*}
&\bbchi^+(\phi_{12})\le\vartheta_1+\limsup_{t\to+\infty}\frac{1}{t}\log\frac{KK_{\varepsilon}[e^{(\vartheta_2-\vartheta_1+\varepsilon)t}-1]}{\vartheta_2-\vartheta_1+\varepsilon}
&&\textrm{if }\vartheta_2-\vartheta_1\ge0,\\
\intertext{and}
&\bbchi^+(\phi_{12})\le\vartheta_1+\limsup_{t\to+\infty}\frac{1}{t}\log\frac{KK_{\varepsilon}e^{(\vartheta_2-\vartheta_1+\varepsilon)t}}{\vartheta_2-\vartheta_1+\varepsilon}
&&\textrm{if }\vartheta_2-\vartheta_1<0.
\end{align*}
Thus, no matter what $\varDelta$ is equal to, it always holds that
\begin{equation*}
\bbchi^+(\phi_{12})\le\vartheta_1+(\vartheta_2-\vartheta_1+\varepsilon)=\vartheta_2+\varepsilon.
\end{equation*}
Since $\varepsilon>0$ is arbitrary, there follows that $\bbchi^+(\phi_{12})\le\vartheta_2$.
So,
$$\bbchi^+(\pmb{\phi}_1)=\vartheta_1\quad \textrm{and}\quad \bbchi^+(\pmb{\phi}_2)=\vartheta_2,$$
as desired. Thus, $\bbchi^+\le\max\{\vartheta_1,\vartheta_2\}$ and then $\bbchi^+=\max\{\vartheta_1,\vartheta_2\}$.
This completes the proof of the theorem.
\end{proof}

It is interesting to note that based on Theorem~\ref{thm2.1}, the Marcus-Yamabe phenomenon still can occur even if in the upper-triangular case. This point may be illustrated by the following linear upper-triangular equation:
\begin{equation*}
\dot{x}(t)=\left[\begin{matrix}-\frac{1}{1+t}& a^{12}(t)\\
0&-\frac{1}{1+t}\end{matrix}\right]x(t)=A(t)x(t),\quad t\in\mathbb{R}_+\textrm{ and } x(0)\in\mathbb{R}^2;
\end{equation*}
one easily checks that, for each $T\in\mathbb{R}_+$, $A(T)$ admits $\bblambda=-\frac{1}{1+T}<0$ as an eigenvalue of algebraic multiplicity $2$; however, this differential equation has the Lyapunov exponent $\bbchi^+=0$ from Theorem~\ref{thm2.1}, not less than $0$; so not exponentially stable.

%%%%%%%%%%%%%%%%%%%%%%%%%%%%%%%%%%%%%%%%%%%%%%%%%%%%%%%%%%%%%%%%%%%%
%%%%%%%%%%%%%%%%%%%%%%%%%%%%%%%%%%%%%%%%%%%%%%%%%%%%%%%%%%%%%%%%%%%%
\section{Symbolic semiflow}\label{sec3}%%%
In this section, we will embed, in the manner of suspension, the one-sided Markov shift transformation $\theta\colon\varSigma_{\!N}^+\rightarrow\varSigma_{\!N}^+$ defined as (\ref{eq1.5}) in Section~\ref{sec1} into a continuous-time semiflow. The latter is important for us, which plays the role of the driving system for the switching dynamical system that we are interested to here.
%%%%%%%%%%%%%%%%%%%%%%%%%%%%%%%%%%%%%%%%%%%%%%%%%%%%%%%%%%%%%%%%%%%%%%%%%%%%%
\subsection{Suspension of $(\varSigma_{\!N}^+,\theta,\mathds{P}_{\vec{\alpha}})$}\label{sec3.1}%
First, we can define simply a continuous-time semiflow on the Cartesian product topological space $\varSigma_{\!N}^+\times\mathbb{R}_+$ as follows:
\begin{equation*}
\psi\colon[0,\infty)\times(\varSigma_{\!N}^+\times\mathbb{R}_+)\rightarrow\varSigma_{\!N}^+\times\mathbb{R}_+;\quad (t,(\biota,\tau))\mapsto(\biota, \tau+t).
\end{equation*}
We note here that $\mathbb{R}_+=(0,\infty)$. Next, we introduce an equivalent relationship, written as $``\sim"$, for $\varSigma_{\!N}^+\times\mathbb{R}_+$ as follows:
\begin{equation*}
(\xi,t)\sim(\biota, t^\prime)\textrm{ for }t, t^\prime>0 \textrm{ and }\xi, \biota\in\varSigma_{\!N}^+\Leftrightarrow t-t^\prime\in\mathbb{Z}\textrm{ and } \begin{cases}
\xi=\theta^k(\biota) \textrm{ if }k=t^\prime-t\ge0,\\
\biota=\theta^k(\xi) \textrm{ if }k=t-t^\prime\ge0.
\end{cases}
\end{equation*}
We then define the suspension space as the quotient space
\begin{equation}\label{eq3.1}
\mathrm{S}(\varSigma_{\!N}^+)=\varSigma_{\!N}^+\times\mathbb{R}_+/\sim.
\end{equation}
Clearly, there is no loss of generality in assuming $\mathrm{S}(\varSigma_{\!N}^+)=\{(\biota,t)\,|\,\biota\in\varSigma_{\!N}^+\textrm{ and }0\le t\le 1\}$ with $(\biota,1)$ and $(\theta(\biota),0)$ identified, namely
\begin{equation*}
\mathrm{S}(\varSigma_{\!N}^+)=\varSigma_{\!N}^+\times\mathbb{I}/\sim  \leqno{(\ref{eq3.1})^\prime}
\end{equation*}
where $\mathbb{I}=[0,1]$ is the unit interval. Write the elements of $\mathrm{S}(\varSigma_{\!N}^+)$ or $\sim$-classes of $\varSigma_{\!N}^+\times\mathbb{R}_+$ as $[\biota,\tau]$ from here on. So, $\mathrm{S}(\varSigma_{\!N}^+)$ is a compact metrizable topological space, which is indeed homeomorphic to the compact product space $\varSigma_{\!N}^+\times\mathds{T}^1$, where $\mathds{T}^1$ is the unit circle. Observe that $\psi$ can further induce a continuous-time semiflow, called the {\it symbolic semiflow}, on $\mathrm{S}(\varSigma_{\!N}^+)$ as follows:
\begin{equation}\label{eq3.2}
\varTheta\colon[0,\infty)\times\mathrm{S}(\varSigma_{\!N}^+)\rightarrow\mathrm{S}(\varSigma_{\!N}^+);\quad (t,[\biota,\tau])\mapsto[\biota,\tau+t].
\end{equation}
For example, $\varTheta(\frac{3}{2}, [\biota,\frac{1}{2}])=[\biota,2]=[\theta^2(\biota),0]$ for any $\biota\in\varSigma_{\!N}^+$.

If we identify $\varSigma_{\!N}^+$ with the $0$-section $[\varSigma_{\!N}^+,0]\subset\mathrm{S}(\varSigma_{\!N}^+)$, then $\theta$ may be thought of as the $1$-time map $\varTheta(1,\cdot)$ restricted to $[\varSigma_{\!N}^+,0]$, noting that $[\varSigma_{\!N}^+,t+k]=[\varSigma_{\!N}^+,t]$ for any $t\in\mathbb{R}_+$ and $k\in\mathbb{N}$ since $\theta(\varSigma_{\!N}^+)=\varSigma_{\!N}^+$.

Given any positive probability vector $\vec{\alpha}=(\alpha_1,\dotsc,\alpha_N)\in\mathbb{R}^N$ and let $\textsf{Leb}$ denote the standard Lebesgue measure on the unit interval $\mathbb{I}$ with $\textsf{Leb}(\mathbb{I})=1$. Then, we naturally define the product probability on the suspension space $\mathrm{S}(\varSigma_{\!N}^+)$
\begin{equation}\label{eq3.3}
\mathcal{P}_{\vec{\alpha}}=\mathds{P}_{\vec{\alpha}}\otimes\textsf{Leb}
\end{equation}
in this way: for any Borel sets $B_1\subset\varSigma_{\!N}^+$ and $B_2\subset \mathbb{I}$, we have
\begin{equation*}
\mathcal{P}_{\vec{\alpha}}([B_1,B_2])=\mathds{P}_{\vec{\alpha}}(B_1)\cdot\textsf{Leb}(B_2),\quad \textrm{where }[B_1,B_2]:=\{[\biota,\tau]\colon\biota\in B_1,\; \tau\in B_2\}.
\end{equation*}
Equivalently, we have
\begin{equation*}
\int_{\mathrm{S}(\varSigma_{\!N}^+)}\varphi([\biota,\tau])\,\mathrm{d}\mathcal{P}_{\vec{\alpha}}([\biota,\tau])=\int_{\varSigma_{\!N}^+}\left\{\int_\mathbb{I}\varphi([\biota,\tau])\,\mathrm{d}\textsf{Leb}(\tau)\right\}\mathrm{d}\mathds{P}_{\vec{\alpha}}(\biota)
\end{equation*}
for any $\varphi\in \mathrm{C}^0(\mathrm{S}(\varSigma_{\!N}^+),\mathbb{R})$ (the space of continuous functions on $\mathrm{S}(\varSigma_{\!N}^+)$), where $\mathds{P}_{\vec{\alpha}}$ is defined in the same manner as in (\ref{eq1.7}) and $\mathrm{d}\textsf{Leb}(\tau)=\mathrm{d}\tau$.

\begin{lem}\label{lem3.1}%%%
For any probability vector $\vec{\alpha}=(\alpha_1,\dotsc,\alpha_N)>0$, the corresponding $\mathcal{P}_{\vec{\alpha}}$ is invariant and ergodic for the symbolic semiflow $\varTheta\colon[0,\infty)\times\mathrm{S}(\varSigma_{\!N}^+)\rightarrow\mathrm{S}(\varSigma_{\!N}^+)$; that is to say, $(\mathrm{S}(\varSigma_{\!N}^+),\varTheta,\mathcal{P}_{\vec{\alpha}})$ is an ergodic semiflow.
\end{lem}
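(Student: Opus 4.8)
The plan is to view $(\mathrm{S}(\varSigma_{\!N}^+),\varTheta,\mathcal{P}_{\vec{\alpha}})$ as the special (suspension) semiflow built over the base system $(\varSigma_{\!N}^+,\theta,\mathds{P}_{\vec{\alpha}})$ with constant roof function $\equiv1$, and to derive both invariance and ergodicity from the corresponding facts about the base together with elementary properties of the Lebesgue measure on the fibre. I work throughout in the concrete model $(\ref{eq3.1})^\prime$, identifying $\mathrm{S}(\varSigma_{\!N}^+)$ with $\varSigma_{\!N}^+\times[0,1)$ as measure spaces (the boundary identification $(\biota,1)\sim(\theta(\biota),0)$ being $\mathcal{P}_{\vec{\alpha}}$-null), so that $\mathcal{P}_{\vec{\alpha}}=\mathds{P}_{\vec{\alpha}}\otimes\textsf{Leb}$ literally and $\varTheta(t,[\biota,\tau])=[\theta^{k}(\biota),\tau+t-k]$ with $k=\lfloor\tau+t\rfloor$.

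For invariance, since $\varTheta$ is a semiflow the relation $\varTheta(s+t,\cdot)=\varTheta(s,\varTheta(t,\cdot))$ lets me reduce to checking $\mathcal{P}_{\vec{\alpha}}(\varTheta(t,\cdot)^{-1}B)=\mathcal{P}_{\vec{\alpha}}(B)$ for $t\in[0,1]$ only. For such $t$ and any $\varphi\in\mathrm{C}^0(\mathrm{S}(\varSigma_{\!N}^+),\mathbb{R})$ I would split the fibre integral $\int_0^1\varphi(\varTheta(t,[\biota,\tau]))\,\mathrm{d}\tau$ at $\tau=1-t$: on $\{\tau<1-t\}$ one has $\varTheta(t,[\biota,\tau])=[\biota,\tau+t]$, and on $\{\tau\ge1-t\}$ one has $\varTheta(t,[\biota,\tau])=[\theta(\biota),\tau+t-1]$. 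After the substitution $s=\tau+t$ (resp.\ $s=\tau+t-1$) in each piece, using translation invariance of $\textsf{Leb}$, and then integrating in $\biota$ and invoking $\theta$-invariance of $\mathds{P}_{\vec{\alpha}}$ on the second piece, the two pieces recombine to $\int_{\varSigma_{\!N}^+}\int_0^1\varphi([\biota,s])\,\mathrm{d}s\,\mathrm{d}\mathds{P}_{\vec{\alpha}}(\biota)=\int\varphi\,\mathrm{d}\mathcal{P}_{\vec{\alpha}}$. This is a short routine computation.

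For ergodicity, let $B\subset\mathrm{S}(\varSigma_{\!N}^+)$ be Borel and $\varTheta$-invariant, meaning $\mathcal{P}_{\vec{\alpha}}(\varTheta(t,\cdot)^{-1}B\,\triangle\,B)=0$ for every $t\ge0$; I must show $\mathcal{P}_{\vec{\alpha}}(B)\in\{0,1\}$. Let $B_\tau=\{\biota\in\varSigma_{\!N}^+:[\biota,\tau]\in B\}$ be the fibre sections, which exist and are measurable for $\textsf{Leb}$-a.e.\ $\tau$ by Fubini. Applying the invariance hypothesis with $t=1$, where $\varTheta(1,\cdot)$ is simply $(\biota,\tau)\mapsto(\theta(\biota),\tau)$, yields $\mathds{P}_{\vec{\alpha}}(\theta^{-1}B_\tau\,\triangle\,B_\tau)=0$ for a.e.\ $\tau$; hence, by ergodicity of $(\varSigma_{\!N}^+,\theta,\mathds{P}_{\vec{\alpha}})$ recalled in Section~\ref{sec1}, $\mathds{P}_{\vec{\alpha}}(B_\tau)\in\{0,1\}$ for a.e.\ $\tau$. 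Set $E=\{\tau\in[0,1):\mathds{P}_{\vec{\alpha}}(B_\tau)=1\}$, so that $\mathcal{P}_{\vec{\alpha}}(B)=\int_0^1\mathds{P}_{\vec{\alpha}}(B_\tau)\,\mathrm{d}\tau=\textsf{Leb}(E)$. Next apply the invariance hypothesis with a \emph{single} irrational $t\in(0,1)$: computing the fibre sections of $\varTheta(t,\cdot)^{-1}B$ exactly as in the invariance step, and using $\mathds{P}_{\vec{\alpha}}(\theta^{-1}B_s)=\mathds{P}_{\vec{\alpha}}(B_s)$, one finds that, for a.e.\ $\tau$, the point $\tau$ lies in $E$ if and only if $\tau+t\ (\mathrm{mod}\ 1)$ does; that is, $E\subset[0,1)$ is invariant $\textsf{Leb}$-mod $0$ under the circle rotation by $t$. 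Since $t$ is irrational this rotation is ergodic for $\textsf{Leb}$, so $\textsf{Leb}(E)\in\{0,1\}$, and therefore $\mathcal{P}_{\vec{\alpha}}(B)\in\{0,1\}$.

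The substance here is entirely classical; the only points I expect to require careful writing are the measure-theoretic bookkeeping around the fibre sections — justifying via Fubini that $\tau\mapsto B_\tau$ is measurable and that each $\mathrm{mod}\ 0$ identity in the hypothesis descends, for a.e.\ $\tau$, to a $\mathrm{mod}\ 0$ identity between the corresponding sections — together with tracking the wrap-around term produced by the identification $(\biota,1)\sim(\theta(\biota),0)$ whenever a fibre coordinate crosses $1$. Since only finitely many values of $t$ are used (namely $t=1$ and one irrational $t$), intersecting the relevant full-measure sets causes no difficulty.
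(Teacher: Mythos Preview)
Your proof is correct. The invariance computation is essentially the same as the paper's: both split the fibre integral according to whether the flow has crossed the roof, substitute, and use $\theta$-invariance of $\mathds{P}_{\vec{\alpha}}$ on the wrapped piece.

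For ergodicity, however, you take a genuinely different route. The paper argues abstractly: it invokes a one-to-one correspondence (citing Lalley, with Abramov and Gurevich in the background) between $\varTheta$-invariant measures on the suspension and $\theta$-invariant measures on the base, then uses the characterization of ergodic measures as extreme points of the simplex of invariant measures. If $\mathcal{P}_{\vec{\alpha}}$ were not ergodic it would decompose nontrivially, and pushing this decomposition down via the correspondence would contradict the ergodicity of $\mathds{P}_{\vec{\alpha}}$. Your argument is more hands-on and entirely self-contained: you test invariance of $B$ only at $t=1$ (to get that a.e.\ fibre section is $\theta$-invariant, hence of $\mathds{P}_{\vec{\alpha}}$-measure $0$ or $1$) and at a single irrational $t$ (to get that the set $E$ of ``full'' fibres is invariant mod~$0$ under an irrational rotation, hence of Lebesgue measure $0$ or $1$). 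This avoids the external measure-correspondence machinery and the detour through H\"older function spaces, at the cost of being specific to the constant-roof situation; the paper's approach would generalise more readily to suspensions with variable ceiling functions. Both are standard arguments for suspension flows, and yours is arguably the more transparent one here.
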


We notice here that the statement of Lemma~\ref{lem3.1} still holds if $\vec{\alpha}$ is only a probability vector, not necessarily positive, by ignoring the letters $k$ whenever $\alpha_k=0$.

\begin{proof}
We notice that the $\varTheta$-invariance of $\mathcal{P}_{\vec{\alpha}}$ is standard. In fact, for any $\varphi$ in $\mathrm{C}^0(\mathrm{S}(\varSigma_{\!N}^+),\mathbb{R})$ and $0<t<1$, we have
\begin{align*}
\int_{\mathrm{S}(\varSigma_{\!N}^+)}\varphi(\varTheta(t,\bcdot))\,\mathrm{d}\mathcal{P}_{\vec{\alpha}}&=\int_{\varSigma_{\!N}^+}\left\{\int_0^1\varphi([\biota,\tau+t])\,\mathrm{d}\textsf{Leb}(\tau)\right\}\mathrm{d}\mathds{P}_{\vec{\alpha}}(\biota)\\
&=\int_{\varSigma_{\!N}^+}\left\{\int_t^{1+t}\varphi([\biota,\tau])\,\mathrm{d}\textsf{Leb}(\tau)\right\}\mathrm{d}\mathds{P}_{\vec{\alpha}}(\biota)\\
&=\int_{\varSigma_{\!N}^+}\left\{\int_t^1\varphi([\biota,\bcdot])\,\mathrm{d}\tau+\int_0^t\varphi([\theta(\biota),\bcdot])\,\mathrm{d}\tau\right\}\mathrm{d}\mathds{P}_{\vec{\alpha}}(\biota)\\
&=\int_{\varSigma_{\!N}^+}\int_t^1\varphi([\biota,\tau])\,\mathrm{d}\tau\mathrm{d}\mathds{P}_{\vec{\alpha}}(\biota)
+\int_0^t\int_{\varSigma_{\!N}^+}\varphi([\theta(\biota),\tau])\,\mathrm{d}\mathds{P}_{\vec{\alpha}}(\biota)\mathrm{d}\tau\\
&=\int_{\varSigma_{\!N}^+}\int_t^1\varphi([\biota,\tau])\,\mathrm{d}\tau\mathrm{d}\mathds{P}_{\vec{\alpha}}(\biota)
+\int_0^t\int_{\varSigma_{\!N}^+}\varphi([\biota,\tau])\,\mathrm{d}\mathds{P}_{\vec{\alpha}}(\biota)\mathrm{d}\tau\\
&=\int_{\mathrm{S}(\varSigma_{\!N}^+)}\varphi([\biota,\tau])\,\mathrm{d}\mathcal{P}_{\vec{\alpha}}([\biota,\tau]);
\end{align*}
similarly, for any $t\in\mathbb{N}$ we have
\begin{equation*}
\int_{\mathrm{S}(\varSigma_{\!N}^+)}\varphi(\varTheta(t,[\biota,\tau]))\,\mathrm{d}\mathcal{P}_{\vec{\alpha}}([\biota,\tau])=\int_{\mathrm{S}(\varSigma_{\!N}^+)}\varphi([\biota,\tau])\,\mathrm{d}\mathcal{P}_{\vec{\alpha}}([\biota,\tau]).
\end{equation*}
This implies that $\mathcal{P}_{\vec{\alpha}}$ is $\varTheta$-invariant.
So, it remains to prove only the $\varTheta$-ergodicity of $\mathcal{P}_{\vec{\alpha}}$. For that, given any $0<\rho<1$, let
\begin{align*}
&|\varphi|_\rho=\sup_{\ell\in\mathbb{N}}\left\{\frac{\sup\{|\varphi(\biota)-\varphi(\biota^\prime)|\colon\biota_j=\biota_j^\prime,\ 1\le j\le \ell\}}{\rho^\ell}\right\}\quad \forall\varphi\in\mathrm{C}^0(\varSigma_{\!N}^+,\mathbb{R})\\
\intertext{and}&\mathscr{H}_{\!\rho}(\varSigma_{\!N}^+)=\left\{\varphi\in\mathrm{C}^0(\varSigma_{\!N}^+,\mathbb{R})\colon|\varphi|_\rho<\infty\right\}.
\end{align*}
Elements of $\mathscr{H}_{\!\rho}(\varSigma_{\!N}^+)$ are referred to as H\"{o}lder continuous functions. Let $\mathcal{L}_\mathbb{R}^\infty(\mathrm{S}(\varSigma_{\!N}^+))$ be the set of bounded, Borel measurable, real-valued functions on $\mathrm{S}(\varSigma_{\!N}^+)$. Define $\mathscr{F}_{\!\rho}(\mathrm{S}(\varSigma_{\!N}^+))$ as the set of the functions $G\in\mathcal{L}_\mathbb{R}^\infty(\mathrm{S}(\varSigma_{\!N}^+))$ satisfying
\begin{equation*}
g(\biota)=\int_0^1G([\biota,\tau])\,\mathrm{d}\textsf{Leb}(\tau)\in\mathscr{H}_{\!\rho}(\varSigma_{\!N}^+).
\end{equation*}
Let $\mathcal{M}_{\mathrm{inv}}(\mathrm{S}(\varSigma_{\!N}^+),\varTheta)$ and $\mathcal{M}_{\mathrm{inv}}(\varSigma_{\!N}^+,\theta)$ be the sets of $\varTheta$-invariant Borel probability measures on $\mathrm{S}(\varSigma_{\!N}^+)$ and $\theta$-invariant Borel probability measures on $\varSigma_{\!N}^+$, respectively. They are both nonempty compact convex sets from ergodic theory, see \cite{NS, Walters82}. Then from \cite[Section~5]{Lalley} (also cf.~\cite{Abramov, Gurevich}), it follows that there is a $1$-to-$1$ correspondence between $\mathcal{M}_{\mathrm{inv}}(\mathrm{S}(\varSigma_{\!N}^+),\varTheta)$ and $\mathcal{M}_{\mathrm{inv}}(\varSigma_{\!N}^+,\theta)$, given by
\begin{equation*}
\mathcal{M}_{\mathrm{inv}}(\mathrm{S}(\varSigma_{\!N}^+),\varTheta)\ni\bar{\mu}\overset{\pi_{\!*}}{\longleftrightarrow}\mu\in\mathcal{M}_{\mathrm{inv}}(\varSigma_{\!N}^+,\theta)
\end{equation*}
if and only if
\begin{equation*}
\int_{\mathrm{S}(\varSigma_{\!N}^+)}G\, \mathrm{d}\bar{\mu}=\int_{\varSigma_{\!N}^+}\left\{\int_0^1G([\biota,\tau])\,\mathrm{d}\tau\right\}\mathrm{d}\mu(\biota)\quad\forall G\in\mathscr{F}_{\!\rho}(\mathrm{S}(\varSigma_{\!N}^+)).
\end{equation*}

Now, if $\mathcal{P}_{\vec{\alpha}}$ would not be $\varTheta$-ergodic, then \cite[Theorem~6.10.(iii)]{Walters82} follows that there are $0<\lambda<1$ and $\bar{\mu}_1,\bar{\mu}_2\in\mathcal{M}_{\mathrm{inv}}(\mathrm{S}(\varSigma_{\!N}^+),\varTheta)$ such that
\begin{equation*}
\mathcal{P}_{\vec{\alpha}}=\lambda\bar{\mu}_1+(1-\lambda)\bar{\mu}_2\quad \textrm{and}\quad \bar{\mu}_1\not=\bar{\mu}_2.
\end{equation*}
Let $\bar{\mu}_1\overset{\pi_{\!*}}{\longleftrightarrow}\mu_1\in\mathcal{M}_{\mathrm{inv}}(\varSigma_{\!N}^+,\theta)$ and $\bar{\mu}_2\overset{\pi_{\!*}}{\longleftrightarrow}\mu_2\in\mathcal{M}_{\mathrm{inv}}(\varSigma_{\!N}^+,\theta)$. Then, by the definitions of $\pi_*$ and $\mathds{P}_{\vec{\alpha}}$, we have
\begin{equation*}
\mu_1\not=\mu_2,\quad \mathcal{P}_{\vec{\alpha}}\overset{\pi_{\!*}}{\longleftrightarrow}\lambda\mu_1+(1-\lambda)\mu_2\quad \textrm{and}\quad\mathcal{P}_{\vec{\alpha}}\overset{\pi_{\!*}}{\longleftrightarrow}\mathds{P}_{\vec{\alpha}}.
\end{equation*}
Thus, $\mathds{P}_{\vec{\alpha}}=\lambda\mu_1+(1-\lambda)\mu_2$ and moreover $\mathds{P}_{\vec{\alpha}}$ is not ergodic for $\theta$, a contradiction to the classical theory of Markov chains.

This completes the proof of the lemma.
\end{proof}

For a general discrete-time ergodic semidynamics $(X,\theta,\mu)$, one can similarly consider its continuous-time ergodic suspension semiflow $(\mathrm{S}(X),\varTheta,\mathcal{P}_\mu)$.

%%%%%%%%%%%%%%%%%%%%%%%%%%%%%%%%%%%%%%%%%%%%%%%%%%%%%%%%%%%%%%%%%%%%%%%%%%
\subsection{Linear skew-product semiflow}\label{sec3.2}%
We now see the switching dynamics from the point of view of skew-product semiflow.

Let $\mathcal{A}=\{A_1,\dotsc,A_N\}\subset\mathbb{R}^{n\times n}$ be an arbitrarily given collection of real matrices. Then, it gives rise to the switching dynamical systems
\begin{equation}\label{eq3.4}
\dot{x}(t)=A_{\sigma_{\biota}(t)}x(t),\quad x(0)\in\mathbb{R}^n, t\in\mathbb{R}_+, \textrm{ and }\biota\in\varSigma_{\!N}^+,
\end{equation}
where the switching signals $\sigma_{\biota}\colon\mathbb{R}_+\rightarrow\{1,\dotsc,N\}$ are defined in the same way as in (\ref{eq1.8}) in Section~\ref{sec1}.
For any initial state $x(0)=x_0\in\mathbb{R}^n$, the solution to (\ref{eq3.4}) steered by $\biota\in\varSigma_{\!N}^+$ can be expressed as follows:
\begin{equation*}
x(t,x_0)=\begin{cases}
e^{tA_{\biota_1}} x_0& \textrm{if }0<t\le1,\\
e^{(t-k)A_{\biota_{k+1}}} e^{A_{\biota_k}}\cdots e^{A_{\biota_1}} x_0&\textrm{if }k<t\le k+1\textrm{ and }k\in\mathbb{N}.
\end{cases}
\end{equation*}

Now, on the suspension space $\mathrm{S}(\varSigma_{\!N}^+)$ defined in the manner as in Section~\ref{sec3.1}, we define a random matrix as follows:
\begin{equation}\label{eq3.5}
\mathds{A}_{\bcdot}\colon\mathrm{S}(\varSigma_{\!N}^+)\rightarrow\mathbb{R}^{n\times n};\quad [\biota,\tau]\mapsto A_{\sigma_{\biota}(\tau)}\;\forall \iota\in\varSigma_{\!N}^+\textrm{ and }0<\tau\le1.
\end{equation}
Because $\mathcal{P}_{\vec{\alpha}}([\varSigma_{\!N}^+,0])=0$, we do not care the evaluation of $\mathds{A}_{\bcdot}$ at the section $\varSigma_{\!N}^+\times\{0\}$.
Since $\mathds{A}_{[\biota,\tau+k]}=\mathds{A}_{[\theta^k(\biota),\tau]}$ by $\sigma_\biota(\tau+k)=\sigma_{\theta^k(\biota)}(\tau)$ for every $[\biota,\tau]\in\mathrm{S}(\varSigma_{\!N}^+)$ and $k\in\mathbb{N}$, this makes sense and satisfies
\begin{equation*}
\mathds{A}_{\varTheta(t,[\biota,\tau])}=\mathds{A}_{[\biota,\tau+t]}=A_{\sigma_\biota(\tau+t)}\quad\forall\tau\in\mathbb{R}_+\textrm{ and }0\le t<\infty.\leqno{(\ref{eq3.5})^\prime}
\end{equation*}
Then, we can obtain, from the random matrix $\mathds{A}_{\bcdot}$, a linear skew-product dynamical system
\begin{equation}\label{eq3.6}
\dot{x}(t)=\mathds{A}_{\varTheta(t,[\biota,\tau])}x(t),\quad 0\le t<\infty, x(0)\in\mathbb{R}^n,\textrm{ and }[\biota,\tau]\in\mathrm{S}(\varSigma_{\!N}^+)
\end{equation}
driven by the symbolic semiflow $\varTheta\colon[0,\infty)\times\mathrm{S}(\varSigma_{\!N}^+)\rightarrow\mathrm{S}(\varSigma_{\!N}^+)$.

Let $\{\varPhi_{[\biota,\tau]}(t)\}_{t\in\mathbb{R}_+}$ denote the principal matrix of (\ref{eq3.6}); that is to say, it satisfies $\dot{\varPhi}_{[\biota,\tau]}(t)=\mathds{A}_{\varTheta(t,[\biota,\tau])}\varPhi_{[\biota,\tau]}(t)$ at \textsf{Leb}-a.s. $t\in\mathbb{R}_+$ with $\varPhi_{[\biota,\tau]}(0)=\mathrm{Id}_{\mathbb{R}^n}$ the unit matrix. We can check the cocycle property
\begin{equation}\label{eq3.7}
\varPhi_{[\biota,\tau]}(t_1+t_2)=\varPhi_{\varTheta(t_1,[\biota,\tau])}(t_2)\varPhi_{[\biota,\tau]}(t_1)\quad \forall t_1,t_2\in\mathbb{R}_+.
\end{equation}
In fact, we need only to notice that for each $\biota\in\varSigma_{\!N}^+$, for any $\ell\in\mathbb{Z}_+$,
\begin{align*}
\varPhi_{[\biota,\ell]}(t)&=\begin{cases}e^{tA_{\biota_{1+\ell}}}& \textrm{if }0<t\le1,\\ e^{(t-k)A_{\biota_{k+\ell+1}}} e^{A_{\biota_{k+\ell}}}\cdots e^{A_{\biota_{1+\ell}}}&\textrm{if }k<t\le k+1\textrm{ and } k\in\mathbb{N};\end{cases}\\
\intertext{and for any $0<\tau<1$}
\varPhi_{[\biota,\tau]}(t)&=
\begin{cases}
e^{tA_{\biota_1}}& \textrm{if }0<t\le1-\tau,\\
e^{[t-(k-\tau)]A_{\biota_{k+1}}} e^{A_{\biota_k}}\cdots e^{(1-\tau)A_{\biota_1}}& \textrm{if }k-\tau<t\le k+1-\tau\textrm{ and }k\in\mathbb{N}.
\end{cases}
\end{align*}
This implies the cocycle property (\ref{eq3.7}).

From the above arguments, we can easily obtain the following result.

\begin{lem}\label{lem3.2}%%%
Let $$\mathfrak{S}_{\theta,\mathcal{A}}\colon[0,\infty)\times\mathrm{S}(\varSigma_{\!N}^+)\times\mathbb{R}^n\rightarrow\mathrm{S}(\varSigma_{\!N}^+)\times\mathbb{R}^n$$
be defined in the following way:
\begin{equation*}
(t,([\biota,\tau],x))\mapsto([\biota,\tau+t],\varPhi_{[\biota,\tau]}(t) x)\quad\forall t\in[0,\infty)\textrm{ and } ([\biota,\tau],x)\in\mathrm{S}(\varSigma_{\!N}^+)\times\mathbb{R}^n.
\end{equation*}
Then, $\mathfrak{S}_{\theta,\mathcal{A}}$ is a continuous-time semiflow; that is to say,
\begin{align*}
&\mathfrak{S}_{\theta,\mathcal{A}}(0,([\biota,\tau],x))=([\biota,\tau],x),\\
&\mathfrak{S}_{\theta,\mathcal{A}}(t_1+t_2,([\biota,\tau],x))=\mathfrak{S}_{\theta,\mathcal{A}}\left(t_2,\mathfrak{S}_{\theta,\mathcal{A}}(t_1,([\biota,\tau],x))\right),\\
\intertext{and}&\mathfrak{S}_{\theta,\mathcal{A}}(t,([\biota,\tau],x))\textrm{ is jointly continuous with respect to }t,[\biota,\tau] \textrm{ and }x.
\end{align*}
So, $\mathfrak{S}_{\theta,\mathcal{A}}$ is a continuous-time linear skew-product semiflow driven by the continuous-time symbolic semiflow $\varTheta$.
\end{lem}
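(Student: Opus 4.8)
The plan is to verify the three defining properties of a semiflow one at a time; the first two are immediate and the third carries all the work. First, $\mathfrak{S}_{\theta,\mathcal{A}}(0,([\biota,\tau],x))=([\biota,\tau],\varPhi_{[\biota,\tau]}(0)x)=([\biota,\tau],x)$ because $\varPhi_{[\biota,\tau]}(0)=\mathrm{Id}_{\mathbb{R}^n}$ by construction. Second, the composition identity splits into its two coordinates: in the base coordinate it is exactly the semiflow relation $\varTheta(t_1+t_2,[\biota,\tau])=\varTheta(t_2,\varTheta(t_1,[\biota,\tau]))$ established in Section~\ref{sec3.1}, and in the fiber coordinate it is precisely the cocycle property $(\ref{eq3.7})$, $\varPhi_{[\biota,\tau]}(t_1+t_2)x=\varPhi_{\varTheta(t_1,[\biota,\tau])}(t_2)\bigl(\varPhi_{[\biota,\tau]}(t_1)x\bigr)$, which the excerpt has already checked from the explicit product-of-exponentials formula for $\varPhi$.

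The substantive point is the joint continuity of $(t,([\biota,\tau],x))\mapsto\mathfrak{S}_{\theta,\mathcal{A}}(t,([\biota,\tau],x))$. The base component $(t,[\biota,\tau])\mapsto[\biota,\tau+t]$ is continuous since $\varTheta$ is a continuous semiflow. For the fiber component $\varPhi_{[\biota,\tau]}(t)x$, which is linear in $x$ with $\|\varPhi_{[\biota,\tau]}(t)\|\le e^{t\max_i\|A_i\|}$ uniformly in $[\biota,\tau]$, it suffices to prove that $(t,[\biota,\tau])\mapsto\varPhi_{[\biota,\tau]}(t)\in\mathbb{R}^{n\times n}$ is continuous. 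One cannot simply invoke continuous dependence of ODE solutions on data here, because the driving matrix $\mathds{A}_{[\biota,\tau]}$ of $(\ref{eq3.5})$ is only piecewise constant and genuinely discontinuous in $[\biota,\tau]$; instead I would use the explicit formula for $\varPhi_{[\biota,\tau]}(t)$ displayed just before the statement. Fix $(t_0,[\biota_0,\tau_0])$ and restrict to a bounded neighborhood, say $t\le T$ with $T\in\mathbb{N}$. On such a neighborhood the formula for $\varPhi_{[\biota,\tau]}(t)$ involves only the symbols $\biota_1,\dots,\biota_{T+1}$, so by the product topology on $\varSigma_{\!N}^+$ there is a cylinder neighborhood of $\biota_0$ throughout which $\biota$ agrees with $\biota_0$ in those coordinates; choosing representatives with $\tau\in[0,1)$, one then has $\varPhi_{[\biota,\tau]}(t)=e^{s_m(t,\tau)A_{\biota_{0,j_m}}}\cdots e^{s_1(t,\tau)A_{\biota_{0,j_1}}}$, a fixed finite ordered product in which the durations $s_1,\dots,s_m\ge0$ are explicit piecewise-affine, hence continuous, functions of $(t,\tau)$ with $\sum_i s_i=t$. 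Continuity of a finite composition of continuous maps then gives the claim.

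I expect the one genuinely delicate point to be the bookkeeping near the instants where $\tau+t$ crosses an integer, in particular continuity across the zero-section $[\varSigma_{\!N}^+,0]$, where the identification $(\biota,1)\sim(\theta(\biota),0)$ is in force and a representative's first symbol is not determined by the point of $\mathrm{S}(\varSigma_{\!N}^+)$. There the number of exponential factors changes by one, but the extra factor has duration tending to $0$ and so passes continuously through $e^{0}=\mathrm{Id}_{\mathbb{R}^n}$; equivalently, the short-time propagators $\varPhi_{[\biota,\tau]}(\delta)$ are uniformly close to the identity, so the influence of the ambiguous symbol washes out. Combining this with the relation $\mathds{A}_{[\biota,\tau+k]}=\mathds{A}_{[\theta^k(\biota),\tau]}$ noted in $(\ref{eq3.5})^\prime$ closes the continuity argument at such points. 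None of this is conceptually hard; it is purely a matter of handling representatives carefully, which is why I flag it as the main obstacle.
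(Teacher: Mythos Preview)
Your proposal is correct and follows essentially the same approach as the paper: the identity and semigroup properties come from the cocycle relation $(\ref{eq3.7})$ and the explicit product-of-exponentials formula for $\varPhi_{[\biota,\tau]}(t)$, while continuity rests on the observation that nearby points in $\varSigma_{\!N}^+$ agree on their first $k$ coordinates. The paper's own proof is extremely terse---it simply points to ``the argument above'' and records the product-topology fact $\biota_1=\biota_1',\dotsc,\biota_k=\biota_k'$ for close $\biota,\biota'$---whereas you have spelled out the continuity argument in full, including the bookkeeping at integer crossings and across the zero-section, which the paper does not address explicitly.
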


\begin{proof}
Based on the argument above, we need only to notice here that for all $t\in[0,\infty)$ there $\varTheta(t,[\biota,\tau])=[\biota,\tau+t]$ for any $[\biota,\tau]\in\mathrm{S}(\varSigma_{\!N}^+)$, and that for each $k\ge1$, $\biota_1=\biota_1^\prime,\dotsc,\biota_k=\biota_k^\prime$ as any two points $\biota, \biota^\prime$ are sufficiently close in $\varSigma_{\!N}^+$.
\end{proof}

This lemma is important for our discussion later, because combining with the above Lemma~\ref{lem3.1} it permits us to employ the classical ergodic theorems for
the stability of the linear switching systems (\ref{eq3.6}).

%%%%%%%%%%%%%%%%%%%%%%%%%%%%%%%%%%%%%%%%%%%%%%%%%%%%%%%%%%%%%%%%%%%%%%%
%%%%%%%%%%%%%%%%%%%%%%%%%%%%%%%%%%%%%%%%%%%%%%%%%%%%%%%%%%%%%%%%%%%%%%%
\section{Stabilizability of linear switching systems}\label{sec4}%

This section will be devoted to proving the statement (1) of Theorem~\ref{thm1.2} and Remark~\ref{rem1.3} stated in Section~\ref{sec1}. Let
\begin{equation*}
\mathcal{A}=\{A_1,\dotsc,A_N\}\subset\mathbb{R}^{n\times n}
\end{equation*}
be an arbitrary collection of real matrices. Then, it generates the linear switching dynamical systems
\begin{equation}\label{eq4.1}
\dot{x}(t)=A_{\sigma(t)}x(t),\quad x(0)\in\mathbb{R}^n\textrm{ and }t\in\mathbb{R}_+,
\end{equation}
where $\sigma\colon\mathbb{R}_+=(0,\infty)\rightarrow\{1,\dotsc,N\}$ are switching signals. For our goal, we need only to consider the special systems
\begin{equation*}
\dot{x}(t)=A_{\sigma_\biota(t)}x(t),\quad x(0)\in\mathbb{R}^n, t\in\mathbb{R}_+,\textrm{ and } \biota\in\varSigma_{\!N}^+ \leqno{(\ref{eq4.1})_\biota}
\end{equation*}
where $\sigma_\biota(t)$ is defined in the same way as in (\ref{eq1.8}); that is, $\sigma_\biota(t)\equiv\biota_k$ if $k-1<t\le k$ for all $k\in\mathbb{N}$.

We need an important classical result of triangularization for proving the statement (1) of Theorem~\ref{thm1.2}.

\begin{lem}[Lie's theorem~\cite{Hum}]\label{lem4.1}%%%
If $\mathcal{A}$ is solvable over the field $\mathbbm{C}$ of complex numbers, then one can find a nonsingular complex matrix, say $\mathbb{T}\in\mathbbm{C}^{n\times n}$, such that
\begin{equation*}
\widetilde{A}_i=\mathbb{T}A_i\mathbb{T}^{-1}\quad\forall i\in\{1,\dotsc,N\}
\end{equation*}
all are upper-triangular.
\end{lem}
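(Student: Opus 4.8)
The plan is to prove Lie's theorem by reducing it, via the standard correspondence between solvable Lie algebras of matrices and their representations, to the existence of a single common eigenvector, and then inducting on the dimension $n$. Since the statement is classical (it is cited as \cite{Hum}), the proof really amounts to recalling the argument in the matrix language we need.

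First I would pass to $\mathfrak{g}=\mathcal{A}_{\mathrm{LA}}$, the solvable Lie algebra over $\mathbbm{C}$ generated by $\mathcal{A}$ inside $\mathbbm{C}^{n\times n}$, viewed as acting on $V=\mathbbm{C}^n$. The heart of the matter is the following claim: a solvable Lie subalgebra $\mathfrak{g}\subset\mathfrak{gl}(V)$ with $V\neq 0$ has a common eigenvector, i.e.\ a nonzero $v\in V$ and a linear functional $\lambda\colon\mathfrak{g}\to\mathbbm{C}$ with $Xv=\lambda(X)v$ for all $X\in\mathfrak{g}$. Granting this claim, the theorem follows by induction on $n$: choose such a $v$, extend $v$ to a basis of $\mathbbm{C}^n$; in that basis every $A\in\mathfrak{g}$ has first column $(\lambda(A),0,\dots,0)^{\mathrm T}$, so $\mathfrak{g}$ acts on the quotient $V/\mathbbm{C}v\cong\mathbbm{C}^{n-1}$ by a solvable Lie algebra of matrices, to which the inductive hypothesis applies, yielding an upper-triangularizing change of basis on $V/\mathbbm{C}v$; lifting and composing the two base changes produces a single nonsingular $\mathbb{T}\in\mathbbm{C}^{n\times n}$ conjugating every $A_i$ (indeed every element of $\mathfrak{g}$) to upper-triangular form. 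The base case $n=1$ is vacuous.

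The common-eigenvector claim itself is proved by induction on $\dim\mathfrak{g}$, and this is the step I expect to be the main obstacle (it is where solvability is genuinely used). If $\dim\mathfrak{g}=0$ there is nothing to prove. Otherwise, solvability gives $[\mathfrak{g},\mathfrak{g}]\subsetneq\mathfrak{g}$, so one can pick a codimension-one subspace $\mathfrak{h}\supset[\mathfrak{g},\mathfrak{g}]$; then $\mathfrak{h}$ is an ideal of $\mathfrak{g}$, is itself solvable, and has smaller dimension, so by induction there is a nonzero $w$ and $\lambda\colon\mathfrak{h}\to\mathbbm{C}$ with $Hw=\lambda(H)w$ for all $H\in\mathfrak{h}$. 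Set $W=\{v\in V: Hv=\lambda(H)v\ \forall H\in\mathfrak{h}\}\neq 0$. The key lemma (the ``invariance lemma'' of Lie's theorem) is that $W$ is $\mathfrak{g}$-invariant; this is the delicate computation — for $X\in\mathfrak{g}$, $H\in\mathfrak{h}$, $v\in W$ one shows $H(Xv)=\lambda(H)(Xv)$ by expanding $HXv=XHv-[X,H]v$ and then proving $\lambda([X,H])=0$ via the standard trick of considering the subspace spanned by $v,Xv,X^2v,\dots$, on which each $H\in\mathfrak{h}$ acts by an upper-triangular matrix with diagonal entries $\lambda(H)$, so that $H$ restricted there has trace $k\,\lambda(H)$ for the relevant dimension $k$, while $[X,H]$ preserves that subspace and has trace $0$, forcing $\lambda([X,H])=0$ (this is where characteristic zero, i.e.\ working over $\mathbbm{C}$, is essential). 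Once $W$ is $\mathfrak{g}$-invariant, write $\mathfrak{g}=\mathfrak{h}\oplus\mathbbm{C}Z$ for some $Z\in\mathfrak{g}$; since $W\neq 0$ and $\mathbbm{C}$ is algebraically closed, $Z|_W$ has an eigenvector $v_0\in W$, and then $v_0$ is a common eigenvector for all of $\mathfrak{g}$, extending $\lambda$ to $\mathfrak{g}$. This closes both inductions and completes the proof.

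One remark: the theorem as stated asserts only the simultaneous upper-triangularization of the $N$ generators $A_1,\dots,A_N$, but the argument above triangularizes the whole generated Lie algebra $\mathcal{A}_{\mathrm{LA}}$, which is formally stronger and is exactly what we will need in Section~\ref{sec4} (we will want all convex combinations $\alpha_1A_1+\dots+\alpha_NA_N$, which lie in $\mathcal{A}_{\mathrm{LA}}$, to be simultaneously upper-triangular as well). I would state this stronger conclusion in the proof even though Lemma~\ref{lem4.1} only advertises the weaker one.
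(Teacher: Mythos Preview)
Your proof is correct and is precisely the standard argument from \cite{Hum} that the paper cites. Note, however, that the paper does not actually prove Lemma~\ref{lem4.1}: it is quoted as a classical result (Lie's theorem) with a reference to Humphreys and used as a black box, so there is no ``paper's own proof'' to compare against --- your write-up simply supplies the textbook argument behind the citation.
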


Note that even if the matrices $A_i$ have real entries, those of $\mathbb{T}$ defined by Lemma~\ref{lem4.1} may be complex~\cite{Hum}.

The statements (1) of Theorem~\ref{thm1.2} and Remark~\ref{rem1.3} then follow immediately from the following theorem.

\begin{thm}\label{thm4.2}%%%
Let $\vec{\alpha}=(\alpha_1,\dotsc,\alpha_N)\in\mathbb{R}^N$ be a positive probability vector and assume that $\mathcal{A}=\{A_1,\dotsc,A_N\}\subset\mathbb{R}^{n\times n}$ is solvable over the field $\mathbbm{C}$. Then,
\begin{equation*}
\dot{x}(t)=(\alpha_1A_1+\cdots+\alpha_NA_N)x(t),\; x(0)\in\mathbb{R}^n\textrm{ and }t\in\mathbb{R}_+,\;\textrm{is exponentially stable} \leqno{(\star)}
\end{equation*}
if and only if for $\mathds{P}_{\vec{\alpha}}$-a.s.~$\biota\in\varSigma_{\!N}^+$, these
$(\ref{eq4.1})_{\biota}$ are exponentially stable.
\end{thm}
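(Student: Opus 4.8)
The plan is to reduce everything to the upper-triangular case via Lie's theorem (Lemma~\ref{lem4.1}) and then invoke the Lyapunov exponent formula of Theorem~\ref{thm2.1} together with the ergodic theorem on the suspension semiflow from Section~\ref{sec3}. First I would fix the nonsingular $\mathbb{T}\in\mathbbm{C}^{n\times n}$ from Lemma~\ref{lem4.1} so that every $\widetilde{A}_i=\mathbb{T}A_i\mathbb{T}^{-1}$ is upper-triangular; since conjugation by $\mathbb{T}$ is a linear change of coordinates that distorts norms only by the fixed factor $\|\mathbb{T}\|\,\|\mathbb{T}^{-1}\|$, it preserves all Lyapunov exponents. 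Thus $(\ref{eq4.1})_\biota$ is exponentially stable iff the conjugated system $\dot{x}=\widetilde{A}_{\sigma_\biota(t)}x$ is, and condition $(\star)$ is equivalent to the exponential stability of $\dot{x}=\bigl(\sum_i\alpha_i\widetilde{A}_i\bigr)x$. Because $\sum_i\alpha_i\widetilde{A}_i$ is itself upper-triangular with diagonal entries $\sum_i\alpha_i(\widetilde{A}_i)_{jj}$ for $j=1,\dots,n$, $(\star)$ holds iff $\mathfrak{Re}\sum_i\alpha_i(\widetilde{A}_i)_{jj}<0$ for each $j$ (an eigenvalue computation for a triangular matrix).

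The core of the argument is to identify, for $\mathds{P}_{\vec\alpha}$-a.s.\ $\biota$, the Lyapunov exponent of $(\ref{eq4.1})_\biota$ with this spatial average of the diagonals. Here I would pass to the linear skew-product semiflow $\mathfrak{S}_{\theta,\mathcal{A}}$ of Lemma~\ref{lem3.2} over the ergodic suspension $(\mathrm{S}(\varSigma_{\!N}^+),\varTheta,\mathcal{P}_{\vec\alpha})$ of Lemma~\ref{lem3.1}. For each fixed $\biota$ and a.e.\ $t$, the coefficient matrix $\mathds{A}_{\varTheta(t,[\biota,0])}$ is the triangular matrix $\widetilde{A}_{\sigma_\biota(t)}$, so the hypotheses of Theorem~\ref{thm2.1} are met for each sample path: the Lyapunov exponent of $(\ref{eq4.1})_\biota$ equals $\max_{1\le j\le n}\vartheta_j(\biota)$, where $\vartheta_j(\biota)=\limsup_{T\to\infty}\frac{1}{T}\int_0^T\mathfrak{Re}\bigl((\widetilde{A}_{\sigma_\biota(t)})_{jj}\bigr)\,\mathrm{d}t$. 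Now the integrand $t\mapsto\mathfrak{Re}\bigl((\widetilde{A}_{\sigma_\biota(t)})_{jj}\bigr)=\mathfrak{Re}\bigl((\mathds{A}_{[\biota,\,t\bmod 1]})_{jj}\bigr)$ is (up to the suspension identification) a bounded Borel function evaluated along the orbit of $\varTheta$, so Birkhoff's ergodic theorem for the flow $(\mathrm{S}(\varSigma_{\!N}^+),\varTheta,\mathcal{P}_{\vec\alpha})$ applies: for $\mathcal{P}_{\vec\alpha}$-a.e.\ $[\biota,\tau]$ the time average converges to the space average $\int_{\mathrm{S}(\varSigma_{\!N}^+)}\mathfrak{Re}\bigl((\mathds{A}_{[\biota,\tau]})_{jj}\bigr)\,\mathrm{d}\mathcal{P}_{\vec\alpha}$. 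By the product structure $\mathcal{P}_{\vec\alpha}=\mathds{P}_{\vec\alpha}\otimes\textsf{Leb}$ and the definition~(\ref{eq3.5}) of $\mathds{A}_{\bcdot}$, this integral is exactly $\sum_{i=1}^N\alpha_i\,\mathfrak{Re}\bigl((\widetilde{A}_i)_{jj}\bigr)$. A routine Fubini argument upgrades the $\mathcal{P}_{\vec\alpha}$-a.e.\ statement on the suspension to a $\mathds{P}_{\vec\alpha}$-a.e.\ statement on $\varSigma_{\!N}^+$ (the exceptional set in $\mathrm{S}(\varSigma_{\!N}^+)$ meets $\mathds{P}_{\vec\alpha}$-a.e.\ fiber in a $\textsf{Leb}$-null set, hence has $\mathds{P}_{\vec\alpha}$-null base), and since $\limsup$ equals the genuine limit on the good set, $\vartheta_j(\biota)=\sum_i\alpha_i\mathfrak{Re}\bigl((\widetilde{A}_i)_{jj}\bigr)$ for $\mathds{P}_{\vec\alpha}$-a.s.\ $\biota$, for each of the finitely many $j$ simultaneously.

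Combining the two displays finishes both directions at once. For $\mathds{P}_{\vec\alpha}$-a.s.\ $\biota$ the Lyapunov exponent of $(\ref{eq4.1})_\biota$ equals $\max_j\sum_i\alpha_i\mathfrak{Re}\bigl((\widetilde{A}_i)_{jj}\bigr)$, and by the first paragraph this quantity is negative precisely when $(\star)$ holds. Hence $(\star)$ implies $(\ref{eq4.1})_\biota$ is exponentially stable for a.s.\ $\biota$; conversely, if $(\ref{eq4.1})_\biota$ is exponentially stable for a.s.\ $\biota$, then picking any single such $\biota$ in the good set forces $\max_j\sum_i\alpha_i\mathfrak{Re}\bigl((\widetilde{A}_i)_{jj}\bigr)<0$, i.e.\ $(\star)$. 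The main obstacle I anticipate is the bookkeeping in the measure-theoretic reduction from the suspension back to the symbolic space: I must be careful that the exceptional null set coming from Birkhoff's theorem, together with the countably many choices of starting phase $\tau$ needed to cover the orbit segments $\{[\biota,t]:0\le t\le T\}$, genuinely projects to a $\mathds{P}_{\vec\alpha}$-null set in $\varSigma_{\!N}^+$, and that the boundedness of the diagonal functions (inherited from $\sup_t\|C(t)\|<\infty$, here automatic since $\mathcal{A}$ is finite) legitimizes both Theorem~\ref{thm2.1} and the ergodic theorem. All other steps — the conjugation invariance of exponents, the triangular-matrix eigenvalue computation, and the identification of the space average as a convex combination of diagonals — are routine.
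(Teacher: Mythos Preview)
Your proposal is correct and follows essentially the same route as the paper: triangularize via Lie's theorem, apply Theorem~\ref{thm2.1} to express the Lyapunov exponent of each $(\ref{eq4.1})_\biota$ as $\max_j\vartheta_j(\biota)$, invoke Birkhoff's ergodic theorem on the suspension $(\mathrm{S}(\varSigma_{\!N}^+),\varTheta,\mathcal{P}_{\vec\alpha})$ to identify each $\vartheta_j$ with the spatial average $\sum_i\alpha_i\mathfrak{Re}\bigl((\widetilde{A}_i)_{jj}\bigr)$, and then project back to $\varSigma_{\!N}^+$. The only cosmetic difference is that the paper handles the descent from $\mathcal{P}_{\vec\alpha}$-a.e.\ to $\mathds{P}_{\vec\alpha}$-a.e.\ by observing directly that exponential stability of the system at $[\biota,\tau]$ depends only on $\biota$ and then using $\pi_*(\mathcal{P}_{\vec\alpha})=\mathds{P}_{\vec\alpha}$, which is exactly the $\tau$-independence underlying your Fubini reduction.
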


\begin{proof}
Let $\vec{\alpha}=(\alpha_1,\dotsc,\alpha_N)\in\mathbb{R}^N$ be arbitrarily given as in the assumption of the statement.
Let $\mathbb{T}$ be the matrix defined as in Lemma~\ref{lem4.1} by the solvability property of $\mathcal{A}$ so that
\begin{equation}\label{eq4.2}
\widetilde{A}_i=\mathbb{T}A_i\mathbb{T}^{-1}=\left[\begin{matrix}\tilde{a}_i^{11}&\cdots&*\\ \vdots&\ddots&\vdots\\ 0&\cdots&\tilde{a}_i^{nn}\end{matrix}\right]\quad \forall i\in\{1,\dotsc,N\}.
\end{equation}
And let
\begin{equation*}
\mathfrak{S}_{\theta,\mathcal{A}}\colon[0,\infty)\times\mathrm{S}(\varSigma_{\!N}^+)\times\mathbb{R}^n\rightarrow\mathrm{S}(\varSigma_{\!N}^+)\times\mathbb{R}^n
\end{equation*}
be the associated linear skew-product semiflow defined as in Lemma~\ref{lem3.2}. We first notice that for any $[\biota,\tau]\in\mathrm{S}(\varSigma_{\!N}^+)$, the stability of the system
\begin{equation}\label{eq4.3}
\dot{x}(t)=\mathds{A}_{\varTheta(t,[\biota,\tau])}x(t),\quad x(0)\in\mathbb{R}^n\textrm{ and }t\in\mathbb{R}_+
\end{equation}
is equivalent to that of the switching system
\begin{equation}\label{eq4.4}
\dot{x}(t)=A_{\sigma_\biota(\tau+t)}x(t),\quad x(0)\in\mathbb{R}^n\textrm{ and }t\in\mathbb{R}_+.
\end{equation}
In fact, (\ref{eq4.3}) and (\ref{eq4.4}) are the same equation from $(\ref{eq3.5})^\prime$.
We next consider the (maximal) Lyapunov exponent of (\ref{eq4.3}) given by
\begin{equation}\label{eq4.5}
\bbchi^+([\biota,\tau])=\limsup_{t\to+\infty}\frac{1}{t}\log\|\varPhi_{[\biota,\tau]}(t)\|.
\end{equation}
From
\begin{equation}\label{eq4.6}
\bbchi^+([\biota,\tau])=\limsup_{t\to+\infty}\frac{1}{t}\log\|\mathbb{T}\varPhi_{[\biota,\tau]}(t)\mathbb{T}^{-1}\|,
\end{equation}
it follows that for every $[\biota,\tau]\in\mathrm{S}(\varSigma_{\!N}^+)$, (\ref{eq4.3}) is exponentially stable if and only if so is the upper-triangular system
\begin{equation}\label{eq4.7}
\dot{x}(t)=\widetilde{\mathds{A}}_{\varTheta(t,[\biota,\tau])}x(t),\quad x(0)\in\mathbbm{C}^n\textrm{ and }t\in\mathbb{R}_+
\end{equation}
where
\begin{equation}\label{eq4.8}
\widetilde{\mathds{A}}_{[\biota,\tau]}=\widetilde{A}_{\sigma_\biota(\tau)}\quad\forall [\biota,\tau]\in\mathrm{S}(\varSigma_{\!N}^+).
\end{equation}
We should notice here that, under the variable transformation $z=\mathbb{T}x\in\mathbbm{C}^n$ for $x\in\mathbb{R}^n$,
\begin{equation*}
\widetilde{\varPhi}_{[\biota,\tau]}(t):=\mathbb{T}\varPhi_{[\biota,\tau]}(t)\mathbb{T}^{-1}\colon\mathbbm{C}^n\rightarrow\mathbbm{C}^n\quad\forall t\in\mathbb{R}_+
\end{equation*}
is the principal matrix of (\ref{eq4.7}), where $\varPhi_{[\biota,\tau]}(t)\colon\mathbb{R}^n\rightarrow\mathbb{R}^n$ is naturally complexified by \begin{equation*}
\varPhi_{[\biota,\tau]}(t)(x+\mathfrak{i}y)=\varPhi_{[\biota,\tau]}(t) x+\mathfrak{i}\varPhi_{[\biota,\tau]}(t) y\quad \forall x+\mathfrak{i}y\in \mathbbm{C}^n,
\end{equation*}
where $\mathfrak{i}=\sqrt{-1}$ is the imaginary unit; thus, $$\limsup_{t\to+\infty}\frac{1}{t}\log\|\widetilde{\varPhi}_{[\biota,\tau]}(t)\|=\limsup_{t\to+\infty}\frac{1}{t}\log\|\varPhi_{[\biota,\tau]}(t)\|.$$
Let
\begin{equation}\label{eq4.9}
\vartheta_i([\biota,\tau])=\limsup_{T\to+\infty}\frac{1}{T}\int_0^T\mathfrak{Re}(\tilde{a}_{\sigma_\biota(\tau+t)}^{ii})\,\mathrm{d}t\quad \textrm{for }i=1,\dotsc,n.
\end{equation}
Then from Theorem~\ref{thm2.1}, it follows that
\begin{equation}\label{eq4.10}
\bbchi^+([\biota,\tau])=\max\{\vartheta_i([\biota,\tau])\,|\,i=1,\dotsc,n\}.
\end{equation}
Define the qualitative functions
\begin{equation}\label{eq4.11}
\omega_i\colon\mathrm{S}(\varSigma_{\!N}^+)\rightarrow\mathbb{R};\quad[\biota,\tau]\mapsto\mathfrak{Re}(\tilde{a}_{\sigma_\biota(\tau)}^{ii})
\end{equation}
for each $i=1,\dotsc,n$. It is easy to see that each $\omega_i$ is bounded and Borel measurable such that
\begin{subequations}\label{eq4.12}
\begin{align}
&\omega_i\in\mathscr{F}_{\!\rho}(\mathrm{S}(\varSigma_{\!N}^+))&& \forall i=1,\dotsc,n \label{eq4.12a}\\
\intertext{and}
&\omega_i(\varTheta(t,[\biota,\tau]))=\mathfrak{Re}(\tilde{a}_{\sigma_\biota(\tau+t)}^{ii})&&\forall t\in[0,\infty)\textrm{ and } [\biota,\tau]\in\mathrm{S}(\varSigma_{\!N}^+),\label{eq4.12b}
\end{align}\end{subequations}
since $\varTheta(t,[\biota,\tau])=[\biota,\tau+t]$.
Thus, by (\ref{eq4.9}) and (\ref{eq4.12b}) we have
\begin{equation}\label{eq4.13}
\vartheta_i([\biota,\tau])=\limsup_{T\to+\infty}\frac{1}{T}\int_0^T\omega_i(\varTheta(t,[\biota,\tau]))\,\mathrm{d}t\quad \forall i=1,\dotsc,n
\end{equation}
for any $[\biota,\tau]\in\mathrm{S}(\varSigma_{\!N}^+)$.

Corresponding to the probability vector $\vec{\alpha}$, there are associated ergodic probability measures $\mathds{P}_{\vec{\alpha}}$ for the dynamics $(\varSigma_{\!N}^+,\theta)$ and $\mathcal{P}_{\vec{\alpha}}$ for $(\mathrm{S}(\varSigma_{\!N}^+),\varTheta)$ by Lemma~\ref{lem3.1}. Thus by the classical Birkhoff ergodic theorem~\cite{NS, Walters82}, it follows, from (\ref{eq4.13}) and (\ref{eq4.11}), that for $\mathcal{P}_{\vec{\alpha}}$-a.s.~$[\biota,\tau]\in\mathrm{S}(\varSigma_{\!N}^+)$ and for each $i=1,\dotsc,n$,
\begin{equation}\label{eq4.14}
\begin{split}
\vartheta_i([\biota,\tau])&=\int_{\mathrm{S}(\varSigma_{\!N}^+)}\omega_i([\biota,\tau])\,\mathrm{d}\mathcal{P}_{\vec{\alpha}}
=\int_{\mathrm{S}(\varSigma_{\!N}^+)}\mathfrak{Re}(\tilde{a}_{\sigma_\biota(\tau)}^{ii})\,\mathrm{d}\mathcal{P}_{\vec{\alpha}}([\biota,\tau])\\
&=\int_{\varSigma_{\!N}^+}\left\{\int_0^1\mathfrak{Re}(\tilde{a}_{\sigma_\biota(t)}^{ii})\,\mathrm{d}t\right\}\mathrm{d}\mathds{P}_{\vec{\alpha}}(\biota)\\
&=\int_{\varSigma_{\!N}^+}\mathfrak{Re}(\tilde{a}_{\biota_1}^{ii})\,\mathrm{d}\mathds{P}_{\vec{\alpha}}(\biota);
\end{split}
\end{equation}
here the last ``=" follows from $\sigma_\biota(t)=\biota_1$ for all $0<t\le1$ and any $\biota=(\biota_k)_{k=1}^{+\infty}\in\varSigma_{\!N}^+$. Therefore, from (\ref{eq1.7}) it follows that for $\mathcal{P}_{\vec{\alpha}}$-a.s. $[\biota,\tau]\in\mathrm{S}(\varSigma_{\!N}^+)$, we have
\begin{equation}\label{eq4.15}
\vartheta_i([\biota,\tau])=\sum_{j=1}^N\sum_{k=1}^N\mathfrak{Re}(\tilde{a}_{j}^{ii})\mathds{P}_{\vec{\alpha}}([k,j])=\sum_{j=1}^N\sum_{k=1}^N\alpha_k\alpha_j\mathfrak{Re}(\tilde{a}_{j}^{ii})=\sum_{j=1}^N\alpha_j\mathfrak{Re}(\tilde{a}_{j}^{ii})
\end{equation}
for every $i=1,\dotsc,n$, where $[k,j]\subset\varSigma_{\!N}^+$ is the cylinder set of length of $2$ defined by the word $(k,j)$ in the same manner as in (\ref{eq1.6}) in Section~\ref{sec1}.

\textit{Sufficiency.} Let condition $(\star)$ hold. Then the equation
$$\dot{z}=\mathbb{T}(\alpha_1A_1+\cdots+\alpha_NA_N)\mathbb{T}^{-1}z,\quad z\in\mathbbm{C}^n\textrm{ and }t\in\mathbb{R}_+$$
is exponentially stable. From
\begin{align*}
\mathbb{T}(\alpha_1A_1+\cdots+\alpha_NA_N)\mathbb{T}^{-1}&=\alpha_1\mathbb{T}A_1\mathbb{T}^{-1}+\cdots+\alpha_N\mathbb{T}A_N\mathbb{T}^{-1}\\
&=\left[\begin{matrix}\sum\limits_{j=1}^N\alpha_j\tilde{a}_{j}^{11}&\cdots&*\\
\vdots&\ddots&\vdots\\
0&\cdots&\sum\limits_{j=1}^N\alpha_j\tilde{a}_{j}^{nn}\end{matrix}\right],
\end{align*}
we have from Theorem~\ref{thm2.1}
\begin{equation}\label{eq4.16}
\sum_{j=1}^N\alpha_j\mathfrak{Re}(\tilde{a}_{j}^{ii})<0\quad \forall i=1,\dotsc,n.
\end{equation}
Thus from (\ref{eq4.10}), (\ref{eq4.15}) and (\ref{eq4.16}), it follows that
\begin{equation}\label{eq4.17}
\bbchi^+([\biota,\tau])<0\quad \textrm{for }\mathcal{P}_{\vec{\alpha}}\textrm{-a.s.}~[\biota,\tau]\in\mathrm{S}(\varSigma_{\!N}^+).
\end{equation}
Because for any $[\biota,\tau]\in\mathrm{S}(\varSigma_{\!N}^+)$ the exponential stability of (\ref{eq4.3}) is equivalent to that of $(\ref{eq4.1})_\biota$, from (\ref{eq4.17}) we see
that for $\mathcal{P}_{\vec{\alpha}}\textrm{-a.s.}~[\biota,\tau]\in\mathrm{S}(\varSigma_{\!N}^+)$, $(\ref{eq4.1})_\biota$ are exponentially stable.

Finally, let
$$\pi\colon\mathrm{S}(\varSigma_{\!N}^+)\rightarrow\varSigma_{\!N}^+;\quad [\biota,\tau]\mapsto\biota.$$
Then $\pi_*(\mathcal{P}_{\vec{\alpha}})=\mathds{P}_{\vec{\alpha}}$, i.e., $\mathds{P}_{\vec{\alpha}}=\mathcal{P}_{\vec{\alpha}}\circ\pi^{-1}$.
Thus, for $\mathds{P}_{\vec{\alpha}}\textrm{-a.s.}~\biota\in\varSigma_{\!N}^+$, $(\ref{eq4.1})_\biota$ are exponentially stable. This shows the sufficiency.

\textit{Necessity.} Let $(\ref{eq4.1})_{\biota}$ be exponentially stable for $\mathds{P}_{\vec{\alpha}}$-a.s.~$\biota\in\varSigma_{\!N}^+$. Then, we can obtain that for $\mathcal{P}_{\vec{\alpha}}$-a.s. $[\biota,\tau]\in\mathrm{S}(\varSigma_{\!N}^+)$, (\ref{eq4.3}) and hence (\ref{eq4.7}) are exponentially stable. So, it follows from Theorem~\ref{thm2.1} that for $\mathcal{P}_{\vec{\alpha}}$-a.s. $[\biota,\tau]\in\mathrm{S}(\varSigma_{\!N}^+)$, $\vartheta_i([\biota,\tau])<0$ for each $i=1,\dotsc,n$. Further, by (\ref{eq4.13}), Lemma~\ref{lem3.1} and the Birkhoff ergodic theorem, it follows from (\ref{eq4.15}) that
\begin{equation*}
\sum_{j=1}^N\alpha_j\mathfrak{Re}(\tilde{a}_{j}^{ii})<0\quad \textrm{for each }i=1,\dotsc,n.
\end{equation*}
Thus, from Theorem~\ref{thm2.1} once again, it follows that the equation
\begin{equation}\label{eq4.18}
\dot{z}(t)=\mathbb{T}(\alpha_1A_1+\cdots+\alpha_NA_N)\mathbb{T}^{-1}z(t),\quad z(0)\in\mathbbm{C}^n\textrm{ and }t\in\mathbb{R}_+
\end{equation}
is exponentially stable. So, the linear affine equation
\begin{equation*}
\dot{x}(t)=(\alpha_1A_1+\cdots+\alpha_NA_N)x(t),\quad x(0)\in\mathbb{R}^n\textrm{ and }t\in\mathbb{R}_+
\end{equation*}
is exponentially stable, by considering initial $z(0)\in \mathbb{T}(\mathbb{R}^n)$ for (\ref{eq4.18}). This completes the proof of the necessity.

This thus proves the theorem.
\end{proof}

From the proof above or directly from Theorem~\ref{thm4.2}, we can obtain the following, which is an important step toward the proof of the statement (2) of Theorem~\ref{thm1.2}.

\begin{prop}\label{prop4.3}%%%
Assume that $\vec{\alpha}=(\alpha_1,\dotsc,\alpha_N)\in\mathbb{R}^N$ is a positive probability vector. If $\mathcal{A}=\{A_1,\dotsc,A_N\}\subset\mathbb{R}^{n\times n}$ is solvable over the complex field $\mathbbm{C}$ and the linear affine equation
\begin{equation*}
\dot{x}=(\alpha_1A_1+\cdots+\alpha_NA_N)x,\quad x\in\mathbb{R}^n \textrm{ and }t\in\mathbb{R}_+
\end{equation*}
is exponentially stable, then for $\mathcal{P}_{\vec{\alpha}}$-a.s.~$[\biota,\tau]\in\mathrm{S}(\varSigma_{\!N}^+)$,
\begin{equation}\label{eq4.19}
\dot{x}(t)=\mathds{A}_{\varTheta(t,[\biota,\tau])}x(t),\quad x(0)\in\mathbb{R}^n\textrm{ and }t\in\mathbb{R}_+
\end{equation}
are exponentially stable.
\end{prop}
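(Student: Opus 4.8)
The plan is to read Proposition~\ref{prop4.3} off of Theorem~\ref{thm4.2} together with the elementary remark that a Lyapunov exponent is insensitive to a finite time-shift of the coefficient function. First I would recall from $(\ref{eq3.5})^\prime$ that the equation (\ref{eq4.19}) driven by $[\biota,\tau]$ is literally $\dot{x}(t)=A_{\sigma_\biota(\tau+t)}x(t)$, namely the switching equation $(\ref{eq4.1})_\biota$ read starting at time $\tau$ instead of $0$. For a fixed $\biota$ and $0<\tau\le 1$ the cocycle identity (\ref{eq3.7}) gives $\varPhi_{[\biota,0]}(\tau+t)=\varPhi_{[\biota,\tau]}(t)\,\varPhi_{[\biota,0]}(\tau)$, so $\|\varPhi_{[\biota,\tau]}(t)\|$ and $\|\varPhi_{[\biota,0]}(\tau+t)\|$ differ only by the fixed invertible factor $\varPhi_{[\biota,0]}(\tau)$; hence $\bbchi^+([\biota,\tau])=\bbchi^+([\biota,0])$, the latter being the Lyapunov exponent of $(\ref{eq4.1})_\biota$, for every $\tau$.

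Next I would invoke Theorem~\ref{thm4.2}: the $(\star)$-condition assumed in Proposition~\ref{prop4.3} forces $(\ref{eq4.1})_\biota$ to be exponentially stable, i.e. $\bbchi^+([\biota,0])<0$, for all $\biota$ in a Borel set $E\subset\varSigma_{\!N}^+$ with $\mathds{P}_{\vec{\alpha}}(E)=1$. By the previous paragraph $\bbchi^+([\biota,\tau])<0$ for every $[\biota,\tau]$ with $\biota\in E$, that is, on $[E,\mathbb{I}]\subset\mathrm{S}(\varSigma_{\!N}^+)$; and $\mathcal{P}_{\vec{\alpha}}([E,\mathbb{I}])=\mathds{P}_{\vec{\alpha}}(E)\cdot\textsf{Leb}(\mathbb{I})=1$ by the product structure (\ref{eq3.3}) of $\mathcal{P}_{\vec{\alpha}}$. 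This is precisely the claim.

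Alternatively one can argue directly on the suspension, which is the route already buried inside the proof of Theorem~\ref{thm4.2}: conjugating by the matrix $\mathbb{T}$ of Lemma~\ref{lem4.1} reduces (\ref{eq4.19}) to the upper-triangular cocycle (\ref{eq4.7}) (after complexifying $\varPhi_{[\biota,\tau]}(t)$, so that stability over $\mathbb{R}^n$ and over $\mathbbm{C}^n$ coincide); Theorem~\ref{thm2.1} then identifies $\bbchi^+([\biota,\tau])$ with $\max_{1\le i\le n}\vartheta_i([\biota,\tau])$, where each $\vartheta_i$ is the forward Birkhoff average along $\varTheta$ of the bounded Borel function $\omega_i([\biota,\tau])=\mathfrak{Re}(\tilde{a}_{\sigma_\biota(\tau)}^{ii})$. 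Ergodicity of $\mathcal{P}_{\vec{\alpha}}$ (Lemma~\ref{lem3.1}) and Birkhoff's ergodic theorem give $\vartheta_i([\biota,\tau])=\int\omega_i\,\mathrm{d}\mathcal{P}_{\vec{\alpha}}=\sum_{j=1}^N\alpha_j\mathfrak{Re}(\tilde{a}_j^{ii})$ for $\mathcal{P}_{\vec{\alpha}}$-a.s.~$[\biota,\tau]$, and applying Theorem~\ref{thm2.1} to the upper-triangular matrix $\mathbb{T}(\alpha_1A_1+\cdots+\alpha_NA_N)\mathbb{T}^{-1}$ turns the $(\star)$-condition into the strict negativity of each such average.

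There is no real obstacle here: the statement is essentially a repackaging of facts already established. The only point deserving a sentence of care is the passage from ``$\mathds{P}_{\vec{\alpha}}$-a.s.~$\biota$'' to ``$\mathcal{P}_{\vec{\alpha}}$-a.s.~$[\biota,\tau]$'', which is exactly what the time-shift invariance of the Lyapunov exponent together with the product form of $\mathcal{P}_{\vec{\alpha}}$ deliver; one should also not forget the complexification remark, so that Lie's theorem and Theorem~\ref{thm2.1} may legitimately be applied to the real systems (\ref{eq4.19}).
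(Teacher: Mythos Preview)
Your proposal is correct and matches the paper's own treatment. The paper does not give a separate proof of Proposition~\ref{prop4.3} but simply says it follows ``from the proof above or directly from Theorem~\ref{thm4.2}''; your second paragraph (triangularize via Lemma~\ref{lem4.1}, apply Theorem~\ref{thm2.1}, then Birkhoff for the ergodic semiflow $(\mathrm{S}(\varSigma_{\!N}^+),\varTheta,\mathcal{P}_{\vec{\alpha}})$) is exactly the argument leading to (\ref{eq4.17}) inside the proof of Theorem~\ref{thm4.2}, and your first paragraph (Theorem~\ref{thm4.2} plus time-shift invariance of $\bbchi^+$ via the cocycle identity and the product form (\ref{eq3.3}) of $\mathcal{P}_{\vec{\alpha}}$) spells out the ``directly from Theorem~\ref{thm4.2}'' route.
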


\begin{remark}\label{rem4.4}%%%
Up to here, we may ask if we can press forward without letup until we complete the proof of the statement (2) of Theorem~\ref{thm1.2}. Unfortunately, we cannot. Although we have proved until now that
for $\mathcal{P}_{\vec{\alpha}}$-a.s.~$[\biota,\tau]\in\mathrm{S}(\varSigma_{\!N}^+)$,
(\ref{eq4.19}) are exponentially stable, yet to proving the statement (2) of Theorem~\ref{thm1.2}, we essentially need to show that
for $\mathcal{P}_{\vec{\alpha}}$-a.s.~$[\biota,\tau]\in\mathrm{S}(\varSigma_{\!N}^+)$,
\begin{equation}\label{eq4.20}
\dot{x}=\mathds{A}_{\varTheta(t,[\biota,\tau])}x+f(x,t),\quad x(0)\in\mathbb{R}^n\textrm{ and }t\in\mathbb{R}_+
\end{equation}
are exponentially stable, where $\|f(x,t)\|\le \bL\|x\|$ for sufficiently small $\bL>0$. When $\|f(x,t)\|\le \bL\|x\|^{1+\gamma}$ for some $\gamma>0$, we could conclude our desirable result from the classical Lyapunov stability theorem~\cite{Lya}. However, the classical Perron counterexample~\cite{Perron} shows that the high order $\gamma$-condition is crucial. So, to overcome the trouble caused by the lack of such $\gamma$, we will select carefully out ``sufficiently many good" driving points $[\biota,\tau]$ among the $\mathcal{P}_{\vec{\alpha}}$-a.s. $[\biota,\tau]$, using another ergodic theorem presented in Section~\ref{sec5}.
\end{remark}

%%%%%%%%%%%%%%%%%%%%%%%%%%%%%%%%%%%%%%%%%%%%%%%%%%%%%%%%%%%%%%%%%%
%%%%%%%%%%%%%%%%%%%%%%%%%%%%%%%%%%%%%%%%%%%%%%%%%%%%%%%%%%%%%%%%%%
\section{Stabilizability of quasilinear switched systems}\label{sec5}%%%
This section will be devoted to proving the statement (2) of Theorem~\ref{thm1.2} stated in Section~\ref{sec1}, based on the statement (1) of Theorem~\ref{thm1.2} proved in Section~\ref{sec4}.

Let $f_i(x,t)\in\mathbb{R}^n, i=1,\dotsc, N$, be measurable functions, which are continuous with respect to the state-variable $x\in\mathbb{R}^n$, such that the following linear growth condition holds:
\begin{equation}\label{eq5.1}
\|f_i(x,t)\|\le \bL\|x\|\quad \forall x\in\mathbb{R}^n
\end{equation}
uniformly for $t\in\mathbb{R}_+=(0,+\infty)$ and $i\in\{1,\dotsc,N\}$, for some constant $\bL>0$. To prove the statement (2) of Theorem~\ref{thm1.2}, it is sufficient to show the following slightly general stability result.

\begin{thm}\label{thm5.1}%%%
Assume that $\vec{\alpha}=(\alpha_1,\dotsc,\alpha_N)\in\mathbb{R}^N$ is a positive probability vector. Let $\mathcal{A}=\{A_1,\dotsc,A_N\}\subset\mathbb{R}^{n\times n}$ be solvable over $\mathbbm{C}$, and let the linear affine equation
\begin{equation*}
\dot{x}=(\alpha_1A_1+\dotsm+\alpha_NA_N)x,\quad x\in\mathbb{R}^n\textrm{ and }t\in\mathbb{R}_+
\end{equation*}
be exponentially stable. Then for any sufficiently small $\varepsilon>0$, one can find a Borel subset $W\subset\varSigma_{\!N}^+$
with $\mathds{P}_{\vec{\alpha}}(W)\ge1-\varepsilon$ and a constant $\bbdelta>0$ such that for every $\biota\in W$, the switching systems
\begin{equation}
\dot{x}(t)=A_{\sigma_{\biota}(t)}x(t)+f_{\sigma_{\biota}(t)}(x(t), t),\quad x(0)\in\mathbb{R}^n\textrm{ and }t\in\mathbb{R}_+
\end{equation}
are globally exponentially stable, whenever $f_1(x,t), \dotsc, f_N(x,t)$ satisfy condition (\ref{eq5.1}) with $\bL<\bbdelta$.
\end{thm}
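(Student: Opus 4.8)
The plan is to move the problem onto the suspension skew-product of Section~\ref{sec3}, triangularize the linear part, and then replace the a.e.\ Lyapunov-exponent bound of Proposition~\ref{prop4.3} by a \emph{Liao-type} ergodic estimate that is robust against the linearly-growing perturbations $f_i$. First I would triangularize: by Lie's theorem (Lemma~\ref{lem4.1}) there is a nonsingular $\mathbb{T}\in\mathbbm{C}^{n\times n}$ so that all $\widetilde A_i=\mathbb{T}A_i\mathbb{T}^{-1}$ are upper-triangular. Under $z=\mathbb{T}x$ the perturbed switching system becomes $\dot z=\widetilde A_{\sigma_\biota(t)}z+\widetilde f_{\sigma_\biota(t)}(z,t)$ with $\widetilde f_i(z,t):=\mathbb{T}f_i(\mathbb{T}^{-1}z,t)$, and (\ref{eq5.1}) gives $\|\widetilde f_i(z,t)\|\le\kappa\bL\|z\|$ with $\kappa:=\|\mathbb{T}\|\,\|\mathbb{T}^{-1}\|$ depending only on $\mathcal{A}$. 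Because $z=\mathbb{T}x$ is linear and invertible, global exponential stability of the $z$-system for all initial states in $\mathbb{T}(\mathbb{R}^n)$ is equivalent to that of the original $x$-system, and forward completeness is automatic since the right-hand side grows at most linearly in the state. So it suffices to prove the theorem for an upper-triangular family, with $\bL$ replaced by $\kappa\bL$.

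Next I would reformulate on the suspension. As in Section~\ref{sec3.2}, for each base point $[\biota,\tau]$ the triangularized system is a perturbation $\dot z=\widetilde{\mathds{A}}_{\varTheta(t,[\biota,\tau])}z+\widetilde f_{\sigma_\biota(\tau+t)}(z,t)$ of the linear skew-product (\ref{eq4.7}) over the ergodic semiflow $(\mathrm{S}(\varSigma_{\!N}^+),\varTheta,\mathcal{P}_{\vec{\alpha}})$, with the perturbation bounded by $\kappa\bL\|z\|$. Since moving $\tau$ in $(0,1)$ only time-shifts the data attached to $\biota$ by a bounded amount, exponential stability over $[\biota,\tau]$ does not depend on $\tau\in(0,1)$ and is equivalent to exponential stability of the switching system driven by $\sigma_\biota$; hence it is enough to produce a Borel set $\Omega_\varepsilon\subset\mathrm{S}(\varSigma_{\!N}^+)$, which by this remark and Fubini may be taken of the product form $W\times(0,1)$ with $\mathds{P}_{\vec{\alpha}}(W)\ge1-\varepsilon$, over whose points the perturbed skew-product is uniformly exponentially stable. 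For the quantitative seed, Proposition~\ref{prop4.3} together with (\ref{eq4.10}) and (\ref{eq4.15}) gives that for $\mathcal{P}_{\vec{\alpha}}$-a.e.\ $[\biota,\tau]$ the top Lyapunov exponent of the linear part equals the constant $\max_{1\le i\le n}\sum_{j=1}^N\alpha_j\mathfrak{Re}(\tilde a_j^{ii})=:-2\lambda_0<0$, and each diagonal average $\vartheta_i([\biota,\tau])$ exists and is $\le-2\lambda_0$.

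The crux is to upgrade this a.e.\ negativity to a uniform, perturbation-stable estimate on a set of measure $\ge1-\varepsilon$. The diagonal entries of $\widetilde{\varPhi}_{[\biota,\tau]}$ satisfy $|\widetilde{\varPhi}_{ii}(t)/\widetilde{\varPhi}_{ii}(s)|=\exp\int_s^t\mathfrak{Re}(\tilde a^{ii}_{\sigma_\biota(\tau+u)})\,\mathrm{d}u$, and (exactly as in the proof of Theorem~\ref{thm2.1}) the off-diagonal entries, hence $\|\widetilde{\varPhi}_{[\biota,\tau]}(t,s)\|$, are controlled by integrals of differences of these. The difficulty is that a negative Birkhoff average of $\mathfrak{Re}(\tilde a^{ii}_{\sigma_\biota(\bcdot)})$ does \emph{not} yield a uniform bound $\sup_{0\le s\le t}\bigl(\int_s^t\mathfrak{Re}(\tilde a^{ii}_{\sigma_\biota(\tau+u)})\,\mathrm{d}u+\lambda_0(t-s)\bigr)<\infty$ on any positive-measure set — by ergodicity that supremum is a.e.\ infinite — yet it is precisely such uniformity that a variation-of-parameters plus Gronwall argument would need to absorb an $O(\|z\|)$ perturbation; the Perron counterexample recalled in Remark~\ref{rem4.4} (\cite{Perron}) shows the failure is genuine. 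The remedy, which I would formulate and prove as the general Proposition~\ref{prop5.2}, is to replace the Lyapunov exponent by the Liao-type exponent of \cite{DHX10,DHX11}, built from the same diagonal data but stable under linearly-growing perturbations and still obeying an ergodic theorem: for every small $\varepsilon>0$ there are a Borel set $\Omega_\varepsilon=W\times(0,1)$ with $\mathds{P}_{\vec{\alpha}}(W)\ge1-\varepsilon$, a threshold $\bbdelta_0>0$ and a constant $\widehat C\ge1$ — all independent of $[\biota,\tau]$ — such that whenever $\kappa\bL<\bbdelta_0$ every solution of the perturbed skew-product over any $[\biota,\tau]\in\Omega_\varepsilon$ satisfies $\|z(t)\|\le\widehat C e^{-\lambda_0 t}\|z(0)\|$ for all $t\ge0$. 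I expect the proof of this Liao-type estimate — extracting from the exponent data the ``good'' time structure along which the perturbation stays subordinate to the contraction — to be the real technical work of the section.

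Finally I would assemble the pieces: set $\bbdelta:=\bbdelta_0/\kappa$ and let $W$ be as above. For $\biota\in W$ and $\bL<\bbdelta$, undoing $z=\mathbb{T}x$ gives $\|x(t)\|\le\kappa\widehat C\,e^{-\lambda_0 t}\|x(0)\|$ for every solution, so $\bbchi^+(x_0,\sigma_\biota,u)\le-\lambda_0<0$; since the claim in Theorem~\ref{thm1.2}(2) is the special case $f_i(x,t)=B_i(x)u(t)$, for which (\ref{eq1.3}) gives $\|f_i(x,t)\|\le\bbbeta\bbdelta'\|x\|$ when $\|u(t)\|\le\bbdelta'$, one simply takes the input bound $\bbdelta'=\bbdelta/\bbbeta$ and the same conclusion follows. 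The main obstacle throughout is the step just described: turning the a.e.\ negativity of the Lyapunov exponent from Proposition~\ref{prop4.3} into a uniform exponential estimate on a set of nearly full measure that survives a perturbation which is only $O(\|x\|)$ — a strengthening beyond the reach of Birkhoff's ergodic theorem, for which the Liao-type exponent is the right instrument.
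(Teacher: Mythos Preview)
Your overall architecture is right --- move to the suspension, triangularize, recognize that the a.e.\ negative Lyapunov exponent from Proposition~\ref{prop4.3} is not by itself robust against $O(\|x\|)$ perturbations (the Perron obstruction of Remark~\ref{rem4.4}), and invoke Liao-type exponents from \cite{DHX10,DHX11} to recover robustness on a set of measure $\ge 1-\varepsilon$. That is exactly the paper's strategy. But there is a genuine gap in your execution, and the paper flags it explicitly in Remark~\ref{rem5.4}: the stability criterion Theorem~\ref{thm5.3} is stated and proved in \cite{DHX11} only for \emph{real} upper-triangular $S_i(t)\in\mathbb{R}^{n\times n}$, and the proof there does not carry over to the complex case. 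Your triangularization via Lie's theorem produces complex $\widetilde A_i$, and you then restrict the perturbed $z$-system to the real $n$-dimensional slice $\mathbb{T}(\mathbb{R}^n)\subset\mathbbm{C}^n$; on that slice the coordinate flag making $\widetilde A_i$ triangular is not real, so you cannot feed the resulting system into Theorem~\ref{thm5.3}. Nor can you simply realify to a $2n\times 2n$ real system, since realification destroys the upper-triangular structure (it becomes block-triangular with $2\times2$ blocks, which is outside the scope of the criterion). So ``built from the same diagonal data $\mathfrak{Re}(\tilde a_{\sigma_\biota}^{ii})$'' is not enough: you need a real triangularization to invoke the Liao machinery.

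What the paper does instead is to construct, in Section~\ref{se5.2}, a \emph{real} upper-triangularization by the moving-frame (Gram--Schmidt) method of Liao~\cite{Liao63}: over each base point $[\biota,\tau]$ and each orthonormal frame $\bb\in\digamma_{\!n}^\natural$ one gets a real upper-triangular system $(\mathds{R}_{[\biota,\tau],\bb})$ that is \emph{orthogonally} conjugate to the original linear part (Lemma~\ref{lem5.11}), so the perturbation bound transfers with constant $1$ rather than your $\kappa=\|\mathbb{T}\|\,\|\mathbb{T}^{-1}\|$. The price is that the triangularized system depends on the extra variable $\bb$, so one must lift the ergodic measure $\mathcal{P}_{\vec{\alpha}}$ to an ergodic $\mathfrak{P}_{\vec{\alpha}}$ on the compact frame bundle $\mathrm{S}(\varSigma_{\!N}^+)\times\digamma_{\!n}^\natural$ (Lemma~\ref{lem5.13}) and run the Birkhoff-type argument there. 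The diagonal entries become the Liao qualitative functions $\bomega_k$ (Definition~\ref{def5.16}), their $\mathfrak{P}_{\vec{\alpha}}$-averages give the exponent (Lemma~\ref{lem5.18}), and a strengthened ergodic theorem (Theorem~\ref{thm5.19}) yields the negative Liao-type exponent uniformly over a set $Z_\varepsilon$ of $\mathfrak{P}_{\vec{\alpha}}$-measure $\ge1-\varepsilon$ (Corollary~\ref{cor5.22}); Theorem~\ref{thm5.3} then applies verbatim. In short: your plan is correct in outline, but you must replace Lie's complex triangularization by the real moving-frame triangularization over the frame bundle, which is where the real work of the section lies.
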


Indeed, for proving the statement (2) of Theorem~\ref{thm1.2}, we need only to use Theorem~\ref{thm5.1} with $f_i(x,t)=B_i(x)u(t)$ for reasonable external input $u(t)\in\mathbb{R}^{m_{\sigma_\biota(t)}}$.

Furthermore, by $(\ref{eq3.5})^\prime$ this theorem is equivalent to the following proposition.

{\sc \textbf{Proposition} 5.2}$^\prime.$%%%%
\;{\it Assume that $\vec{\alpha}=(\alpha_1,\dotsc,\alpha_N)\in\mathbb{R}^N$ is a positive probability vector. Let $\mathcal{A}=\{A_1,\dotsc,A_N\}\subset\mathbb{R}^{n\times n}$ be solvable over $\mathbbm{C}$ and the linear equation
\begin{equation*}
\dot{x}=(\alpha_1A_1+\cdots+\alpha_NA_N)x,\quad x(0)\in\mathbb{R}^n\textrm{ and }t\in\mathbb{R}_+\quad
\textrm{is exponentially stable.} \leqno{(\star)}
\end{equation*}
Then for any sufficiently small $\varepsilon>0$, one can find a Borel subset $Z\subset\mathrm{S}(\varSigma_{\!N}^+)$
with $\mathcal{P}_{\vec{\alpha}}(Z)\ge1-\varepsilon$ and a constant $\bbdelta>0$ such that for each driving point $[\biota,\tau]\in Z$, the switching system
\begin{equation*}
\dot{x}=A_{\sigma_{\biota}(\tau+t)}x+f_{\sigma_{\biota}(\tau+t)}(x, t),\quad x(0)\in\mathbb{R}^n\textrm{ and }t\in\mathbb{R}_+
\end{equation*}
is globally exponentially stable, if $f_1(x,t), \dotsc, f_N(x,t)$ satisfy (\ref{eq5.1}) with $\bL<\bbdelta$.
}%

Moreover, from Proposition~\ref{prop4.3} we easily see that this proposition follows from the following more general result.

\begin{prop}\label{prop5.2}%%%%
Assume $\vec{\alpha}=(\alpha_1,\dotsc,\alpha_N)\in\mathbb{R}^N$ is a positive probability vector, and let $\mathcal{A}=\{A_1,\dotsc,A_N\}\subset\mathbb{R}^{n\times n}$ be solvable over $\mathbbm{C}$. If the linear switched systems
\begin{equation*}
\dot{x}=A_{\sigma_\biota(\tau+t)}x,\quad x\in\mathbb{R}^n\textrm{ and } t\in\mathbb{R}_+
\end{equation*}
are exponentially stable for $\mathcal{P}_{\vec{\alpha}}$-a.s.~$[\biota,\tau]\in\mathrm{S}(\varSigma_{\!N}^+)$,
then for any sufficiently small $\varepsilon>0$, one can find a Borel subset $Z\subset\mathrm{S}(\varSigma_{\!N}^+)$
with $\mathcal{P}_{\vec{\alpha}}(Z)\ge1-\varepsilon$ and a constant $\bbdelta>0$ such that for each driving point $[\biota,\tau]\in Z$, the switching system
\begin{equation}\label{eq5.3}
\dot{x}=A_{\sigma_{\biota}(\tau+t)}x+f_{\sigma_{\biota}(\tau+t)}(x, t),\quad x\in\mathbb{R}^n\textrm{ and }t\in\mathbb{R}_+
\end{equation}
is globally exponentially stable, if $f_1(x,t), \dotsc, f_N(x,t)$ satisfy (\ref{eq5.1}) with $\bL<\bbdelta$.
\end{prop}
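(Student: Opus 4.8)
The plan is to reduce global exponential stability of the perturbed switched system to a uniform (over a positive-measure set of driving points) exponential decay estimate for the linear cocycle $\varPhi_{[\biota,\tau]}(t)$, and then absorb the perturbation $f$ by a Gronwall-type argument. The obstacle flagged in Remark~\ref{rem4.4} is that mere almost-sure stability of the linear systems gives, for each driving point, a constant $C_{[\biota,\tau]}$ and a rate $\lambda_{[\biota,\tau]}<0$ with $\|\varPhi_{[\biota,\tau]}(t)\|\le C_{[\biota,\tau]}e^{\lambda_{[\biota,\tau]}t}$, but neither the constant nor the rate is controlled uniformly; a Gronwall estimate against a perturbation of size $\bL$ needs $C\bL$ small, so an unbounded $C$ destroys the argument. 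Hence the first, and main, step is to produce a \emph{uniform} bound.

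First I would pass to the upper-triangular coordinates $z=\mathbb{T}x$ supplied by Lie's theorem (Lemma~\ref{lem4.1}), in which the cocycle has diagonal entries $\exp(\int_0^t\mathfrak{Re}(\tilde a^{ii}_{\sigma_\biota(\tau+s)})\,\mathrm{d}s)$ and the off-diagonal growth is governed, as in the proof of Theorem~\ref{thm2.1}, by the diagonal ones. By the sufficiency part of Theorem~\ref{thm4.2} (equivalently Proposition~\ref{prop4.3}), for $\mathcal{P}_{\vec\alpha}$-a.s.\ $[\biota,\tau]$ one has $\vartheta_i([\biota,\tau])=\sum_j\alpha_j\mathfrak{Re}(\tilde a^{ii}_j)=:-3a<0$ for all $i$, where $2a>0$ is a fixed margin. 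Define, for $i=1,\dots,n$ and $T>0$, the ``Liao-type'' quantity
\begin{equation*}
S_T^i([\biota,\tau])=\sup_{0\le s\le t}\left\{-\int_s^t\mathfrak{Re}\bigl(\tilde a^{ii}_{\sigma_\biota(\tau+r)}\bigr)\,\mathrm{d}r\;:\;0\le t\le T\right\}^{-}\!,
\end{equation*}
more precisely I would control, for each fixed margin $a>0$, the set of ``bad times'' where a partial average $\frac1{t-s}\int_s^t\mathfrak{Re}(\tilde a^{ii}_{\sigma_\biota(\tau+r)})\,\mathrm{d}r$ exceeds $-2a$. By the Birkhoff ergodic theorem applied to the ergodic semiflow $(\mathrm{S}(\varSigma_{\!N}^+),\varTheta,\mathcal{P}_{\vec\alpha})$ of Lemma~\ref{lem3.1}, together with a maximal-ergodic-theorem / Kac-type argument (this is the point where the tool from \cite{DHX10,DHX11} enters), the set of driving points for which these bad-time excursions are \emph{uniformly bounded in length by some $\ell$}, for all $i$ simultaneously, has measure $\ge 1-\varepsilon$ once $\ell=\ell(\varepsilon)$ is large. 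On this set $Z$ one gets a clean estimate $\|\widetilde\varPhi_{[\biota,\tau]}(t)\|\le C_0\,e^{-at}$ for all $t\ge0$, with $C_0=C_0(\ell,a,n,\mathbb{T},\max_i\|A_i\|)$ \emph{independent} of $[\biota,\tau]\in Z$ — the uniformity coming precisely from the uniform excursion length. Transferring back through $\mathbb{T}$ gives $\|\varPhi_{[\biota,\tau]}(t)\|\le C_1 e^{-at}$ on $Z$.

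With the uniform linear estimate in hand, the perturbation step is routine: any absolutely continuous solution $x(t)$ of \eqref{eq5.3} satisfies the variation-of-constants inequality
\begin{equation*}
\|x(t)\|\le \|\varPhi_{[\biota,\tau]}(t)x(0)\|+\int_0^t\|\varPhi_{\varTheta(s,[\biota,\tau])}(t-s)\|\,\|f_{\sigma_\biota(\tau+s)}(x(s),s)\|\,\mathrm{d}s,
\end{equation*}
and since $Z$ can be arranged to be $\varTheta_{[0,1]}$-saturated (or one simply notes the estimate holds along the whole orbit because the excursion-length bound is orbit-invariant up to a bounded additive correction), $\|\varPhi_{\varTheta(s,\cdot)}(t-s)\|\le C_1 e^{-a(t-s)}$ too. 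Using $\|f_i(x,s)\|\le \bL\|x\|$ with $\bL<\bbdelta$ one obtains $e^{at}\|x(t)\|\le C_1\|x(0)\|+C_1\bL\int_0^t e^{as}\|x(s)\|\,\mathrm{d}s$, whence by Gronwall $\|x(t)\|\le C_1\|x(0)\|e^{(C_1\bL-a)t}$. Choosing $\bbdelta:=a/(2C_1)$ forces the exponent below $-a/2<0$ for every $\bL<\bbdelta$ and every $[\biota,\tau]\in Z$, which is the claimed global exponential stability; forward completeness of solutions is automatic from this a priori bound. The crux of the whole argument is the first step — extracting a single constant $C_0$ valid on a set of measure $1-\varepsilon$ — and that is exactly what the Liao-type exponent machinery of \cite{DHX10,DHX11} is designed to deliver; the rest is bookkeeping.
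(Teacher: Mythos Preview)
Your variation-of-constants/Gronwall step requires the bound $\|\varPhi_{\varTheta(s,[\biota,\tau])}(t-s)\|\le C_1e^{-a(t-s)}$ with a \emph{single} constant $C_1$ valid for \emph{every} $s\ge0$. You try to obtain this from a ``bounded excursion length'' condition on the diagonal integrals, claiming the resulting set $Z$ has measure $\ge1-\varepsilon$ and is (essentially) forward-invariant. This is where the argument breaks. Write $g_i(t)=\int_0^t\mathfrak{Re}(\tilde a^{ii}_{\sigma_\biota(\tau+r)})\,\mathrm{d}r+2at$; an interval $[s,t]$ is ``bad'' iff $g_i(t)-g_i(s)>0$. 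By the ergodic theorem $g_i(t)/t\to-a$, but $g_i$ is the time-integral of a bounded stationary process with negative mean, so the draw-up $\sup_{t\ge0}\bigl(g_i(t)-\min_{0\le s\le t}g_i(s)\bigr)$ is $+\infty$ for $\mathcal{P}_{\vec\alpha}$-a.e.\ driving point (along the orbit the reflected process is recurrent and visits arbitrarily high levels). Hence the set of $[\biota,\tau]$ for which \emph{all} bad intervals have length $\le\ell$ has measure zero for every $\ell$, not $1-\varepsilon$. If instead you restrict to intervals with $s=0$ the set does have large measure, but it is not forward-invariant, and the Gronwall inequality then needs $C_1=C_1(s)$, which destroys the choice of a uniform $\bbdelta$. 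Invoking ``the tool from \cite{DHX10,DHX11}'' at this point does not help: that tool is precisely Theorem~\ref{thm5.3}, a stability criterion via Liao-type exponents, not a device producing a uniform cocycle bound.

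The paper's proof takes a genuinely different route to avoid exactly this obstruction. Instead of Lie's fixed complex triangularization (which, as Remark~\ref{rem5.4} notes, is incompatible with Theorem~\ref{thm5.3}), it uses the Liao \emph{real} upper-triangularization via the moving orthonormal frames $\mathfrak{F}^\natural_{[\biota,\tau]}(t,\bb)$, producing a real switched upper-triangular system $(\mathds{R}_{[\biota,\tau],\bb})$ on the frame bundle $\mathrm{S}(\varSigma_N^+)\times\digamma_n^\natural$. An ergodic lift $\mathfrak{P}_{\vec\alpha}$ of $\mathcal{P}_{\vec\alpha}$ to this bundle (Lemma~\ref{lem5.13}) together with a strengthened Birkhoff theorem (Theorem~\ref{thm5.19}/Corollary~\ref{cor5.20}) yields, on a set $Z_\varepsilon$ of measure $\ge1-\varepsilon$, a negative \emph{Liao-type exponent} $\bbchi_*^+([\biota,\tau],\bb)<0$ for a fixed block length $2^\ell$. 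The perturbation is then absorbed not by Gronwall but by Theorem~\ref{thm5.3}, which converts $\bbchi_*^+<0$ directly into global exponential stability of the perturbed system and supplies a $\bbdelta$ depending only on the uniform data $(\bbC,\bbchi_{\vec\alpha}^+,2^\ell)$. The point is that the Liao-type exponent is a block-averaged $\limsup$ quantity that is controllable on a large set \emph{without} any uniform-in-$s$ transition bound; Theorem~\ref{thm5.3} does internally the work your Gronwall step cannot.
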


From here on, we let $\mathcal{A}=\{A_1,\dotsc,A_N\}$ be an arbitrarily given collection of real $n\times n$ matrices, $N\ge2$. The remaining part of this section will be devoted to proving Proposition~\ref{prop5.2}.

%%%%%%%%%%%%%%%%%%%%%%%%%%%%%%%
\subsection{Liao-type exponents and a criterion of stability}\label{sec5.1}%
In this subsection, by introducing the so-called Liao-type exponent,
we will provide a criterion of asymptotic
exponential stability for a kind of deterministic switching systems
that are defined by switching the following infinite number of non-autonomous
subsystems:
\begin{equation}\label{eq5.4}
\dot{x}=S_i(t)x+F_i(x,t),\quad(t,x)\in\mathbb{R}_+\times\mathbb{R}^n;\qquad
i\in\mathbb{N},
\end{equation}
where, for each index $i\in\mathbb{N}=\{1,2,\dotsc\}$,
$S_i(t)=\left[S_i^{jk}(t)\right]_{1\le j,k\le
n}\in\mathbb{R}^{n\times n}$ is continuous upper-triangular and $F_i(x,t)\in\mathbb{R}^n$
is continuous with respect to $x$ and Borel-measurable in $t$, such that
\begin{equation*}
\|S_i(t)x\|\le\bbalpha\|x\|\quad\textrm{and}\quad\|F_i(x,t)\|\le
\bbell(t)\|x\|\qquad\forall x\in\mathbb{R}^n
\end{equation*}
uniformly for $i\in\mathbb{N}$, where $\bbalpha, \bbell(t)$ both are
independent of the indices $i\in\mathbb{N}$. We note that is is not required that $F_i(x,t)$ be Lipschitz with respect to $x\in\mathbb{R}^n$.

Given any constant $\mathrm{T}_{\!*}>0$, we let $\bbsigma\colon\mathbb{R}_+\rightarrow\mathbb{N}$ be an arbitrarily
given $\mathrm{T}_{\!*}$-switching signal piecewise constant
with a switching-time sequence $\{T_k\}_0^\infty\colon
0=T_0<T_1<T_2<\cdots$ with $T_k\to+\infty$; that is to say,
\begin{equation*}
\bbsigma(t)\equiv i_k\in\mathbb{N} \textrm{ whenever }T_{k-1}<t\le T_k
\quad
\textrm{and}
\quad
T_k-T_{k-1}\le \mathrm{T}_{\!*}\quad\forall k\in\mathbb{N}.
\end{equation*}
Then, $\bbsigma(t)$ defines a quasi-linear switching system
\begin{equation*}
\dot{x}(t)=S_{\bbsigma(t)}(t)x+F_{\bbsigma(t)}(x, t),\qquad(t,x)\in\mathbb{R}_+\times\mathbb{R}^n.\leqno{\mathbf{(S,F)}_{\bbsigma}}
\end{equation*}
To prove Proposition~\ref{prop5.2} we will present a criterion for the exponential stability of this type of switched dynamical systems here.
%%%%%%%%%%%%%%%%%%%%%%%%%%%%%%%%%%%%%%%%%%%%%%%%%%%%%%%%%%%%%%%%%%%%%%%%%%
\subsubsection{Liao-type exponents of $\mathbf{(S,F)}_{\bbsigma}$}\label{sec5.1.1}%
Let $\{\mathbf{k}_m\}_{m=0}^{+\infty}$ be an arbitrarily given integer sequence
such that
\begin{equation*}
\mathbf{k}_0=0\quad \textrm{and}\quad 1\le \mathbf{k}_{m}-\mathbf{k}_{m-1}\le\pmb{\Delta}
\quad\forall m\in\mathbb{N},
\end{equation*}
where $\pmb{\Delta}$ is a positive integer. According to \cite{DHX10, DHX11}, associated to this sequence $\{\mathbf{k}_m\}_0^\infty$,
the real number
\begin{equation*}
\bbchi_*^+(\mathbf{S}_{\bbsigma}):=\limsup_{m\to+\infty}\frac{1}{T_{\mathbf{k}_m}}\sum_{i=0}^{m-1}\max_{1\le
j\le
n}\left\{\int_{T_{\mathbf{k}_i}}^{T_{\mathbf{k}_{i+1}}}S_{\bbsigma(t)}^{jj}(t)\,\mathrm{d}t\right\}
\end{equation*}
is called a {\it Liao-type exponent} of $\mathbf{(S,F)}_{\bbsigma}$.

Clearly, to different integer sequences $\{\mathbf{k}_m\}_{m=0}^{+\infty}$, one may get
different Liao-type exponents for the same switched system $\mathbf{(S,F)}_{\bbsigma}$. See \cite{DHX10, DHX11}.

From the above definition, it is easy to see that
$\bbchi_*^+(\mathbf{S}_{\bbsigma})$ is independent of the
``perturbation term" $\mathbf{F}=\{F_i(x,t)\}_{i\in\mathbb{N}}$. Let us see the
linear switching system
\begin{equation*}
\dot{z}(t)=S_{\bbsigma(t)}(t)z(t),\qquad
z(0)\in\mathbb{R}^n\textrm{ and }t\in\mathbb{R}_+\leqno{\mathbf{S}_{\bbsigma}}
\end{equation*}
as the ``linear approximation" of $\mathbf{(S,F)}_{\bbsigma}$. Then from Theorem~\ref{thm2.1},
\begin{equation*}
\bbchi^+(\mathbf{S}_{\bbsigma})=\max_{1\le j\le
n}\left\{\limsup_{T\to+\infty}\frac{1}{T}\int_0^T
S_{\bbsigma(t)}^{jj}(t)\,\mathrm{d}t\right\}
\end{equation*}
is the (maximal) Lyapunov exponent of the linear system
$\mathbf{S}_{\bbsigma}$. From the definitions above, we have
$\bbchi^+(\mathbf{S}_{\bbsigma})\le\bbchi_*^+(\mathbf{S}_{\bbsigma})$. And, in general,
$\bbchi^+(\mathbf{S}_{\bbsigma})\lvertneqq\bbchi_*^+(\mathbf{S}_{\bbsigma})$. See \cite{DHX11} for an explicit example.
%%%%%%%%%%%%%%%%%%%%%%%%%%%%%%%%%%%%%%%%%%%%%%%%%%%%%%%%%%%%%%%%%%
\subsubsection{Criterion of exponential stability}\label{sec5.1.2}%
We now can formulate the exponential stability criterion for $\mathbf{(S,F)}_{\bbsigma}$ via the so-called Liao-type exponent
as follows:

\begin{thm}[{\cite[Theorem~3.1]{DHX11}}]\label{thm5.3}%%%
Let $S_i(t)\in\mathbb{R}^{n\times n}$ be upper-triangular for each
$i\in\mathbb{N}$, and assume that $\mathbf{S}_{\bbsigma}$ has the Liao-type
exponent $\bbchi_*^+(\mathbf{S}_{\bbsigma})<0$ associated to a $\pmb{\Delta}$-sequence
$\{\mathbf{k}_m\}_{m=0}^{+\infty}$. Then, there exists a constant $\bbdelta>0$ such that, whenever $\bbell(t)\le
\bbdelta$ for $t$ sufficiently large, the switching system
$\mathbf{(S,F)}_{\bbsigma}$ is globally, asymptotically, exponentially stable.
\end{thm}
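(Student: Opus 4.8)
The plan is to reduce the quasilinear switched system $\mathbf{(S,F)}_{\bbsigma}$ to a Gronwall-type comparison along the block sequence $\{T_{\mathbf{k}_m}\}$ determined by the $\pmb{\Delta}$-sequence, and then to exploit the negativity of the Liao-type exponent $\bbchi_*^+(\mathbf{S}_{\bbsigma})$ to obtain a geometric decay rate at the sampled times $T_{\mathbf{k}_m}$, finally upgrading this to genuine exponential decay on all of $\mathbb{R}_+$. The point of passing to the blocks is that, on each block $(T_{\mathbf{k}_i},T_{\mathbf{k}_{i+1}}]$, the linear part $\mathbf{S}_{\bbsigma}$ is upper-triangular with a principal matrix we understand very explicitly from the proof of Theorem~\ref{thm2.1}: the diagonal entries grow like $\exp(\int S_{\bbsigma}^{jj})$, and the off-diagonal entries are controlled by iterated integrals of the diagonal ones. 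The quantity $\max_{1\le j\le n}\int_{T_{\mathbf{k}_i}}^{T_{\mathbf{k}_{i+1}}}S_{\bbsigma(t)}^{jj}(t)\,\mathrm{d}t$ appearing in $\bbchi_*^+$ is exactly the dominant exponent governing the norm of the block transition matrix $\varPsi_i$ of $\mathbf{S}_{\bbsigma}$ from time $T_{\mathbf{k}_i}$ to $T_{\mathbf{k}_{i+1}}$.

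First I would fix notation: let $\varPsi_i$ be the transition matrix of the \emph{linear} system $\mathbf{S}_{\bbsigma}$ over the $i$-th block. Using the uniform bound $\|S_i(t)x\|\le\bbalpha\|x\|$ and the fact that each block has length at most $\pmb{\Delta}\,\mathrm{T}_{\!*}$, one gets a uniform bound $\|\varPsi_i\|\le M_0:=e^{\bbalpha\pmb{\Delta}\mathrm{T}_{\!*}}$, and likewise a uniform bound on $\|\varPsi_i^{-1}\|$; more importantly, from the explicit triangular formulas (as in the proof of Theorem~\ref{thm2.1}, but now on a bounded interval with all integrals absolutely convergent) one obtains an estimate of the shape
\begin{equation*}
\|\varPsi_i\|\le C_0\,\exp\!\left(\max_{1\le j\le n}\int_{T_{\mathbf{k}_i}}^{T_{\mathbf{k}_{i+1}}}S_{\bbsigma(t)}^{jj}(t)\,\mathrm{d}t\right),
\end{equation*}
with $C_0$ depending only on $n,\bbalpha,\pmb{\Delta},\mathrm{T}_{\!*}$ (the polynomial-in-block-length factors coming from the off-diagonal iterated integrals are absorbed into $C_0$ since block lengths are bounded). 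Writing $\beta_i$ for that max-integral, the hypothesis $\bbchi_*^+(\mathbf{S}_{\bbsigma})<0$ says precisely that $\limsup_{m\to\infty}\frac{1}{T_{\mathbf{k}_m}}\sum_{i=0}^{m-1}\beta_i<0$; pick $\lambda>0$ with $\bbchi_*^+(\mathbf{S}_{\bbsigma})<-\lambda<0$, so that $\sum_{i=0}^{m-1}\beta_i\le -\lambda\,T_{\mathbf{k}_m}+C_1$ for all $m$ and some constant $C_1$.

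Next I would handle the nonlinear perturbation by the variation-of-constants formula on each block. Writing $x(t)$ for a solution of $\mathbf{(S,F)}_{\bbsigma}$ and using $\|F_i(x,t)\|\le\bbell(t)\|x\|$ together with $\bbell(t)\le\bbdelta$ for $t$ large, a Gronwall argument on the $i$-th block gives, for all $i$ past some threshold,
\begin{equation*}
\|x(T_{\mathbf{k}_{i+1}})\|\le \|\varPsi_i\|\,e^{M_0\bbdelta\pmb{\Delta}\mathrm{T}_{\!*}}\,\|x(T_{\mathbf{k}_i})\|\le C_0\,e^{\beta_i}\,e^{M_0\bbdelta\pmb{\Delta}\mathrm{T}_{\!*}}\,\|x(T_{\mathbf{k}_i})\|.
\end{equation*}
Multiplying these inequalities over $i=m_0,\dots,m-1$ and inserting $\sum\beta_i\le-\lambda T_{\mathbf{k}_m}+C_1$ yields $\|x(T_{\mathbf{k}_m})\|\le C_2\,\exp\big((-\lambda+M_0\bbdelta\pmb{\Delta}\mathrm{T}_{\!*}/\mathrm{T}_{\!*}+\,\cdots)T_{\mathbf{k}_m}\big)\|x(T_{\mathbf{k}_{m_0}})\|$ — that is, choosing $\bbdelta$ small enough that the total perturbation contribution per unit time is $<\lambda/2$, we get $\|x(T_{\mathbf{k}_m})\|\le C_2\,e^{-(\lambda/2)T_{\mathbf{k}_m}}\,\|x(0)\|$ (after absorbing the finitely many initial blocks into $C_2$). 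Finally, to pass from the sampled times to all $t$: for $T_{\mathbf{k}_m}< t\le T_{\mathbf{k}_{m+1}}$ one bounds $\|x(t)\|\le e^{(\bbalpha+\bbdelta)(t-T_{\mathbf{k}_m})}\|x(T_{\mathbf{k}_m})\|\le e^{(\bbalpha+\bbdelta)\pmb{\Delta}\mathrm{T}_{\!*}}\,\|x(T_{\mathbf{k}_m})\|$, and since $t-T_{\mathbf{k}_m}$ is bounded this only changes the multiplicative constant, giving $\|x(t)\|\le C_3\,e^{-(\lambda/2)t}\|x(0)\|$ for all $t\ge0$, i.e. global exponential stability. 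I would also note that absence of a Lipschitz hypothesis on $F_i$ is harmless here because every bound above uses only the linear growth $\|F_i(x,t)\|\le\bbell(t)\|x\|$; non-uniqueness of solutions is accommodated by proving the estimate for \emph{every} solution.

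The main obstacle is the first block estimate $\|\varPsi_i\|\le C_0 e^{\beta_i}$ with a constant $C_0$ \emph{uniform in $i$}: one must verify that the off-diagonal entries of the triangular block transition matrix, which by the Theorem~\ref{thm2.1} formulas involve nested integrals $\int c^{jk}(s)\exp(\int_s^t(\text{diagonal}))\,\mathrm{d}s$, do not introduce a factor growing with $i$. This works because each block has length at most $\pmb{\Delta}\mathrm{T}_{\!*}$ and $\|S_i(t)\|$ is uniformly bounded, so the nested-integral factors are bounded by a constant depending only on $n,\bbalpha,\pmb{\Delta},\mathrm{T}_{\!*}$; but getting the dependence on $\beta_i$ exactly right (rather than on $\sum_j\int S^{jj}_{\bbsigma}$ or on the individual diagonal integrals, which could partially cancel) requires carefully tracking which diagonal block dominates and relative-size estimates $\|\phi_{kk}\|/\|\phi_{jj}\|$ exactly as in the $n=2$ computation of Theorem~\ref{thm2.1}, now done on a bounded interval. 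This is the step I would write out most carefully; everything after it is routine Gronwall bookkeeping.
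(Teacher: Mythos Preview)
The paper does not prove Theorem~\ref{thm5.3}: it is quoted from \cite[Theorem~3.1]{DHX11} and used as a black box. So there is no in-paper argument to compare against, and the only question is whether your plan is correct.

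There is a genuine gap, and it sits exactly where you flagged ``the main obstacle''---but the difficulty is not what you think. Your block estimate $\|\varPsi_i\|\le C_0\,e^{\beta_i}$ with a \emph{uniform} $C_0=C_0(n,\bbalpha,\pmb{\Delta},\mathrm{T}_{\!*})$ is indeed obtainable; the problem is what happens when you multiply it over $m$ blocks. You get
\[
\|x(T_{\mathbf{k}_m})\|\;\le\;C_0^{\,m}\,\exp\Big(\sum_{i=0}^{m-1}\beta_i\Big)\cdot e^{(\text{Gronwall factor})}\,\|x(T_{\mathbf{k}_0})\|,
\]
and for $n\ge 2$ the off-diagonal iterated integrals force $C_0>1$. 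The extra growth rate $(m\log C_0)/T_{\mathbf{k}_m}$ need not be dominated by $|\bbchi_*^+(\mathbf{S}_{\bbsigma})|$: there is no lower bound on the switching-interval lengths in the hypotheses, and even when block lengths are comparable, a careful version of your own $n=2$ computation gives only $\|\varPsi_i\|\le e^{\beta_i+cL_i}$ with $c$ of order $\bbalpha$, so that $\prod\|\varPsi_i\|\le e^{\sum\beta_i+cT_{\mathbf{k}_m}}$ and you would need $\bbchi_*^++c<0$, which is not assumed. In short, passing to $\prod\|\varPsi_i\|$ via submultiplicativity throws away precisely the information that distinguishes the Liao-type exponent from the ordinary Lyapunov exponent; your scheme would prove the theorem only under the stronger hypothesis $\bbchi_*^++(\text{a fixed positive constant})<0$.

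The standard repair is a diagonal rescaling, not a sharper bound on $\varPsi_i$. Conjugate by $D_\varepsilon=\mathrm{diag}(1,\varepsilon,\dotsc,\varepsilon^{n-1})$: the diagonal of each $S_i(t)$ is unchanged, while the $(j,k)$ entry for $j<k$ is multiplied by $\varepsilon^{k-j}$. Choosing $\varepsilon$ small makes the strictly upper-triangular part of $D_\varepsilon S_i(t)D_\varepsilon^{-1}$ uniformly $\le\eta$ for any prescribed $\eta>0$ (at the cost of inflating the perturbation bound on $F$ by a fixed factor depending on $\varepsilon,n$). Now split the rescaled system as (diagonal) $+$ (off-diagonal $+$ $F$): the \emph{diagonal} block transition matrix has norm exactly $e^{\beta_i}$, with no multiplicative constant, and the off-diagonal part---now of size $\le\eta$---is absorbed into the Gronwall step alongside $F$. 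Your block-by-block argument then goes through verbatim with $C_0=1$; one first fixes $\eta$ (hence $\varepsilon$) small relative to $|\bbchi_*^+|$, and then $\bbdelta$ small relative to both. This rescaling is the essential idea behind the Liao-type criterion, and is presumably how \cite{DHX11} proceeds.
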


\textit{Note: }Here the constant $\bbdelta>0$ is
independent of $\mathbf{F}=\{F_i(x,t)\}_{i\in\mathbb{N}}$ but it actually depends upon the constants $\bbalpha, \pmb{\Delta}, \mathrm{T}_{\!*}$, and the Liao-type exponent $\bbchi_*^+$.

In addition, this criterion overcomes the lack of the $\gamma$-condition mentioned in Remark~\ref{rem4.4} in Section~\ref{sec4}.

\begin{remark}\label{rem5.4}%%%
We notice here that Lie's theorem (Lemma~\ref{lem4.1}) deduces a simple upper-triangularization of $\mathcal{A}$ via a complex coordinates transformation $\mathbb{T}$.
For complex $\mathbf{S}=\{S_i(t)\}_{i\in\mathbb{N}}$, the proof of Theorem~\ref{thm5.3} above presented in \cite{DHX11}, however, is invalid. So, we will need to pursue another natural and real upper-triangularization for $\mathcal{A}$ which having negative Liao-type exponents for sufficiently many switching signals $\sigma_\biota(t)$.
\end{remark}
%%%%%%%%%%%%%%%%%%%%%%%%%%%%%%%
\subsection{Frame skew-product semiflows and upper-triangularization}\label{se5.2}%
Let $\mathcal{A}=\{A_1,\dotsc,A_N\}\subset\mathbb{R}^{n\times n}$ be any given.
To use the criterion of stability in terms of Liao-type exponents, we will, in this subsection, introduce a natural triangularization for $\mathcal{A}$ different from Lie's theorem.

\subsubsection{Preliminaries}\label{sec5.2.1}%
For each matrix $A\in\mathcal{A}$, it gives rise to the time-invariant continuous-time dynamical system
\begin{equation*}
\dot{x}(t)=Ax(t),\quad x\in\mathbb{R}^n\textrm{ and }t\in\mathbb{R}. \leqno{(\mathbf{A})}
\end{equation*}
Equivalently, it induces a continuous-time linear flow on the state space $\mathbb{R}^n$:
\begin{equation*}
\varPhi_{\!A}\colon\mathbb{R}\times\mathbb{R}^n\rightarrow\mathbb{R}^n;\quad(t,x)\mapsto e^{tA}x. \leqno{(\mathbf{A})^\prime}
\end{equation*}
Sometimes, we identify $\varPhi_{\!A}(t,x)$ with $\varPhi_{\!A}(t)x$ for all $(t,x)\in\mathbb{R}\times\mathbb{R}^n$.
Importantly for our goal, it can further induce frame flows as follows.

Let $\vec{b}_1,\dotsc,\vec{b}_n\in\mathbb{R}^n$ be arbitrary $n$ vectors. Then $\bb=(\vec{b}_1,\dotsc,\vec{b}_n)$ is called an {\it orthogonal $n$-frame} of $\mathbb{R}^n$, provided that
$\bb$ forms an orthogonal basis of the vector space $\mathbb{R}^n$, i.e., $\langle\vec{b}_i,\vec{b}_j\rangle=0$ and $\|\vec{b}_i\|\not=0$ for $1\le i\not=j\le n$; if, in addition, $\|\vec{b}_i\|=1$ for every $1\le i\le n$, then $\bb$ is called an {\it orthonormal $n$-frame} of $\mathbb{R}^n$. Write $\digamma_{\!n}$ and $\digamma_{\!n}^\natural$ as the sets of all orthogonal and orthonormal $n$-frames $\bb$ of $\mathbb{R}^n$ inherited
topologies from $\mathbb{R}^{n\times n}$, respectively. Clearly, $\digamma_{\!n}^\natural$ is a subspace of $\digamma_{\!n}$ and moreover $\digamma_{\!n}^\natural$ is compact, but $\digamma_{\!n}$ is not.

For the convenience of our discussion later, we denote the classical Gram-Schmidt orthonormalization procedure by ``$\mathrm{Ort}^\natural$"; that is to say, for each collection of independent vectors $\bu=(\vec{u}_1,\dotsc,\vec{u}_n)$,
$\mathrm{Ort}^\natural(\bu)=(\vec{v}_1^\natural,\dotsc,\vec{v}_n^\natural)\in\digamma_{\!n}^\natural$ is defined in this manner:
\begin{align*}
\vec{v}_1^\natural&=\frac{\vec{u}_1}{\|\vec{u}_1\|},\\
\vec{v}_2^\natural&=\frac{\vec{u}_2-\langle\vec{u}_2,\vec{v}_1^\natural\rangle\vec{v}_1^\natural}{\|\vec{u}_2-\langle\vec{u}_2,\vec{v}_1^\natural\rangle\vec{v}_1^\natural\|},\\
\vdots&\quad\vdots\qquad\vdots\\
\vec{v}_n^\natural&=\frac{\vec{u}_n-\sum\limits_{k=1}^{n-1}\langle\vec{u}_n,\vec{v}_k^\natural\rangle\vec{v}_k^\natural}{\|\vec{u}_n-\sum\limits_{k=1}^{n-1}\langle\vec{u}_n,\vec{v}_k^\natural\rangle\vec{v}_k^\natural\|}.
\end{align*}
Similarly, the classical Gram-Schmidt orthogonalization procedure is denoted by ``$\mathrm{Ort}$"; that is to say, $\mathrm{Ort}(\bu)=(\vec{v}_1,\dotsc,\vec{v}_n)\in\digamma_{\!n}$ is defined by
\begin{align*}
\vec{v}_1&=\vec{u}_1,\\
\vec{v}_2&=\vec{u}_2-\langle\vec{u}_2,\vec{v}_1^\natural\rangle\vec{v}_1^\natural,\\
\vdots&\quad\vdots\qquad\vdots\\
\vec{v}_n&=\vec{u}_n-\sum_{k=1}^{n-1}\langle\vec{u}_n,\vec{v}_k^\natural\rangle\vec{v}_k^\natural.
\end{align*}
By induction on the dimension of the state space $\mathbb{R}^n$, we can represent the above Gram-Schmidt procedure in terms of matrices as follows: Under the canonical $n$-frame/basis of $\mathbb{R}^n$, $\be=(\vec{e}_1,\dotsc,\vec{e}_n)$, where
\begin{equation*}
\vec{e}_1=\left(\begin{matrix}1\\0\\ \vdots\\0\end{matrix}\right)\in\mathbb{R}^n,\;\dotsc,\;\vec{e}_n=\left(\begin{matrix}0\\ \vdots\\0\\ 1\end{matrix}\right)\in\mathbb{R}^n,
\end{equation*}
we may see $\bu=(\vec{u}_1,\dotsc,\vec{u}_n)$ as an $n\times n$ matrix with columns $\vec{u}_1,\dotsc,\vec{u}_n$ and also $\mathrm{Ort}^\natural(\bu)$ as an $n\times n$ matrix with columns
$\vec{v}_1^\natural,\dotsc, \vec{v}_n^\natural$; then
\begin{equation}\label{eq5.5}
\mathrm{Ort}^\natural(\bu)=\bu\varPsi^{-1},\quad \textrm{where }\varPsi\in\mathbb{R}^{n\times n} \textrm{ is unique and upper-triangular}.
\end{equation}
For any $n$-frame $\bb=(\vec{b}_1,\dotsc,\vec{b}_n)\in\digamma_{\!n}$ and any $t\in\mathbb{R}$, let $$\varPhi_{\!A}(t,\bb)=(\varPhi_{\!A}(t,\vec{b}_1),\dotsc,\varPhi_{\!A}(t,\vec{b}_n)),$$
which still is an independent collection of vectors, but not necessarily belong to $\digamma_{\!n}$. So, for any $\bb\in\digamma_{\!n}$ and any $t\in\mathbb{R}$, we let
\begin{equation*}
\mathfrak{F}_{\!A}(t,\bb)=\mathrm{Ort}(\varPhi_{\!A}(t,\bb))\in\digamma_{\!n}\quad \textrm{and} \quad\mathfrak{F}_{\!A}^\natural(t,\bb)=\mathrm{Ort}^\natural(\varPhi_{\!A}(t,\bb))\in\digamma_{\!n}^\natural.
\end{equation*}
One can easily check that
\begin{equation}\label{eq5.6}
\mathfrak{F}_{\!A}\colon\mathbb{R}\times\digamma_{\!n}\rightarrow\digamma_{\!n}\quad \textrm{and}\quad\mathfrak{F}_{\!A}^\natural\colon\mathbb{R}\times\digamma_{\!n}^\natural\rightarrow\digamma_{\!n}^\natural
\end{equation}
both are flows in the classical sense. Specially, we should observe that $\mathfrak{F}_{\!A}(t,\bb)$ and $\mathfrak{F}_{\!A}^\natural(t,\bb)$ both are smooth with respect to the time-variable $t\in\mathbb{R}$.

Inspirited by Liao's original work~\cite{Liao63} for $\mathrm{C}^1$-class vector fields on compact Riemannian manifolds, we next introduce the so-called qualitative functions associated to $\mathcal{A}$.

\begin{defn}\label{def5.5}%%%
The continuous functions
\begin{align*}
&\Omega_k\colon\digamma_{\!n}^\natural\times\mathcal{A}\rightarrow\mathbb{R}\qquad(k=1,\dotsc,n)\\
\intertext{given by}
&\Omega_k(\bb,A)=\left.\frac{d}{dt}\right|_{t=0}\|\mathrm{col}_k(\mathfrak{F}_{\!A}(t,\bb))\|\quad\forall (\bb,A)\in\digamma_{\!n}^\natural\times\mathcal{A},
\end{align*}
are called the ``Liao qualitative functions of $\mathcal{A}$''.
\end{defn}

Here
\begin{equation*}
\mathrm{col}_k\colon\bb=(\vec{b}_1,\dotsc,\vec{b}_n)\mapsto\vec{b}_k\quad\forall k=1,\dotsc,n.
\end{equation*}
Note here that, since $\varPhi_{\!A}(t,x)$ is continuously differentiable with respect to $t\in\mathbb{R}$, the above continuous functions are well defined.  Since $\digamma_{\!n}^\natural$ is compact, all $\Omega_k$ are bounded. From the definition, we easily have
\begin{equation}\label{eq5.7}
\log\|\mathrm{col}_k(\mathfrak{F}_{\!A}(T,\bb))\|=\int_0^T\Omega_k(\mathfrak{F}_{\!A}^\natural(t,\bb),A)\, \mathrm{d}t\quad (k=1,\dotsc,n)
\end{equation}
for all $T\ge0$ and any $\bb\in\digamma_{\!n}^\natural$. This shows that these functions $\Omega_k$ are closely related to the Lyapunov exponents. In fact, $\limsup_{T\to\infty}\frac{1}{T}\int_0^T\Omega_1(\mathfrak{F}_{\!A}^\natural(t,\bb),A)\, \mathrm{d}t$ is just a Lyapunov exponent of $A$.

Under the canonical orthonormal basis
$\be=(\vec{e}_1,\dotsc,\vec{e}_n)$ of $\mathbb R^n$, we view $z$ in $\mathbb R^n$
as a column vector with components $z^1,\dotsc,z^n$ and
$\bb\in\digamma_{\!n}^\natural$ as an $n$-by-$n$ orthogonal matrix
with columns $\mathrm{col}_1\bb,\dotsc,\mathrm{col}_n\bb$,
successively; in addition, we sometimes identify a linear transformation of $\mathbb{R}^n$ into itself with an $n\times n$ matrix.

Given any orthonormal $n$-frame $\bb\in\digamma_{\!n}^\natural$, we define, by linear extension, the linear orthogonal
transformation of $\mathbb{R}^n$ into itself
\begin{equation}\label{eq5.8}
\mathbb T_{\bb}\colon \mathbb{R}^n\rightarrow
\mathbb{R}^n
\end{equation}
in the way: $\vec{e}_k\mapsto{\mathrm{col}}_k\bb$ for $1\le k\le
n$, such that
\begin{equation*}
\mathbb T_{\bb}(z)=\bb
z:=\sum_{k=1}^nz^k\mathrm{col}_k\bb\quad\textrm{and}\quad \|z\|=
\|\mathbb{T}_{\bb}(z)\|\quad\forall z\in\mathbb{R}^n.
\end{equation*}
Now we define a family of linear
transformations
\begin{equation}\label{eq5.9}
\varPsi_{\!{A,\bb}}(t,\bcdot)=\mathbb T_{\mathfrak{F}_{\!A}^\natural(t,\bb)}^{-1}\left(\varPhi_{\!A}(t,\mathbb{T}_{\bb}(\bcdot))\right)\colon\mathbb R^n\rightarrow\mathbb R^n\quad\forall
t\in\mathbb{R},
\end{equation}
where $\varPhi_{\!A}(t,\bcdot)$ is defined by the equation $(\mathbf{A})$ as before. Then there holds the following commutativity:
\begin{equation}\label{eq5.10}
\begin{CD}
\mathbb R^n@>{\varPsi_{\!{A,\bb}}(t,\bcdot)}>>\mathbb{R}^n\\
@V{\mathbb T_{\bb}}VV @VV {\mathbb{T}_{\mathfrak{F}_{\!A}^\natural(t,\bb)}}V\\
\mathbb R^n@>{\varPhi_{\!A}(t,\bcdot)}>>\mathbb R^n
\end{CD}\qquad\qquad \forall t\in\mathbb{R}.
\end{equation}
Equivalently, in terms of matrices,
\begin{equation*}
\varPhi_{\!A}(t,\bb)=\mathfrak{F}_{\!A}^\natural(t,\bb)\varPsi_{\!{A,\bb}}(t,\be)\quad\forall t\in\mathbb{R}.\leqno{(\ref{eq5.10})^\prime}
\end{equation*}
That is to say,
\begin{equation*}
\mathrm{Ort}^\natural(\varPhi_{\!A}(t,\bb))=\varPhi_{\!A}(t,\bb)\varPsi_{\!{A,\bb}}^{-1}(t,\be).
\end{equation*}
We now think of $\varPsi_{A,\bb}(t,\bcdot)$ as an $n$-by-$n$ nonsingular real
matrix. Then, $\varPsi_{\!{A,\bb}}^{-1}(t)$ is upper-triangular from (\ref{eq5.5}) and hence $\varPsi_{\!{A,\bb}}(t)$ is so. Clearly, $\frac{d}{dt}\varPsi_{A,\bb}(t,\bcdot)$ makes sense since
$\varPhi_{\!A}(t,\bcdot)$ is a smooth linear flow, and we have from (\ref{eq5.10})
\begin{equation}\label{eq5.11}
\varPsi_{\!{A,\bb}}(t+s,\bcdot)=\varPsi_{\!{A,{\mathfrak{F}_{\!A}^\natural(t,\bb)}}}\left(s,
\varPsi_{\!{A,\bb}}(t,\bcdot)\right)\quad\forall t,s\in\mathbb{R}.
\end{equation}
Put
\begin{subequations}\label{eq5.12}
\begin{align}
R_{A,\bb}(0)&=\left.\frac{d}{dt}\right|_{t=0}^{}\varPsi_{\!{A,\bb}}(t,\bcdot)\in\mathbb{R}^{n\times n}\\
\intertext{and} R_{A,\bb}(t)&=R_{A,\mathfrak{F}_{\!A}^\natural(t,\bb)}(0)\quad\forall t\in\mathbb{R}\\
\intertext{for any $\bb\in\digamma_{\!n}^\natural$. Then}
R_{A,\bb}(t+t_1)&=R_{A,\mathfrak{F}_{\!A}^\natural(t,\bb)}(t_1)\quad\forall t,t_1\in\mathbb{R}\label{eq5.12c}
\end{align}
\end{subequations}
for any $A\in\mathcal{A}$ and any $\bb\in\digamma_{\!n}^\natural$.

\begin{defn}\label{defn5.6}%%%
For any $(A,\bb)\in \mathcal{A}\times\digamma_{\!n}^\natural$, the linear equation
\begin{equation*}
\dot{z}(t)=R_{A,\bb}(t)z(t),\quad(t,z)\in\mathbb{R}\times\mathbb{R}^n
\leqno{(R_{A,\bb})}
\end{equation*}
is called
the ``Liao triangularization system of $A$'' under the moving
frames $\mathfrak{F}_{\!A}^\natural(\mathbb{R},\bb)$.
\end{defn}

We will need the following basic results.

\begin{lem}\label{lem5.7}%%%
Given any $(A,\bb)\in \mathcal{A}\times\digamma_{\!n}^\natural$,
the linear system $(R_{A,\bb})$ has the following three properties:
\begin{enumerate}
\item[$\mathrm{(1)}$] Upper-triangularity: for any $t\in\mathbb{R}$,
$R_{A,\bb}(t)$ is upper-triangular with diagonal elements
\begin{equation*}
R_{A,\bb}^{kk}(t)=\Omega_k(\mathfrak{F}_{\!A}^\natural(t,\bb),A)\quad \textrm{for }k=1,\dotsc,n.
\end{equation*}

\item[$\mathrm{(2)}$] Geometric property: for any $z_0\in\mathbb{R}^n$, $z(t,z_0)$ is the solution of
$(R_{A,\bb})$ satisfying the initial condition $z(0,z_0)=z_0$
if and only if
\begin{equation*}
\varPhi_{\!A}(t, \bb z_0)=\mathfrak{F}_{\!A}^\natural(t,\bb)z(t,z_0)
\quad\forall t\in\mathbb{R}.
\end{equation*}
Particularly, $\varPsi_{\!{A,\bb}}(t,\bcdot)$ is the principal matrix
of $(R_{A,\bb})$ with $\varPsi_{\!{A,\bb}}(0)=\mathrm{Id}_{\mathbb{R}^n}$.

\item[$\mathrm{(3)}$] Boundedness: there is a constant $\bbC<\infty$ such that
\begin{equation*}
\sup_{t\in\mathbb{R}}\left\{\sum_{1\le i,j\le n}|R_{A,\bb}^{ij}(t)|\right\}\le\bbC
\end{equation*}
uniformly for $\bb\in\digamma_{\!n}^\natural$.
\end{enumerate}
\end{lem}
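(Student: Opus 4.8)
The plan is to read off all three properties from the defining relations $(\ref{eq5.10})^\prime$, $(\ref{eq5.11})$ and $(\ref{eq5.12})$, taking the geometric statement~(2) as the backbone, and to derive early an explicit formula for $R_{A,\bb}(0)$ that then feeds the diagonal computation in~(1) and the uniform bound in~(3). Throughout, the only analytic input needed is that $(t,A,x)\mapsto\varPhi_{\!A}(t,x)=e^{tA}x$ is jointly smooth and that the Gram--Schmidt map $\mathrm{Ort}^\natural$ is smooth on the set of independent frames; along the orbit $t\mapsto\varPhi_{\!A}(t,\bb)$ this set is never left, so $t\mapsto\mathfrak{F}_{\!A}^\natural(t,\bb)$ and $t\mapsto\varPsi_{\!{A,\bb}}(t,\bcdot)$ are $C^1$.

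First I would prove~(2). Viewing $\varPsi_{\!{A,\bb}}(t,\bcdot)$ as a matrix, $(\ref{eq5.11})$ reads $\varPsi_{\!{A,\bb}}(t+s,\bcdot)=\varPsi_{\!{A,\mathfrak{F}_{\!A}^\natural(t,\bb)}}(s,\bcdot)\varPsi_{\!{A,\bb}}(t,\bcdot)$; differentiating in $s$ at $s=0$ and using $(\ref{eq5.12})$ gives $\frac{d}{dt}\varPsi_{\!{A,\bb}}(t,\bcdot)=R_{A,\bb}(t)\varPsi_{\!{A,\bb}}(t,\bcdot)$, while $(\ref{eq5.10})^\prime$ at $t=0$ gives $\varPsi_{\!{A,\bb}}(0,\bcdot)=\mathrm{Id}_{\mathbb R^n}$. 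Hence $\varPsi_{\!{A,\bb}}(t,\bcdot)$ is the principal matrix of $(R_{A,\bb})$, the solution through $z_0$ is $z(t,z_0)=\varPsi_{\!{A,\bb}}(t,z_0)$, and $(\ref{eq5.10})^\prime$ rewrites as $\varPhi_{\!A}(t,\bb z_0)=\mathfrak{F}_{\!A}^\natural(t,\bb)z(t,z_0)$; the converse implication follows from uniqueness of solutions of the linear system $(R_{A,\bb})$, whose coefficient is continuous in $t$ by $(\ref{eq5.12c})$ and the smoothness of the frame flow.

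Next I would settle~(1). By $(\ref{eq5.5})$ the transition matrices produced by $\mathrm{Ort}^\natural$ are upper-triangular, so $\varPsi_{\!{A,\bb}}(t,\bcdot)$ is upper-triangular for every $t$; differentiating at $t=0$ makes $R_{A,\bb}(0)$ upper-triangular, and $(\ref{eq5.12c})$ then makes $R_{A,\bb}(t)=R_{A,\mathfrak{F}_{\!A}^\natural(t,\bb)}(0)$ upper-triangular for all $t$. For the diagonal I would differentiate $(\ref{eq5.10})^\prime$ at $t=0$: since $\mathfrak{F}_{\!A}^\natural(0,\bb)=\bb$ as an orthogonal matrix and $\bb^{\mathrm T}\frac{d}{dt}\big|_0\mathfrak{F}_{\!A}^\natural(t,\bb)$ is skew-symmetric by orthogonality of the frame flow, one obtains $\bb^{\mathrm T}A\bb=R_{A,\bb}(0)+\bigl(\text{skew-symmetric}\bigr)$, i.e. $R_{A,\bb}(0)$ is exactly the upper-triangular summand of $\bb^{\mathrm T}A\bb$ in the unique splitting ``skew-symmetric $+$ upper-triangular''; in particular $R_{A,\bb}^{kk}(0)=\langle\mathrm{col}_k\bb,\,A\,\mathrm{col}_k\bb\rangle$. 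On the other hand, comparing $\mathrm{Ort}$ with $\mathrm{Ort}^\natural$ shows that the $k$-th diagonal entry of $\varPsi$ in $(\ref{eq5.5})$ is the length of the $k$-th column of $\mathrm{Ort}(\bcdot)$, so $\varPsi_{\!{A,\bb}}^{kk}(t,\be)=\|\mathrm{col}_k(\mathfrak{F}_{\!A}(t,\bb))\|$ and therefore $R_{A,\bb}^{kk}(0)=\frac{d}{dt}\big|_0\|\mathrm{col}_k(\mathfrak{F}_{\!A}(t,\bb))\|=\Omega_k(\bb,A)$ of Definition~\ref{def5.5}. Replacing $\bb$ by $\mathfrak{F}_{\!A}^\natural(t,\bb)$ and using $(\ref{eq5.12c})$ gives $R_{A,\bb}^{kk}(t)=\Omega_k(\mathfrak{F}_{\!A}^\natural(t,\bb),A)$, which is consistent with $(\ref{eq5.7})$.

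Finally,~(3) is a compactness statement: the ``skew $+$ upper-triangular'' formula just derived exhibits $(A,\bb)\mapsto R_{A,\bb}(0)$ as a continuous map on $\mathcal A\times\digamma_{\!n}^\natural$, so with $\mathcal A$ finite and $\digamma_{\!n}^\natural$ compact the number $\bbC:=\sup\{\sum_{1\le i,j\le n}|R_{A,\bb}^{ij}(0)|:(A,\bb)\in\mathcal A\times\digamma_{\!n}^\natural\}$ is finite, and $(\ref{eq5.12c})$ together with $\mathfrak{F}_{\!A}^\natural(t,\bb)\in\digamma_{\!n}^\natural$ propagates this bound to all $t\in\mathbb R$, uniformly in $\bb$. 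The step I expect to be the real obstacle is the differentiability used in~(2) and~(3) --- that $t\mapsto\varPsi_{\!{A,\bb}}(t,\bcdot)$ and $t\mapsto\mathfrak{F}_{\!A}^\natural(t,\bb)$ are $C^1$ with derivatives jointly continuous in $(A,\bb)$ --- since $\mathrm{Ort}^\natural$ is only piecewise-defined and its smoothness must be invoked on the correct (open, full) domain along the orbit. Deriving the explicit formula for $R_{A,\bb}(0)$ at the outset sidesteps most of this, so I would carry out that computation before anything else.
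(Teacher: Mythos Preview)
Your argument is correct and complete. The paper itself does not supply a proof of this lemma: it simply writes ``This lemma comes immediately from \cite[Lemma~8]{Dai09} and so we omit the details here.'' So your self-contained derivation from the defining relations $(\ref{eq5.10})^\prime$, $(\ref{eq5.11})$, $(\ref{eq5.12})$ is in fact \emph{more} than what the paper provides in situ.

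A brief comparison: your route---differentiating the cocycle identity $(\ref{eq5.11})$ to get the principal-matrix statement, then extracting $R_{A,\bb}(0)$ as the upper-triangular part of $\bb^{\mathrm T}A\bb$ via the skew/upper-triangular splitting of the derivative of $(\ref{eq5.10})^\prime$, and finally reading off~(3) by continuity and compactness---is exactly the kind of direct computation one would expect the cited Liao-type lemma to contain. The explicit formula $R_{A,\bb}^{kk}(0)=\langle\mathrm{col}_k\bb,\,A\,\mathrm{col}_k\bb\rangle$ that you derive is the standard one, and your verification that it matches $\Omega_k(\bb,A)$ through either the diagonal of $\varPsi$ or the direct Gram--Schmidt calculation is sound. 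The smoothness concern you flag at the end is not a genuine obstacle here, since $\varPhi_{\!A}(t,\bb)=e^{tA}\bb$ stays in the open set of invertible matrices where $\mathrm{Ort}^\natural$ is $C^\infty$; so the joint continuity of $(A,\bb)\mapsto R_{A,\bb}(0)$ that you need for~(3) is immediate.
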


\begin{proof}
This lemma comes immediately from \cite[Lemma~8]{Dai09} and so we omit the details here.
\end{proof}

As $\mathfrak{F}_{\!A}^\natural(t,\bb)$ is an orthonormal $n$-frame of $\mathbb{R}^n$, it follows from the statement (2) of Lemma~\ref{lem5.7} that $(\mathbf{A})$ has the same stability as $(R_{A,\bb})$ for any $\bb\in\digamma_{\!n}^\natural$.

%%%%%%%%%%%%%%%%%%%%%%%%%%%%%%%%%%%%%%%
\subsubsection{Real upper-triangularization of $\mathcal{A}$}\label{sec5.2.2}%
For any $A\in\mathcal{A}$, we have considered its induced dynamics in the last Subsection~\ref{sec5.2.1}. We will now turn to the switching of these dynamics. That is to say, we consider the linear dynamical system
\begin{equation*}
\mathfrak{S}_{\theta,\mathcal{A}}\colon[0,\infty)\times\mathrm{S}(\varSigma_{\!N}^+)\times\mathbb{R}^n\rightarrow\mathrm{S}(\varSigma_{\!N}^+)\times\mathbb{R}^n;\quad(t,([\biota,\tau],x))\mapsto([\biota,\tau+t], \varPhi_{[\biota,\tau]}(t,x)),
\end{equation*}
which is the linear skew-product semiflow, driven by the symbolic semiflow
\begin{equation*}
\varTheta\colon[0,\infty)\times\mathrm{S}(\varSigma_{\!N}^+)\rightarrow\mathrm{S}(\varSigma_{\!N}^+);\quad (t,[\biota,\tau])\mapsto[\biota,\tau+t],
\end{equation*}
defined in the same manner as in Lemma~\ref{lem3.2}. We notice here that $\varPhi_{[\biota,\tau]}(t)$ is piecewise smooth with respect to $t\in\mathbb{R}_+$.

Using the Gram-Schmidt procedures as in Subsection~\ref{sec5.2.1}, we have got that for all $[\biota,\tau]\in\mathrm{S}(\varSigma_{\!N}^+)$ with $0\le\tau<1$ and $\biota=(\biota_k)_{k=1}^{+\infty}\in\varSigma_{\!N}^+$,
\begin{equation}\label{eq5.13}
\mathfrak{F}_{\![\biota,\tau]}(t,\bb):=\mathrm{Ort}(\varPhi_{[\biota,\tau]}(t,\bb))\in\digamma_{\!n}\quad\forall \bb\in\digamma_{\!n}\textrm{ and }t>0
\end{equation}
is such that
\begin{equation*}
\mathfrak{F}_{\![\biota,\tau]}(t,\bb)=
\begin{cases}
\mathfrak{F}_{\!{A_{\biota_1}}}(t,\bb)& \textrm{if }0<t\le1-\tau,\\
\mathfrak{F}_{\!{A_{\biota_2}}}(t-1+\tau,\mathfrak{F}_{\!{A_{\biota_1}}}(1-\tau,\bb))& \textrm{if }1<t+\tau\le2,\\
\mathfrak{F}_{\!{A_{\biota_{k+1}}}}(t-k+\tau,\mathfrak{F}_{\!{A_{\biota_k}}}(1,\dotsc,\mathfrak{F}_{\!{A_{\biota_1}}}(1-\tau,\bb)\dotsc))& \textrm{if }k<t+\tau\le k+1;
\end{cases}
\end{equation*}
and
\begin{equation}\label{eq5.14}
\mathfrak{F}_{\![\biota,\tau]}^\natural(t,\bb):=\mathrm{Ort}^\natural(\varPhi_{[\biota,\tau]}(t,\bb))\in\digamma_{\!n}^\natural\quad\forall \bb\in\digamma_{\!n}^\natural
\end{equation}
is such that
\begin{equation*}
\mathfrak{F}_{\![\biota,\tau]}^\natural(t,\bb)=
\begin{cases}
\mathfrak{F}_{\!{A_{\biota_1}}}^\natural(t,\bb)& \textrm{if }t\in(0,1-\tau],\\
\mathfrak{F}_{\!{A_{\biota_2}}}^\natural(t-1+\tau,\mathfrak{F}_{\!{A_{\biota_1}}}^\natural(1-\tau,\bb))& \textrm{if }t\in(1,2]-\tau,\\
\mathfrak{F}_{\!{A_{\biota_{k+1}}}}^\natural(t-k+\tau,\mathfrak{F}_{\!{A_{\biota_k}}}^\natural(1,\dotsc,\mathfrak{F}_{\!{A_{\biota_1}}}^\natural(1-\tau,\bb)\dotsc))& \textrm{if }t\in(k,k+1]-\tau.
\end{cases}
\end{equation*}
As in (\ref{eq5.6}), one can easily observe that
\begin{equation}\label{eq5.15}
\mathfrak{F}\colon[0,\infty)\times\mathrm{S}(\varSigma_{\!N}^+)\times\digamma_{\!n}\rightarrow\mathrm{S}(\varSigma_{\!N}^+)\times\digamma_{\!n};\quad(t,([\biota,\tau],\bb))\mapsto([\biota,\tau+t],\mathfrak{F}_{\![\biota,\tau]}(t,\bb))
\end{equation}
{and}
\begin{equation}\label{eq5.16}
\mathfrak{F}^\natural\colon[0,\infty)\times\mathrm{S}(\varSigma_{\!N}^+)\times\digamma_{\!n}^\natural\rightarrow\mathrm{S}(\varSigma_{\!N}^+)\times\digamma_{\!n}^\natural; \quad(t,([\biota,\tau],\bb))\mapsto([\biota,\tau+t],\mathfrak{F}_{\![\biota,\tau]}^\natural(t,\bb))
\end{equation}
both are skew-product semiflows, called {\it frame skew-product semiflows}, still driven by the symbolic semiflow $\varTheta$ on $\mathrm{S}(\varSigma_{\!N}^+)$.

Given any $[\biota,\tau]\in\mathrm{S}(\varSigma_{\!N}^+)$ and $\bb\in\digamma_{\!n}^\natural$, similar to Subsection~\ref{sec5.2.1}, we can define a family of
linear isomorphisms/nonsingular matrices:
\begin{equation}\label{eq5.17}
\varPsi_{[\biota,\tau],\bb}(t,\bcdot)=\mathbb T_{\mathfrak{F}_{\![\biota,\tau]}^\natural(t,\bb)}^{-1}\varPhi_{[\biota,\tau]}(t)\mathbb{T}_{\bb}(\bcdot)\colon\mathbb R^n\rightarrow\mathbb R^n\quad\forall
t\in\mathbb{R}_+.
\end{equation}
Then there holds the following commutativity:
\begin{equation}\label{eq5.18}
\begin{CD}
\mathbb R^n@>{\varPsi_{[\biota,\tau],\bb}(t,\bcdot)}>>\mathbb{R}^n\\
@V{\mathbb T_{\bb}}VV @VV {\mathbb{T}_{\mathfrak{F}_{\![\biota,\tau]}^\natural(t,\bb)}}V\\
\mathbb R^n@>{\varPhi_{[\biota,\tau]}(t,\bcdot)}>>\mathbb R^n
\end{CD}\qquad\qquad\;\;\forall t\in\mathbb{R}_+.
\end{equation}
That is to say,
\begin{equation*}
\mathrm{Ort}^\natural(\varPhi_{[\biota,\tau]}(t,\bb))=\varPhi_{[\biota,\tau]}(t,\bb) \varPsi_{[\biota,\tau],\bb}^{-1}(t,\be)\quad\forall t\in\mathbb{R}_+.\leqno{(\ref{eq5.18})^\prime}
\end{equation*}
It is easily seen that $\varPsi_{[\biota,\tau],\bb}(t)$, as an $n$-by-$n$ nonsingular
matrix-valued function of $t$, is smooth at $t$ with $t+\tau\not\in\mathbb{Z}$. Thus, the left-hand side derivative $\frac{d^{-}}{dt}\varPsi_{[\biota,\tau],\bb}(t,\bcdot)$ makes sense at every $t>0$. In addition, the cocycle/semigroup property holds:
\begin{equation}\label{eq5.19}
\varPsi_{[\biota,\tau],\bb}(t+s)=\varPsi_{\mathfrak{F}^\natural(t,([\biota,\tau],\bb))}(s)
\varPsi_{[\biota,\tau],\bb}(t)\quad\forall t,s\in\mathbb{R}_+.
\end{equation}
Similar to (\ref{eq5.12}), we put
\begin{equation}\label{eq5.20}
\mathds{R}_{[\biota,\tau],\bb}(t)=\left\{\frac{d^-}{dt}\varPsi_{[\biota,\tau],\bb}(t)\right\}\varPsi_{[\biota,\tau],\bb}^{-1}(t)\quad\forall t>0,
\end{equation}
for any $[\biota,\tau]\in\mathrm{S}(\varSigma_{\!N}^+)$ with $0\le\tau<1$ and any $\bb\in\digamma_{\!n}^\natural$, noting that we identify $[\biota, \tau+k]$ with $[\theta^k(\biota),\tau]$ for all $k\in\mathbb{N}$.

\begin{defn}\label{defn5.8}%%%
The linear equation
\begin{equation*}
\dot{z}(t)=\mathds{R}_{[\biota,\tau],\bb}(t)z(t),\quad z(0)\in\mathbb{R}^n\textrm{ and }t\in\mathbb{R}_+
\leqno{(\mathds{R}_{[\biota,\tau],\bb})}
\end{equation*}
for any $([\biota,\tau],\bb)\in\mathrm{S}(\varSigma_{\!N}^+)\times\digamma_{\!n}^\natural$, is called
the ``Liao triangularization system of $\mathcal{A}$'' under the moving
frames $\mathfrak{F}_{[\biota,\tau]}^\natural(\mathbb{R}_+,\bb)$.
\end{defn}

We will see that $(\mathds{R}_{[\biota,\tau],\bb})$ is just a switching system of the subsystems $(R_{A_k,\bb})$ for $k=1,\dotsc,N$.

For any $[\biota,\tau]\in\mathrm{S}(\varSigma_{\!N}^+)$, there is no loss of generality in assuming $0\le\tau<1$ from $(\ref{eq3.5})^\prime$. We then observe that
\begin{equation}\label{eq5.21}
\sigma_\biota(\tau+\bcdot)\colon\mathbb{R}_+\rightarrow\{1,\dotsc,N\}
\end{equation}
is such that
\begin{equation*}
\sigma_\biota(\tau+t)=\biota_k\quad \textrm{ if }(k-1)-\tau<t\le k-\tau\textrm{ and }k\in\mathbb{N},\leqno{(\ref{eq5.21})^\prime}
\end{equation*}
where $\biota=(\biota_k)_{k=1}^{+\infty}\in\varSigma_{\!N}^+$.

The following two lemmas are useful for the study of the linear system $(\mathds{R}_{[\biota,\tau],\bb})$.

\begin{lem}\label{lem5.9}%%
For any $([\biota,\tau],\bb)\in\mathrm{S}(\varSigma_{\!N}^+)\times\digamma_{\!n}^\natural$ with $0\le\tau<1$, there holds that
\begin{equation*}
\mathds{R}_{[\biota,\tau], \bb}(t)=\begin{cases}
R_{A_{\biota_1},\bb}(t)&\textrm{if }0<t\le1-\tau,\\
R_{A_{\biota_{k+1}},\mathfrak{F}_{\![\biota,\tau]}^\natural(k-\tau,\bb)}(t-(k-\tau))&\textrm{if }k-\tau<t\le k+1-\tau\textrm{ and }k\in\mathbb{N}
\end{cases}
\end{equation*}
where $R_{A,\bb}(t)$ is as in Definition~\ref{defn5.6}.
\end{lem}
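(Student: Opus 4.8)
The plan is to exploit two structural facts already available in this subsection together with the ``locality in time'' of the generator defined in $(\ref{eq5.20})$. Namely: $\varPhi_{[\biota,\tau]}(t)$ has the explicit piecewise form recalled in Section~\ref{sec3.2}, and $\mathfrak{F}_{\![\biota,\tau]}^\natural(t,\bb)$ has the explicit piecewise form displayed right after $(\ref{eq5.14})$; meanwhile $\mathds{R}_{[\biota,\tau],\bb}(t)=\{\frac{d^-}{dt}\varPsi_{[\biota,\tau],\bb}(t)\}\varPsi_{[\biota,\tau],\bb}^{-1}(t)$ depends only on the behaviour of $\varPsi_{[\biota,\tau],\bb}$ near $t$, so on any subinterval on which the switching signal is constant it can only ``see'' the single matrix and the frame attained at the left endpoint. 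By $(\ref{eq3.5})^\prime$ there is no loss in assuming $0\le\tau<1$.

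First I would handle the initial subinterval $(0,1-\tau]$. There $\sigma_\biota(\tau+\bcdot)\equiv\biota_1$ by $(\ref{eq5.21})^\prime$, hence $\varPhi_{[\biota,\tau]}(t)=\varPhi_{\!A_{\biota_1}}(t)$ and, by the piecewise formula for $\mathfrak{F}_{\![\biota,\tau]}^\natural$, also $\mathfrak{F}_{\![\biota,\tau]}^\natural(t,\bb)=\mathfrak{F}_{\!A_{\biota_1}}^\natural(t,\bb)$; substituting both into the definition $(\ref{eq5.17})$ and comparing with $(\ref{eq5.9})$ gives $\varPsi_{[\biota,\tau],\bb}(t,\bcdot)=\varPsi_{\!A_{\biota_1},\bb}(t,\bcdot)$ on $(0,1-\tau]$. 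Since $\varPsi_{[\biota,\tau],\bb}$ is smooth on the open subinterval and left-continuous at the endpoint, differentiating and right-multiplying by the inverse, and then invoking Lemma~\ref{lem5.7}(2) (that $\varPsi_{\!A,\bb}$ is the principal matrix of $(R_{A,\bb})$) together with $(\ref{eq5.12})$, yields $\mathds{R}_{[\biota,\tau],\bb}(t)=R_{A_{\biota_1},\bb}(t)$ on $(0,1-\tau]$, the first case of the lemma. Taking $\tau=0$ in the identity $\varPsi_{[\biota,\tau],\bb}=\varPsi_{\!A_{\biota_1},\bb}$ just obtained records, for reuse below, that $\varPsi_{[\xi,0],\bb}(s,\bcdot)=\varPsi_{\!A_{\xi_1},\bb}(s,\bcdot)$ for every $\xi\in\varSigma_{\!N}^+$, $\bb\in\digamma_{\!n}^\natural$ and $s\in(0,1]$.

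For a general subinterval, fix $k\in\mathbb{N}$ and $t\in(k-\tau,k+1-\tau]$, set $s=t-(k-\tau)\in(0,1]$ and $\bb^\prime=\mathfrak{F}_{\![\biota,\tau]}^\natural(k-\tau,\bb)$. Applying the cocycle property $(\ref{eq5.19})$ with the splitting $t=(k-\tau)+s$ gives $\varPsi_{[\biota,\tau],\bb}(t)=\varPsi_{\mathfrak{F}^\natural(k-\tau,([\biota,\tau],\bb))}(s)\,\varPsi_{[\biota,\tau],\bb}(k-\tau)$. Now $\mathfrak{F}^\natural(k-\tau,([\biota,\tau],\bb))=([\biota,k],\bb^\prime)=([\theta^k(\biota),0],\bb^\prime)$ and $(\theta^k(\biota))_1=\biota_{k+1}$, so by the $\tau=0$ identity from the previous step the first factor equals $\varPsi_{\!A_{\biota_{k+1}},\bb^\prime}(s)$; thus $\varPsi_{[\biota,\tau],\bb}(t)=\varPsi_{\!A_{\biota_{k+1}},\bb^\prime}(s)\,\varPsi_{[\biota,\tau],\bb}(k-\tau)$ throughout the subinterval. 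The trailing factor $\varPsi_{[\biota,\tau],\bb}(k-\tau)$ is constant in $t$, so forming the left derivative (which exists since $s\mapsto\varPsi_{\!A_{\biota_{k+1}},\bb^\prime}(s)$ is smooth) and right-multiplying by $\varPsi_{[\biota,\tau],\bb}^{-1}(t)=\varPsi_{[\biota,\tau],\bb}^{-1}(k-\tau)\,\varPsi_{\!A_{\biota_{k+1}},\bb^\prime}^{-1}(s)$ makes the constant factor cancel, and $(\ref{eq5.20})$ together with $(\ref{eq5.12})$ and Lemma~\ref{lem5.7}(2) gives $\mathds{R}_{[\biota,\tau],\bb}(t)=R_{A_{\biota_{k+1}},\bb^\prime}(s)=R_{A_{\biota_{k+1}},\mathfrak{F}_{\![\biota,\tau]}^\natural(k-\tau,\bb)}(t-(k-\tau))$, exactly the asserted formula.

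The only delicate point, and the one I would be most careful about, is the bookkeeping at the switching instants $t=k-\tau$: one must apply $(\ref{eq5.19})$ with the correct shifted driving point, in particular verifying that its frame component is precisely $\bb^\prime=\mathfrak{F}_{\![\biota,\tau]}^\natural(k-\tau,\bb)$, and one must note that the left derivative $\frac{d^-}{dt}$ in $(\ref{eq5.20})$ is the right object to form because $\varPsi_{[\biota,\tau],\bb}$ is only piecewise smooth. Once it is seen that the constant trailing factor $\varPsi_{[\biota,\tau],\bb}(k-\tau)$ drops out of $\{\frac{d^-}{dt}\varPsi\}\varPsi^{-1}$, the identification with $R_{A_{\biota_{k+1}},\bb^\prime}$ is immediate from the definitions; I expect no further obstacle, since both $\varPhi_{[\biota,\tau]}$ and $\mathfrak{F}_{\![\biota,\tau]}^\natural$ come with explicit piecewise formulas and the cocycle relations $(\ref{eq5.11})$, $(\ref{eq5.12c})$, $(\ref{eq5.19})$ have already been established.
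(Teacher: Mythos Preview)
Your proof is correct and follows essentially the same route as the paper's own argument: the paper also uses the piecewise form of $\varPhi_{[\biota,\tau]}$ from $(\ref{eq5.21})^\prime$, the piecewise formula $(\ref{eq5.14})$ for $\mathfrak{F}_{\![\biota,\tau]}^\natural$, the defining relation $(\ref{eq5.18})$--$(\ref{eq5.20})$, and Lemma~\ref{lem5.7}(2) to obtain the first case, and then dismisses the second case with ``similarly.'' Your use of the cocycle identity $(\ref{eq5.19})$ to factor $\varPsi_{[\biota,\tau],\bb}(t)=\varPsi_{\!A_{\biota_{k+1}},\bb^\prime}(s)\,\varPsi_{[\biota,\tau],\bb}(k-\tau)$ is precisely the content behind that ``similarly,'' so you have simply made explicit what the paper leaves implicit.
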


\begin{proof}
Since $\varPhi_{[\biota,\tau]}(t)$ is the principal matrix of the switching system
\begin{equation*}
\dot{x}=A_{\sigma_\biota(\tau+t)}x,\quad x\in\mathbb{R}^n\textrm{ and }t\in\mathbb{R}_+,
\end{equation*}
we have from $(\ref{eq5.21})^\prime$
\begin{equation*}
\varPhi_{[\biota,\tau]}(t)=\varPhi_{A_{\biota_1}}(t)\quad \textrm{for }0<t\le1-\tau.
\end{equation*}
So, from (\ref{eq5.14}), (\ref{eq5.18}), (\ref{eq5.20}) and the statement (2) of Lemma~\ref{lem5.7}, we can obtain the first equality of the statement. Similarly, we can prove the second equality.
This thus proves Lemma~\ref{lem5.9}.
\end{proof}

\begin{lem}\label{lem5.10}%%%
For any $([\biota,\tau],\bb)\in\mathrm{S}(\varSigma_{\!N}^+)\times\digamma_{\!n}^\natural$, there holds that
\begin{equation*}
\mathds{R}_{[\biota,\tau], \bb}(t_1+t_2)=\mathds{R}_{\mathfrak{F}^\natural(t_1,([\biota,\tau],\bb))}(t_2)\quad \forall t_1,t_2\in\mathbb{R}_+,
\end{equation*}
where $\mathfrak{F}^\natural(\bcdot,(\bcdot,\bcdot))$ is the frame skew-product semiflow as in (\ref{eq5.16}).
\end{lem}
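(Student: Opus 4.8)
The plan is to establish the cocycle identity for $\mathds{R}_{[\biota,\tau],\bb}$ by differentiating the corresponding cocycle identity for the transition matrices $\varPsi_{[\biota,\tau],\bb}$, which is exactly (\ref{eq5.19}). First I would recall from (\ref{eq5.20}) that $\mathds{R}_{[\biota,\tau],\bb}(t)$ is defined as the logarithmic left-derivative $\bigl(\tfrac{d^-}{dt}\varPsi_{[\biota,\tau],\bb}(t)\bigr)\varPsi_{[\biota,\tau],\bb}^{-1}(t)$, i.e.\ $\varPsi_{[\biota,\tau],\bb}(t)$ is the principal matrix of $(\mathds{R}_{[\biota,\tau],\bb})$. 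The semigroup property (\ref{eq5.19}) reads
\begin{equation*}
\varPsi_{[\biota,\tau],\bb}(t_1+t_2)=\varPsi_{\mathfrak{F}^\natural(t_1,([\biota,\tau],\bb))}(t_2)\,\varPsi_{[\biota,\tau],\bb}(t_1)\qquad\forall t_1,t_2\in\mathbb{R}_+.
\end{equation*}
Fixing $t_1$ and treating both sides as functions of $t_2$, I would take the left-hand $t_2$-derivative. Since $\varPsi_{[\biota,\tau],\bb}(t_1)$ is a constant matrix in $t_2$, the chain rule gives
\begin{equation*}
\frac{d^-}{dt_2}\varPsi_{[\biota,\tau],\bb}(t_1+t_2)=\Bigl(\frac{d^-}{dt_2}\varPsi_{\mathfrak{F}^\natural(t_1,([\biota,\tau],\bb))}(t_2)\Bigr)\varPsi_{[\biota,\tau],\bb}(t_1).
\end{equation*}
Multiplying on the right by $\varPsi_{[\biota,\tau],\bb}^{-1}(t_1+t_2)=\varPsi_{[\biota,\tau],\bb}^{-1}(t_1)\,\varPsi_{\mathfrak{F}^\natural(t_1,([\biota,\tau],\bb))}^{-1}(t_2)$ and cancelling the $\varPsi_{[\biota,\tau],\bb}(t_1)\varPsi_{[\biota,\tau],\bb}^{-1}(t_1)$ factor yields precisely $\mathds{R}_{[\biota,\tau],\bb}(t_1+t_2)=\mathds{R}_{\mathfrak{F}^\natural(t_1,([\biota,\tau],\bb))}(t_2)$, which is the claim.

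The one delicate point, and the step I would treat most carefully, is the handling of the left-derivatives: $\varPsi_{[\biota,\tau],\bb}(t)$ is only piecewise smooth in $t$, with possible corners at the times $t$ with $t+\tau\in\mathbb{Z}$ (where the switching signal $\sigma_\biota(\tau+\cdot)$ jumps), as noted after (\ref{eq5.18})$^\prime$. I would argue that at such a corner $t_1+t_2=k-\tau$, the left-derivative in $t_2$ of the left side corresponds to the left-derivative in the second time-argument of $\varPsi_{\mathfrak{F}^\natural(t_1,\cdot)}$, because adding the fixed $t_1$ merely translates the switching-time lattice; so the chain-rule computation above holds verbatim with $\tfrac{d}{dt}$ replaced by $\tfrac{d^-}{dt}$ throughout. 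One can also bypass this entirely by invoking Lemma~\ref{lem5.9}, which exhibits $\mathds{R}_{[\biota,\tau],\bb}(t)$ explicitly as the piecewise concatenation of the blocks $R_{A_{\biota_{k+1}},\,\mathfrak{F}_{[\biota,\tau]}^\natural(k-\tau,\bb)}(\cdot)$, together with the cocycle property (\ref{eq5.12c}) of $R_{A,\bb}$ and the semiflow property of $\mathfrak{F}^\natural$ from (\ref{eq5.16}); comparing the two piecewise descriptions of $\mathds{R}_{[\biota,\tau],\bb}(t_1+t_2)$ and $\mathds{R}_{\mathfrak{F}^\natural(t_1,([\biota,\tau],\bb))}(t_2)$ interval by interval gives the identity directly. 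I expect the main (if modest) obstacle to be bookkeeping the index shifts in this piecewise comparison — matching which block of the shifted system $\mathfrak{F}^\natural(t_1,([\biota,\tau],\bb))$ corresponds to which block of the original on the overlapping intervals — rather than any analytic difficulty, since the underlying cocycle relations (\ref{eq5.19}) and (\ref{eq5.12c}) are already in hand.
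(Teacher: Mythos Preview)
Your proposal is correct. Your primary route---differentiating the cocycle identity (\ref{eq5.19}) for $\varPsi_{[\biota,\tau],\bb}$ and right-multiplying by the inverse---is a clean direct argument that works uniformly in $t_1,t_2$, and your treatment of the left-derivative is adequate since the only operation is translation in the time variable by a fixed $t_1$. The paper, however, does not take this route: it proceeds exactly along the lines of your ``bypass'' alternative, invoking Lemma~\ref{lem5.9} to write $\mathds{R}_{[\biota,\tau],\bb}(t_1+t_2)$ as the appropriate block $R_{A_{\biota_k},\cdot}$, then using the autonomous cocycle (\ref{eq5.12c}) to shift $t_1$ into the frame argument, and finally identifying the result with $\mathds{R}_{\mathfrak{F}^\natural(t_1,([\biota,\tau],\bb))}(t_2)$ via the definition (\ref{eq5.20}); the case $k=1$ is spelled out and $k\ge2$ is declared analogous. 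Your differentiation argument is shorter and avoids the interval-by-interval bookkeeping you flagged as the main obstacle; the paper's approach, on the other hand, makes the connection to the piecewise structure of Lemma~\ref{lem5.9} explicit, which is conceptually useful for the subsequent Lemma~\ref{lem5.12}.
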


\begin{proof}
Without loss of generality, let $0\le\tau<1$. We can find some $k\in\mathbb{N}$ such that $(k-1)-\tau<t_1+t_2\le k-\tau$.
In the case $k=1$, from Lemma~\ref{lem5.9} and (\ref{eq5.12c}), it follows that
\begin{equation*}
\mathds{R}_{[\biota,\tau],\bb}(t_1+t_2)=R_{A_{\biota_1},\bb}(t_1+t_2)
=R_{A_{\biota_1},\mathfrak{F}_{\!A_{\biota_1}}^\natural(t_1,\bb)}(t_2);
\end{equation*}
on the other hand, for $0<\tau+t_1+t\le 1$ we have
\begin{equation*}
\varPsi_{[\biota,\tau+t_1],\mathfrak{F}_{\!A_{\biota_1}}^\natural(t_1,\bb)}(t)=\varPsi_{A_{\biota_1},\mathfrak{F}_{\!A_{\biota_1}}^\natural(t_1,\bb)}(t)
\end{equation*}
and so by (\ref{eq5.20}) we have
\begin{equation*}
\mathds{R}_{\mathfrak{F}^\natural(t_1,([\biota,\tau],\bb))}(t_2)=\left\{\left.\frac{d^-}{dt}\right|_{t=t_2}\varPsi_{A_{\biota_1},\mathfrak{F}_{\!A_{\biota_1}}^\natural(t_1,\bb)}(t)\right\}\varPsi_{A_{\biota_1},\mathfrak{F}_{\!A_{\biota_1}}^\natural(t_1,\bb)}^{-1}(t_2);
\end{equation*}
then, Lemma~\ref{lem5.7}.(2) follows that $\mathds{R}_{[\biota,\tau],\bb}(t_1+t_2)=\mathds{R}_{\mathfrak{F}^\natural(t_1,([\biota,\tau],\bb))}(t_2)$. For the case $k\ge2$, the statement can be similarly proved by using the semigroup property.

This thus completes the proof of Lemma~\ref{lem5.10}.
\end{proof}

Recall that $\mathds{A}_{[\biota,\tau]}$ is the random matrix defined by (\ref{eq3.5}). From Lemmas~\ref{lem5.7}, \ref{lem5.10} and \ref{lem5.9}, we can obtain the following important results.

\begin{lem}\label{lem5.11}%%%
Given any $([\biota,\tau],\bb)\in\mathrm{S}(\varSigma_{\!N}^+)\times\digamma_{\!n}^\natural$,
the linear system $(\mathds{R}_{[\biota,\tau],\bb})$ possesses the following properties:
\begin{enumerate}
\item[$\mathrm{(1)}$] Upper-triangularity: for any $t>0$, the matrix
$\mathds{R}_{[\biota,\tau],\bb}(t)$ is upper-triangular with diagonal elements
\begin{equation*}
\mathds{R}_{[\biota,\tau],\bb}^{kk}(t)=\Omega_k(\mathfrak{F}_{\![\biota,\tau]}^\natural(t,\bb),\mathds{A}_{[\biota,\tau+t]})\quad \textrm{for }k=1,\dotsc,n.
\end{equation*}

\item[$\mathrm{(2)}$] Geometric property: for any $z_0\in\mathbb{R}^n$, $z(t,z_0)$ is the solution of
$(\mathds{R}_{[\biota,\tau],\bb})$ satisfying the initial condition $z(0,z_0)=z_0$
if and only if
\begin{equation*}
\varPhi_{[\biota,\tau]}(t, \bb z_0
)=\mathfrak{F}_{\![\biota,\tau]}^\natural(t,\bb) z(t,z_0)
\quad\forall t\in\mathbb{R}_+.
\end{equation*}
Particularly, $\varPsi_{[\biota,\tau],\bb}(t,\bcdot)$ is the principal matrix
of $(\mathds{R}_{[\biota,\tau],\bb})$.

\item[$\mathrm{(3)}$] Boundedness: there is a constant $\bbC<\infty$ such that
\begin{equation*}
\sup_{t>0}\left\{\sum_{1\le i,j\le n}|\mathds{R}_{[\biota,\tau],\bb}^{ij}(t)|\right\}\le\bbC
\end{equation*}
uniformly for $\bb\in\digamma_{\!n}^\natural$ and $[\biota,\tau]\in\mathrm{S}(\varSigma_{\!N}^+)$.
\end{enumerate}
\end{lem}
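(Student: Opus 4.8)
The plan is to obtain all three items of Lemma~\ref{lem5.11} by reducing, window by window, to the single‑matrix statements already established in Lemma~\ref{lem5.7}, using Lemma~\ref{lem5.9} to identify $\mathds{R}_{[\biota,\tau],\bb}$ on each switching interval and Lemma~\ref{lem5.10} (equivalently the cocycle identity (\ref{eq5.19})) to patch the intervals together. So I fix $([\biota,\tau],\bb)\in\mathrm{S}(\varSigma_{\!N}^+)\times\digamma_{\!n}^\natural$, with no loss of generality take $0\le\tau<1$, and record that by Lemma~\ref{lem5.9}, on the interval $(k-\tau,k+1-\tau]$ one has $\mathds{R}_{[\biota,\tau],\bb}(t)=R_{A_{\biota_{k+1}},\bb_k}\big(t-(k-\tau)\big)$, where $\bb_k:=\mathfrak{F}_{\![\biota,\tau]}^\natural(k-\tau,\bb)\in\digamma_{\!n}^\natural$ (and $\bb_0=\bb$). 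Thus everything about $\mathds{R}$ restricted to a given window is inherited from $R_{A,\bb'}$ with $A\in\mathcal{A}$ and $\bb'\in\digamma_{\!n}^\natural$.

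For item (1), Lemma~\ref{lem5.7}(1) gives that each $R_{A_{\biota_{k+1}},\bb_k}(s)$ is upper‑triangular with diagonal entries $\Omega_j\big(\mathfrak{F}_{\!A_{\biota_{k+1}}}^\natural(s,\bb_k),A_{\biota_{k+1}}\big)$; then the explicit piecewise description of $\mathfrak{F}_{\![\biota,\tau]}^\natural$ displayed just before (\ref{eq5.14}) (equivalently the semiflow property of $\mathfrak{F}^\natural$ in (\ref{eq5.16})) identifies $\mathfrak{F}_{\!A_{\biota_{k+1}}}^\natural\big(t-(k-\tau),\bb_k\big)$ with $\mathfrak{F}_{\![\biota,\tau]}^\natural(t,\bb)$ on that window, while $(\ref{eq5.21})^\prime$ gives $\mathds{A}_{[\biota,\tau+t]}=A_{\sigma_\biota(\tau+t)}=A_{\biota_{k+1}}$ there. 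Combining these yields the stated formula for the diagonal of $\mathds{R}_{[\biota,\tau],\bb}(t)$, and upper‑triangularity is clearly preserved on each window and hence for all $t>0$. Item (3) is similar: Lemma~\ref{lem5.7}(3) gives a bound $\sup_{s\in\mathbb{R}}\sum_{i,j}|R_{A,\bb'}^{ij}(s)|\le\bbC_A$ uniform in $\bb'\in\digamma_{\!n}^\natural$, and since $\mathcal{A}=\{A_1,\dotsc,A_N\}$ is finite one may take $\bbC=\max_{1\le i\le N}\bbC_{A_i}$; the window‑wise identity of Lemma~\ref{lem5.9} then transports this bound to $\mathds{R}$, uniformly in $\bb\in\digamma_{\!n}^\natural$ and $[\biota,\tau]\in\mathrm{S}(\varSigma_{\!N}^+)$.

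For item (2), the content is really the commutative square (\ref{eq5.18}) together with (\ref{eq5.20}): from (\ref{eq5.17})--(\ref{eq5.18}) one has $\varPhi_{[\biota,\tau]}(t,\bb z_0)=\mathfrak{F}_{\![\biota,\tau]}^\natural(t,\bb)\,\varPsi_{[\biota,\tau],\bb}(t)z_0$, and differentiating and invoking (\ref{eq5.20}) shows $\varPsi_{[\biota,\tau],\bb}(t)$ solves $\dot\varPsi=\mathds{R}_{[\biota,\tau],\bb}(t)\varPsi$ with $\varPsi(0)=\mathrm{Id}_{\mathbb{R}^n}$, hence is the principal matrix of $(\mathds{R}_{[\biota,\tau],\bb})$. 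Then $z(t,z_0):=\varPsi_{[\biota,\tau],\bb}(t)z_0$ is the solution with $z(0,z_0)=z_0$; the ``if'' direction follows by substitution and the ``only if'' direction by uniqueness of solutions of the linear equation, which holds since by item (3) the coefficient $\mathds{R}_{[\biota,\tau],\bb}(\bcdot)$ is bounded and measurable. The one delicate point throughout is that $\varPhi_{[\biota,\tau]}(t)$ — and hence $\varPsi_{[\biota,\tau],\bb}(t)$ — is only piecewise smooth in $t$ (smooth off $\mathbb{Z}-\tau$), so (\ref{eq5.20}) must be read with the left derivative $\frac{d^-}{dt}$, and the identities and estimates above are first established on each open window and then glued by continuity of $\varPsi$ and by Lemma~\ref{lem5.10}. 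I expect this gluing at the switching times to be the only step requiring genuine care; it is handled by the same induction on $k$ that produces the explicit formulas for $\mathfrak{F}_{\![\biota,\tau]}^\natural$ in the first place, so I anticipate no obstacle beyond bookkeeping.
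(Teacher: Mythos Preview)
Your proposal is correct and follows essentially the same approach as the paper: the paper's proof is a terse three-line citation of exactly the ingredients you identify, deriving (1) from Lemma~\ref{lem5.7}(1) together with Lemmas~\ref{lem5.9} and~\ref{lem5.10}, deriving (2) from (\ref{eq5.18}) and (\ref{eq5.20}), and deriving (3) from Lemmas~\ref{lem5.10}, \ref{lem5.9} and~\ref{lem5.7}(3). Your window-by-window unpacking is precisely the content behind those citations, and your remark about handling the switching times via the left derivative and the cocycle identity is the right gloss on the only nontrivial bookkeeping.
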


\begin{proof}
The upper-triangularity follows from the statement (1) of Lemma~\ref{lem5.7}, Lemma~\ref{lem5.9} and Lemma~\ref{lem5.10}. The geometric property follows from (\ref{eq5.18}) and (\ref{eq5.20}). Finally, the boundedness comes from Lemmas~\ref{lem5.10}, \ref{lem5.9} and \ref{lem5.7}.(3).
This thus completes the proof of Lemma~\ref{lem5.11}.
\end{proof}

From the geometric property in the above statement, we can clearly see that for any driving point $[\biota,\tau]\in\mathrm{S}(\varSigma_{\!N}^+)$,
\begin{equation}\label{eq5.22}
\dot{x}(t)=A_{\sigma_\biota(\tau+t)}x(t),\quad x(0)\in\mathbb{R}^n\textrm{ and }t\in\mathbb{R}_+,
\end{equation}
is exponentially stable if and only if so is the Liao upper-triangular system
\begin{equation*}
\dot{z}(t)=\mathds{R}_{[\biota,\tau],\bb}(t)z(t),\quad z(0)\in\mathbb{R}^n\textrm{ and }t\in\mathbb{R}_+\leqno{(\mathds{R}_{[\biota,\tau], \bb})}
\end{equation*}
for any/some $n$-frame $\bb\in\digamma_{\!n}^\natural$.

Comparing with Lie's theorem, our upper-triangular $\mathds{R}_{[\biota,\tau],\bb}(t)$ is real; yet the cost that we pay is that $\mathcal{A}$ becomes time-dependent.
Importantly, from Lemma~\ref{lem5.10} we can define a natural skew-product system
\begin{equation}\label{eq5.23}
\dot{z}(t)=\mathds{R}_{\mathfrak{F}^\natural(t,([\biota,\tau],\bb))}(0)z(t),\quad z\in\mathbb{R}^n\textrm{ and } t\in\mathbb{R}_+,\quad([\biota,\tau],\bb)\in\mathrm{S}(\varSigma_{\!N}^+)\times\digamma_{\!n}^\natural,
\end{equation}
driven by the frame skew-product semiflow $\mathfrak{F}^\natural$ defined in (\ref{eq5.16}).

From Lemma~\ref{lem5.9}, we can obtain the following lemma, which explains $(\mathds{R}_{[\biota,\tau], \bb})$ in terms of switching time-dependent subsystems.

\begin{lem}\label{lem5.12}%%%
Let $([\biota,\tau],\bb)\in\mathrm{S}(\varSigma_{\!N}^+)\times\digamma_{\!n}^\natural$ be any given where $0\le\tau<1$ and write $\bb(t)=\mathfrak{F}_{\![\biota,\tau]}^\natural(t,\bb)$ for all $t\in\mathbb{R}_+$. Put
\begin{equation*}
S_k(t)=\begin{cases}R_{A_{\biota_1},\bb}(t)& \textrm{for }k=1,\\
R_{A_{\biota_k},\bb(k-1-\tau)}(t)& \textrm{for }k=2,3,\dotsc;
\end{cases}
\end{equation*}
and
\begin{equation*}
\bbsigma_{[\biota,\tau]}\colon\mathbb{R}_+\rightarrow\mathbb{N};\quad \bbsigma_{[\biota,\tau]}(t)=k\; \textrm{if }(k-1)-\tau<t\le k-\tau\; \forall k\in\mathbb{N}.
\end{equation*}
Then, $(\mathds{R}_{[\biota,\tau], \bb})$ is the switching system
\begin{equation}\label{eq5.24}
\dot{z}(t)=S_{\bbsigma_{[\biota,\tau]}(t)}(t)z(t),\quad z(0)\in\mathbb{R}^n\textrm{ and }t\in\mathbb{R}_+
\end{equation}
and $\bbsigma_{[\biota,\tau]}$ is a $\mathrm{T}_{\!*}$-switching signal with $\mathrm{T}_{\!*}=1$.
\end{lem}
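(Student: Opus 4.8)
The plan is to unwind the definitions, with Lemma~\ref{lem5.9} as the engine. First I would fix $([\biota,\tau],\bb)\in\mathrm{S}(\varSigma_{\!N}^+)\times\digamma_{\!n}^\natural$ with $0\le\tau<1$, write $\bb(\bcdot)=\mathfrak{F}_{\![\biota,\tau]}^\natural(\bcdot,\bb)$, and recall from Lemma~\ref{lem5.9} the piecewise-in-$t$ description of the coefficient matrix $\mathds{R}_{[\biota,\tau],\bb}(t)$ of $(\mathds{R}_{[\biota,\tau],\bb})$: it equals $R_{A_{\biota_1},\bb}(t)$ on $(0,1-\tau]$ and $R_{A_{\biota_{k+1}},\bb(k-\tau)}\bigl(t-(k-\tau)\bigr)$ on each block $(k-\tau,\,k+1-\tau]$, $k\in\mathbb{N}$. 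Re-indexing $k\mapsto k-1$, the block $\bigl((k-1)-\tau,\,k-\tau\bigr]$ --- which by $(\ref{eq5.21})^\prime$ is exactly the set $\{t>0\colon\sigma_\biota(\tau+t)=\biota_k\}=\{t>0\colon\bbsigma_{[\biota,\tau]}(t)=k\}$ --- carries the coefficient matrix $R_{A_{\biota_k},\bb(k-1-\tau)}$ evaluated at the appropriate instant, which is precisely the $S_k$ of the statement. Lining up the two indexings (Lemma~\ref{lem5.9} labels by $k-\tau<t\le k+1-\tau$ the block carrying $A_{\biota_{k+1}}$, while here $\bbsigma_{[\biota,\tau]}=k$ on $(k-1)-\tau<t\le k-\tau$), I would conclude that $(\mathds{R}_{[\biota,\tau],\bb})$ is precisely the switching system $\dot z(t)=S_{\bbsigma_{[\biota,\tau]}(t)}(t)z(t)$.

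For the claim that $\bbsigma_{[\biota,\tau]}$ is a $\mathrm{T}_{\!*}$-switching signal with $\mathrm{T}_{\!*}=1$, I would observe that the blocks $\bigl\{(k-1)-\tau<t\le k-\tau\bigr\}_{k\in\mathbb{N}}$ partition $\mathbb{R}_+=(0,\infty)$, with switching-time sequence $0<1-\tau<2-\tau<\cdots\to+\infty$; the first block $(0,1-\tau]$ has length $1-\tau\le1$ and every later block has length exactly $1$, so consecutive switching times are at distance $\le1$. Hence $\bbsigma_{[\biota,\tau]}$ is piecewise constant, left-continuous, with finitely many discontinuities on every finite interval, and is a $\mathrm{T}_{\!*}$-switching signal in the sense of Subsection~\ref{sec5.1} with $\mathrm{T}_{\!*}=1$. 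If one wants the output to feed straight into Theorem~\ref{thm5.3}, I would also record along the way that each $S_k$ is upper-triangular and that the family $\{S_k\}_{k\in\mathbb{N}}$ is uniformly bounded, that is, $\sup_{t>0}\sum_{1\le i,j\le n}|S_k^{ij}(t)|\le\bbC$ for a constant $\bbC$ independent of $k$; both are immediate from Lemma~\ref{lem5.7}.(1) and Lemma~\ref{lem5.7}.(3).

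The one point that genuinely needs care --- and really the only place the verification can slip --- is the clock bookkeeping: $\mathds{R}_{[\biota,\tau],\bb}(t)$ is parametrized by the global time $t$, whereas each building block $R_{A_{\biota_k},\bb(k-1-\tau)}(\bcdot)$ is naturally a function of the local time elapsed since the frame $\bb(k-1-\tau)$ was reached. Using the cocycle identities $(\ref{eq5.11})$ and $(\ref{eq5.12c})$ for $R$ together with the flow property of $\mathfrak{F}_{\![\biota,\tau]}^\natural$ over each unit block, one checks that feeding $t-\bigl((k-1)-\tau\bigr)$ into the local-time slot of $R_{A_{\biota_k},\bb(k-1-\tau)}$ reproduces exactly the $k$-th block of $\mathds{R}_{[\biota,\tau],\bb}$ supplied by Lemma~\ref{lem5.9}. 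This translation step is routine but is the heart of the identification $(\mathds{R}_{[\biota,\tau],\bb})\equiv(\ref{eq5.24})$; everything else is a matter of matching indices.
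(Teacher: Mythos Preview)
Your approach is exactly the paper's: Lemma~\ref{lem5.12} is stated there as an immediate consequence of Lemma~\ref{lem5.9} with no further argument, so your unwinding via Lemma~\ref{lem5.9} and $(\ref{eq5.21})^\prime$, together with the direct verification of the $\mathrm{T}_{\!*}=1$ switching structure, is precisely the intended proof and is more detailed than what the paper supplies.

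One caveat on the clock bookkeeping you rightly single out. What Lemma~\ref{lem5.9} gives on the block $((k-1)-\tau,\,k-\tau]$ (after your re-indexing) is $\mathds{R}_{[\biota,\tau],\bb}(t)=R_{A_{\biota_k},\bb(k-1-\tau)}\bigl(t-((k-1)-\tau)\bigr)$, i.e.\ the \emph{local} time in the argument; the lemma's displayed formula for $S_k(t)$, however, carries the \emph{global} $t$. Your last paragraph computes the local-time identity correctly but then declares this establishes $(\mathds{R}_{[\biota,\tau],\bb})\equiv(\ref{eq5.24})$; strictly speaking that holds only with $S_k(t)=R_{A_{\biota_k},\bb(k-1-\tau)}\bigl(t-(k-1-\tau)\bigr)$ for $k\ge2$ (the $k=1$ case matches as written). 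This is a typo-level issue in the statement rather than a flaw in your method --- downstream only the values $S_{\bbsigma(t)}^{jj}(t)=\mathds{R}_{[\biota,\tau],\bb}^{jj}(t)$ along the trajectory enter the Liao-type exponent and Theorem~\ref{thm5.3}, so nothing breaks --- but since you flagged the clock issue as the one place the verification can slip, you should state the resolved formula explicitly rather than call the translation ``routine'' and leave the discrepancy unremarked.
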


We will see that (\ref{eq5.3}) is equivalent to a perturbation of (\ref{eq5.24}) and for ``sufficiently many" driving points $[\biota,\tau]$, (\ref{eq5.24})
has negative Liao-type exponents. This will allow us apply Theorem~\ref{thm5.3} to proving Proposition~\ref{prop5.2}.

%%%%%%%%%%%%%%%%%%%%%%%%%%%%%%%%%%%%%%%%%%%%%%%%%%%%%%%%%%%%%%%%%
\subsection{A spectral theorem}\label{sec5.3}%
Let $\vec{\alpha}=(\alpha_1,\dotsc,\alpha_N)\in\mathbb{R}^N$ be a positive probability vector as in the statement of Proposition~\ref{prop5.2}.
Then, $(\mathrm{S}(\varSigma_{\!N}^+),\varTheta,\mathcal{P}_{\vec{\alpha}})$ is an ergodic semiflow from Lemma~\ref{lem3.1}.
Since $\mathrm{S}(\varSigma_{\!N}^+)\times\digamma_{\!n}^\natural$ is compact, from the lifting lemma of ergodic measures proved in \cite{Dai09}, we can easily obtain
the following ergodicity result.

\begin{lem}\label{lem5.13}%%%
Let $\vec{\alpha}=(\alpha_1,\dotsc,\alpha_N)$ be a positive probability vector. Then there exists an ergodic probability measure $\mathfrak{P}_{\vec{\alpha}}$ on $\mathrm{S}(\varSigma_{\!N}^+)\times\digamma_{\!n}^\natural$ for the frame skew-product semiflow
\begin{equation*}
\mathfrak{F}^\natural\colon[0,\infty)\times\mathrm{S}(\varSigma_{\!N}^+)\times\digamma_{\!n}^\natural\rightarrow\mathrm{S}(\varSigma_{\!N}^+)\times\digamma_{\!n}^\natural
\end{equation*}
such that $\mathcal{P}_{\vec{\alpha}}$ is its marginal measure; that is to say, for any Borel set $B\subset\mathrm{S}(\varSigma_{\!N}^+)$, $\mathcal{P}_{\vec{\alpha}}(B)=\mathfrak{P}_{\vec{\alpha}}(B\times\digamma_{\!n}^\natural)$.
\end{lem}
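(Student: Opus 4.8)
The plan is to obtain $\mathfrak{P}_{\vec{\alpha}}$ as an ergodic component of some $\mathfrak{F}^\natural$-invariant measure lying over $\mathcal{P}_{\vec{\alpha}}$; this is exactly the lifting lemma of \cite{Dai09} applied to the frame skew-product semiflow $\mathfrak{F}^\natural$ of (\ref{eq5.16}), and I record the argument for completeness. Recall from Section~\ref{se5.2} that $\mathrm{S}(\varSigma_{\!N}^+)\times\digamma_{\!n}^\natural$ is compact and metrizable, since $\digamma_{\!n}^\natural$ is compact, that $\mathfrak{F}^\natural$ is a continuous-time semiflow on it, and that the projection $\pi\colon([\biota,\tau],\bb)\mapsto[\biota,\tau]$ semiconjugates $\mathfrak{F}^\natural$ to $\varTheta$, i.e.\ $\pi\circ\mathfrak{F}^\natural(t,\bcdot)=\varTheta(t,\pi(\bcdot))$ for every $t\ge0$.

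First I would study the set $\mathcal{M}_0$ of all $\mathfrak{F}^\natural$-invariant Borel probability measures $\mathfrak{P}$ on $\mathrm{S}(\varSigma_{\!N}^+)\times\digamma_{\!n}^\natural$ with $\pi_*\mathfrak{P}=\mathcal{P}_{\vec{\alpha}}$. It is convex and weak-$*$ compact, being a weak-$*$ closed convex subset of the compact metrizable space of Borel probabilities on the compact metric space $\mathrm{S}(\varSigma_{\!N}^+)\times\digamma_{\!n}^\natural$, the constraint $\pi_*\mathfrak{P}=\mathcal{P}_{\vec{\alpha}}$ being closed because $\mathfrak{P}\mapsto\pi_*\mathfrak{P}$ is weak-$*$ continuous. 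It is also nonempty: fixing any $\bb_0\in\digamma_{\!n}^\natural$ and forming the Krylov--Bogolyubov averages
$$\frac1T\int_0^T\bigl(\mathfrak{F}^\natural(t,\bcdot)\bigr)_*\bigl(\mathcal{P}_{\vec{\alpha}}\otimes\delta_{\bb_0}\bigr)\,\mathrm{d}t,$$
any weak-$*$ limit point $\mathfrak{P}_0$ as $T\to+\infty$ is $\mathfrak{F}^\natural$-invariant; since $\pi$ intertwines the two semiflows and $\pi_*$ is weak-$*$ continuous, while $\mathcal{P}_{\vec{\alpha}}$ is $\varTheta$-invariant by Lemma~\ref{lem3.1}, one gets $\pi_*\mathfrak{P}_0=\mathcal{P}_{\vec{\alpha}}$, hence $\mathfrak{P}_0\in\mathcal{M}_0$.

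Next I would extract an ergodic lift from any $\mathfrak{P}_0\in\mathcal{M}_0$ by the pointwise ergodic theorem. By Birkhoff's theorem for the semiflow $\mathfrak{F}^\natural$, $\mathfrak{P}_0$-almost every point $p=([\biota,\tau],\bb)$ has convergent time-averaged empirical measures, whose limit $\mathfrak{Q}_p$ is an $\mathfrak{F}^\natural$-ergodic measure (the ergodic component of $p$); projecting the orbit of $p$ by $\pi$ shows that $\pi(p)$ then has $\varTheta$-empirical measures converging to $\pi_*\mathfrak{Q}_p$, i.e.\ $\pi(p)$ is $\varTheta$-generic for $\pi_*\mathfrak{Q}_p$. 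On the other hand $\pi_*\mathfrak{P}_0=\mathcal{P}_{\vec{\alpha}}$ is $\varTheta$-ergodic by Lemma~\ref{lem3.1}, so $\mathcal{P}_{\vec{\alpha}}$-almost every base point is $\varTheta$-generic for $\mathcal{P}_{\vec{\alpha}}$; since the distribution of $\pi(p)$ under $\mathfrak{P}_0$ is exactly $\mathcal{P}_{\vec{\alpha}}$ and a point is generic for at most one measure, we conclude $\pi_*\mathfrak{Q}_p=\mathcal{P}_{\vec{\alpha}}$ for $\mathfrak{P}_0$-almost every $p$. Choosing such a $p$ with, in addition, $\mathfrak{Q}_p$ ergodic (also almost sure) and setting $\mathfrak{P}_{\vec{\alpha}}=\mathfrak{Q}_p$ yields an $\mathfrak{F}^\natural$-ergodic probability measure satisfying, for every Borel $B\subset\mathrm{S}(\varSigma_{\!N}^+)$,
$$\mathfrak{P}_{\vec{\alpha}}\bigl(B\times\digamma_{\!n}^\natural\bigr)=\mathfrak{P}_{\vec{\alpha}}(\pi^{-1}B)=\bigl(\pi_*\mathfrak{P}_{\vec{\alpha}}\bigr)(B)=\mathcal{P}_{\vec{\alpha}}(B),$$
which is exactly the assertion of the lemma.

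The weak-$*$ compactness bookkeeping, the Krylov--Bogolyubov averaging, and the pushforward identities are all routine. The only points needing a little care, and what I would regard as the main (though mild) obstacle, are the continuous-time forms of Birkhoff's theorem and of the ergodic decomposition used above; both are classical for continuous semiflows on compact metric spaces, cf.~\cite{Walters82}, and this is precisely the setting treated by the lifting lemma in \cite{Dai09}. It is also here that the compactness of $\digamma_{\!n}^\natural$ is essential, which is the reason for working throughout with orthonormal frames rather than with the noncompact space $\digamma_{\!n}$.
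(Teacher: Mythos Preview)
Your proof is correct and follows essentially the same approach as the paper: both set up the factor map $\pi$ intertwining $\mathfrak{F}^\natural$ with $\varTheta$, note compactness of $\mathrm{S}(\varSigma_{\!N}^+)\times\digamma_{\!n}^\natural$ and ergodicity of $\mathcal{P}_{\vec{\alpha}}$, and then invoke the lifting lemma of \cite{Dai09}. The only difference is that the paper simply cites that lemma, whereas you unpack its proof (Krylov--Bogolyubov plus ergodic decomposition over an ergodic base); the paper also remarks that \cite{Dai09} is stated for flows but that the argument carries over to semiflows, a point you implicitly absorb into your appeal to the continuous-time Birkhoff and ergodic decomposition theorems.
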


\begin{proof}
Let
\begin{equation*}
\pi\colon\mathrm{S}(\varSigma_{\!N}^+)\times\digamma_{\!n}^\natural\rightarrow\mathrm{S}(\varSigma_{\!N}^+);\quad ([\biota,\tau],\bb)\mapsto[\biota,\tau],
\end{equation*}
be the natural bundle projection. It is continuous and surjective under the natural topologies. From definitions, there holds the following commutativity:
\begin{equation*}
\begin{CD}
\mathrm{S}(\varSigma_{\!N}^+)\times\digamma_{\!n}^\natural@>{\mathfrak{F}^\natural(t,\bcdot,\bcdot)}>>\mathrm{S}(\varSigma_{\!N}^+)\times\digamma_{\!n}^\natural\\
@V{\pi}VV@VV{\pi}V\\
\mathrm{S}(\varSigma_{\!N}^+)@>{\varTheta(t,\bcdot)}>>\mathrm{S}(\varSigma_{\!N}^+)
\end{CD}\qquad\qquad\forall t\in\mathbb{R}_+.
\end{equation*}
As $\mathrm{S}(\varSigma_{\!N}^+)\times\digamma_{\!n}^\natural$ is compact and $\mathcal{P}_{\vec{\alpha}}$ is an ergodic probability measure of $\varTheta$ on $\mathrm{S}(\varSigma_{\!N}^+)$ by Lemma~\ref{lem3.1}, it follows from \cite{Dai09} that there exists at least one ergodic probability measure, say $\mathfrak{P}_{\vec{\alpha}}$ for the semiflow $\mathfrak{F}^\natural$ on $\mathrm{S}(\varSigma_{\!N}^+)\times\digamma_{\!n}^\natural$, such that $\mathcal{P}_{\vec{\alpha}}=\mathfrak{P}_{\vec{\alpha}}\circ\pi^{-1}$, as desired.
\end{proof}

We notice here that \cite{Dai09} is for flows but same arguments still work for semiflows and discrete-time continuous transformations.

Based on Lemma~\ref{lem5.13}, we can obtain the following result, which is an important step towards proving Proposition~\ref{prop5.2}.

\begin{lem}\label{lem5.14}%%%
Under the same context as Proposition~\ref{prop5.2}, the Liao upper-triangular systems $(\mathds{R}_{[\biota,\tau],\bb})$ are exponentially stable for $\mathfrak{P}_{\vec{\alpha}}$-a.s. $([\biota,\tau],\bb)\in\mathrm{S}(\varSigma_{\!N}^+)\times\digamma_{\!n}^\natural$.
\end{lem}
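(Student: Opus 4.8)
The plan is to reduce the exponential stability of the Liao upper-triangular system $(\mathds{R}_{[\biota,\tau],\bb})$ to two facts that are already available: the geometric identification of $(\mathds{R}_{[\biota,\tau],\bb})$ with the base switching equation $\dot{x}=A_{\sigma_\biota(\tau+t)}x$, and the marginal relation between $\mathfrak{P}_{\vec{\alpha}}$ and $\mathcal{P}_{\vec{\alpha}}$. Along the way I would also record the precise value of the diagonal Birkhoff averages, since the quantitative negativity they yield is what the Liao-type exponent argument in the next subsection will need.

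First I would exploit the geometric property Lemma~\ref{lem5.11}(2). Since $\mathfrak{F}_{\![\biota,\tau]}^\natural(t,\bb)$ is an orthonormal $n$-frame, the map $\mathbb{T}_{\mathfrak{F}_{\![\biota,\tau]}^\natural(t,\bb)}$ is an isometry of $\mathbb{R}^n$, so $\varPhi_{[\biota,\tau]}(t,\bb z_0)=\mathfrak{F}_{\![\biota,\tau]}^\natural(t,\bb)z(t,z_0)$ gives $\|\varPhi_{[\biota,\tau]}(t)(\bb z_0)\|=\|z(t,z_0)\|$ for all $t\in\mathbb{R}_+$ and $z_0\in\mathbb{R}^n$. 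As $z_0$ ranges over $\mathbb{R}^n\setminus\{\mathbf{0}\}$, so does $\bb z_0$ (because $\bb$ is a basis), hence $(\mathds{R}_{[\biota,\tau],\bb})$ and $\dot{x}=A_{\sigma_\biota(\tau+t)}x$ have the same maximal Lyapunov exponent, in particular the same stability, for \emph{every} $\bb\in\digamma_{\!n}^\natural$. Under the hypothesis of Proposition~\ref{prop5.2} this base exponent is $<0$ for $\mathcal{P}_{\vec{\alpha}}$-a.s.~$[\biota,\tau]\in\mathrm{S}(\varSigma_{\!N}^+)$; let $G$ denote that conull set. Since $\mathcal{P}_{\vec{\alpha}}$ is the $\pi$-marginal of $\mathfrak{P}_{\vec{\alpha}}$ by Lemma~\ref{lem5.13}, the pullback $\pi^{-1}(G)$ is $\mathfrak{P}_{\vec{\alpha}}$-conull, and on it $(\mathds{R}_{[\biota,\tau],\bb})$ is exponentially stable; this already establishes Lemma~\ref{lem5.14}.

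To obtain in addition the uniform data, I would note that by Lemma~\ref{lem5.11}(3) each $\mathds{R}_{[\biota,\tau],\bb}(t)$ is real, upper-triangular and bounded on $\mathbb{R}_+$ uniformly in $([\biota,\tau],\bb)$, so Theorem~\ref{thm2.1} applies and the Lyapunov exponent of $(\mathds{R}_{[\biota,\tau],\bb})$ equals $\max_{1\le k\le n}\vartheta_k([\biota,\tau],\bb)$ with $\vartheta_k([\biota,\tau],\bb)=\limsup_{T\to+\infty}\frac{1}{T}\int_0^T\mathds{R}_{[\biota,\tau],\bb}^{kk}(t)\,\mathrm{d}t$. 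By Lemma~\ref{lem5.11}(1), writing $\widehat{\Omega}_k([\biota,\tau],\bb):=\Omega_k(\bb,\mathds{A}_{[\biota,\tau]})$ — a bounded Borel function on $\mathrm{S}(\varSigma_{\!N}^+)\times\digamma_{\!n}^\natural$, continuous off the $\mathfrak{P}_{\vec{\alpha}}$-null section because $\Omega_k$ is continuous and $[\biota,\tau]\mapsto\mathds{A}_{[\biota,\tau]}$ is piecewise constant — one has $\mathds{R}_{[\biota,\tau],\bb}^{kk}(t)=\widehat{\Omega}_k(\mathfrak{F}^\natural(t,([\biota,\tau],\bb)))$. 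Applying Birkhoff's ergodic theorem to $\widehat{\Omega}_k$ and the ergodic semiflow $(\mathrm{S}(\varSigma_{\!N}^+)\times\digamma_{\!n}^\natural,\mathfrak{F}^\natural,\mathfrak{P}_{\vec{\alpha}})$ from Lemma~\ref{lem5.13} yields $\vartheta_k([\biota,\tau],\bb)=\lambda_k:=\int\widehat{\Omega}_k\,\mathrm{d}\mathfrak{P}_{\vec{\alpha}}$ for $\mathfrak{P}_{\vec{\alpha}}$-a.s.~point; intersecting the Birkhoff-typical set with $\pi^{-1}(G)$ forces $\max_k\lambda_k<0$, hence $\lambda_k<0$ for each $k$, which is the form used later when invoking Theorem~\ref{thm5.3}.

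The delicate point is not any single computation but correctly coordinating the two "almost sure" statements: the base-space statement supplied by the hypothesis (through Proposition~\ref{prop4.3}) and the total-space statement produced by Birkhoff along $\mathfrak{F}^\natural$; the bridge is precisely the marginal identity $\mathcal{P}_{\vec{\alpha}}=\mathfrak{P}_{\vec{\alpha}}\circ\pi^{-1}$ of Lemma~\ref{lem5.13}. One must also be careful that stability of $(\mathds{R}_{[\biota,\tau],\bb})$ is needed for \emph{all} $\bb$ and not merely a.e.~$\bb$ in the fibre — this is exactly the $\bb$-independence coming from the orthonormality of the moving frames — and that the hypotheses of Theorem~\ref{thm2.1} genuinely hold for the time-dependent real triangular cocycle $t\mapsto\mathds{R}_{[\biota,\tau],\bb}(t)$, which is guaranteed by the uniform bound in Lemma~\ref{lem5.11}(3).
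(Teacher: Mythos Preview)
Your proof is correct and follows essentially the same route as the paper: the geometric property Lemma~\ref{lem5.11}(2) together with the marginal identity $\mathcal{P}_{\vec{\alpha}}=\mathfrak{P}_{\vec{\alpha}}\circ\pi^{-1}$ from Lemma~\ref{lem5.13} is exactly what the paper invokes, and your observation that stability is independent of $\bb$ (via orthonormality of the moving frames) is the precise content of that reduction. Your second paragraph---computing the diagonal Birkhoff averages $\lambda_k=\int\widehat{\Omega}_k\,\mathrm{d}\mathfrak{P}_{\vec{\alpha}}$ and deducing their negativity---is not needed for Lemma~\ref{lem5.14} itself but correctly anticipates Corollary~\ref{cor5.15} and Lemma~\ref{lem5.18}, which the paper proves separately; your $\widehat{\Omega}_k$ is the paper's $\bomega_k$ of Definition~\ref{def5.16}.
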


\begin{proof}
By the assumption of Proposition~\ref{prop5.2}, we see that for $\mathcal{P}_{\vec{\alpha}}$-a.s. $[\biota,\tau]\in\mathrm{S}(\varSigma_{\!N}^+)$,
\begin{equation*}
\dot{x}(t)=A_{\sigma_\biota(\tau+t)}x(t),\quad x(0)\in\mathbb{R}^n\textrm{ and }t\in\mathbb{R}_+
\end{equation*}
are exponentially stable. Then, the statement follows immediately from Lemma~\ref{lem5.13} and Lemma~\ref{lem5.11}.(2).
\end{proof}

Noting that $\{\varPsi_{[\biota,\tau],\bb}(t)\}_{t\in\mathbb{R}_+}$ is the principal matrix of $(\mathds{R}_{[\biota,\tau],\bb})$ and $\mathfrak{P}_{\vec{\alpha}}$ is ergodic, from Lemma~\ref{lem5.14} we easily obtain the following.

\begin{cor}\label{cor5.15}%%%
Under the same context as Proposition~\ref{prop5.2}, there exists a constant $\bbchi_{\vec{\alpha}}^+<0$ such that
\begin{equation*}
\lim_{t\to+\infty}\frac{1}{t}\log\|\varPsi_{[\biota,\tau],\bb}(t)\|=\bbchi_{\vec{\alpha}}^+
\end{equation*}
for $\mathfrak{P}_{\vec{\alpha}}$-a.s. $([\biota,\tau],\bb)\in\mathrm{S}(\varSigma_{\!N}^+)\times\digamma_{\!n}^\natural$.
\end{cor}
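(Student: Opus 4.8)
The plan is to recognise $t\mapsto\log\|\varPsi_{[\biota,\tau],\bb}(t)\|$ as a subadditive cocycle over the ergodic semiflow $(\mathrm{S}(\varSigma_{\!N}^+)\times\digamma_{\!n}^\natural,\mathfrak{F}^\natural,\mathfrak{P}_{\vec{\alpha}})$ furnished by Lemma~\ref{lem5.13}, to extract from it a constant limiting exponent by the continuous-time version of Kingman's subadditive ergodic theorem, and then to read off the sign of that exponent from the exponential stability supplied by Lemma~\ref{lem5.14}.

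First I would set up the cocycle. Writing $\omega=([\biota,\tau],\bb)$ and $f(t,\omega)=\log\|\varPsi_{[\biota,\tau],\bb}(t)\|$, the cocycle property~(\ref{eq5.19}) together with submultiplicativity of the operator norm gives
\begin{equation*}
f(t_1+t_2,\omega)\le f\bigl(t_2,\mathfrak{F}^\natural(t_1,\omega)\bigr)+f(t_1,\omega)\qquad\forall\,t_1,t_2\in\mathbb{R}_+,
\end{equation*}
so $f$ is subadditive along $\mathfrak{F}^\natural$. Next I would verify the integrability hypothesis: by Lemma~\ref{lem5.11}.(3) the matrix $\mathds{R}_{[\biota,\tau],\bb}(t)$ is bounded by a constant $\bbC$ uniformly in $t$, $\bb$ and $[\biota,\tau]$, and since $\varPsi_{[\biota,\tau],\bb}(\bcdot)$ is the principal matrix of $(\mathds{R}_{[\biota,\tau],\bb})$ (Lemma~\ref{lem5.11}.(2)) while its inverse solves $\dot Y=-Y\mathds{R}_{[\biota,\tau],\bb}(t)$, a Gronwall estimate gives $\|\varPsi_{[\biota,\tau],\bb}(t)\|\le e^{\bbC t}$ and $\|\varPsi_{[\biota,\tau],\bb}(t)^{-1}\|\le e^{\bbC t}$; hence $|f(t,\omega)|\le\bbC t$ and in particular $\sup_{0\le t\le1}|f(t,\bcdot)|\le\bbC$ is bounded. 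The continuous-time subadditive ergodic theorem then applies (no $-\infty$ value can occur), and, $\mathfrak{P}_{\vec{\alpha}}$ being ergodic for $\mathfrak{F}^\natural$, it yields a constant $\bbchi_{\vec{\alpha}}^+\in[-\bbC,\bbC]$ with $\lim_{t\to+\infty}\frac1t\log\|\varPsi_{[\biota,\tau],\bb}(t)\|=\bbchi_{\vec{\alpha}}^+$ for $\mathfrak{P}_{\vec{\alpha}}$-a.s.~$([\biota,\tau],\bb)$. Finally I would invoke Lemma~\ref{lem5.14}: it says $(\mathds{R}_{[\biota,\tau],\bb})$ is exponentially stable for $\mathfrak{P}_{\vec{\alpha}}$-a.s.~$([\biota,\tau],\bb)$, i.e.\ (by Lemma~\ref{lem5.11}.(2) and the definition of exponential stability in Section~\ref{sec2}) $\limsup_{t\to+\infty}\frac1t\log\|\varPsi_{[\biota,\tau],\bb}(t)\|<0$ a.s.; combined with the limit just obtained, this forces $\bbchi_{\vec{\alpha}}^+<0$.

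An alternative route that sidesteps Kingman's theorem is available: by Lemma~\ref{lem5.11}.(1) the diagonal entries of the upper-triangular $\mathds{R}_{[\biota,\tau],\bb}(t)$ equal $\Omega_k\bigl(\mathfrak{F}_{\![\biota,\tau]}^\natural(t,\bb),\mathds{A}_{[\biota,\tau+t]}\bigr)=H_k(\mathfrak{F}^\natural(t,\omega))$ for the bounded Borel functions $H_k([\eta,s],\bc):=\Omega_k(\bc,\mathds{A}_{[\eta,s]})$ on $\mathrm{S}(\varSigma_{\!N}^+)\times\digamma_{\!n}^\natural$, so the Birkhoff ergodic theorem for the ergodic semiflow $\mathfrak{F}^\natural$ makes $\frac1T\int_0^TH_k(\mathfrak{F}^\natural(t,\omega))\,\mathrm{d}t$ converge a.s.\ to $\lambda_k:=\int H_k\,\mathrm{d}\mathfrak{P}_{\vec{\alpha}}$ for every $k$; one then checks, by revisiting the proof of Theorem~\ref{thm2.1}, that convergence of these Ces\`aro averages of the diagonal upgrades the $\limsup$ there to a genuine limit, so that $\frac1t\log\|\varPsi_{[\biota,\tau],\bb}(t)\|\to\max_k\lambda_k$, which must be $<0$ by Lemma~\ref{lem5.14}. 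I do not expect a genuine obstacle in either form, since the proof is short; the only points needing care are the integrability/boundedness of the relevant cocycle (handled by Lemma~\ref{lem5.11}.(3)) and, in the first route, using the subadditive ergodic theorem in its continuous-time form for the semiflow $\mathfrak{F}^\natural$ itself rather than for a time-one map --- whose ergodicity a suspension construction does not guarantee --- while in the second route one must note that the stated Theorem~\ref{thm2.1} delivers only the $\limsup$, so upgrading it to a limit requires going back into its proof.
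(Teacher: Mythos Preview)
Your first route is correct and is essentially the paper's own argument: the paper simply cites the classical multiplicative ergodic theorem (Furstenberg--Kesten/Oseledec) together with Lemma~\ref{lem5.14}, which amounts precisely to your Kingman-type argument applied to the subadditive cocycle $\log\|\varPsi_{[\biota,\tau],\bb}(t)\|$ over the ergodic semiflow $(\mathrm{S}(\varSigma_{\!N}^+)\times\digamma_{\!n}^\natural,\mathfrak{F}^\natural,\mathfrak{P}_{\vec{\alpha}})$. Your second route (Birkhoff on the diagonal plus an upgrade of Theorem~\ref{thm2.1}) is also valid and in fact anticipates the paper's Lemma~\ref{lem5.18}, but it is not the argument the paper uses here.
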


\begin{proof}
Noting (\ref{eq5.23}), this is a simple direct result of the classical multiplicative ergodic theorem \cite{FK, Ose} and Lemma~\ref{lem5.14}.
\end{proof}

Recall that from Lemma~\ref{lem5.11}, the non-autonomous coefficient matrix of $(\mathds{R}_{[\biota,\tau],\bb})$ is $\mathds{R}_{[\biota,\tau],\bb}(t)$ that is real upper-triangular with diagonal elements
\begin{equation*}
\mathds{R}_{[\biota,\tau],\bb}^{kk}(t)=\Omega_k(\mathfrak{F}_{\![\biota,\tau]}^\natural(t,\bb),\mathds{A}_{[\biota,\tau+t]})\quad \textrm{for }k=1,\dotsc,n,
\end{equation*}
where $\mathds{A}_{[\biota,\tau+t]}=A_{\sigma_\biota(\tau+t)}$ as in (\ref{eq3.5}). For our convenience, we introduce the following concept.

\begin{defn}\label{def5.16}%%%
For $\mathcal{A}=\{A_1,\dotsc,A_N\}\subset\mathbb{R}^{n\times n}$, the functions
\begin{equation}\label{eq5.25}
\bomega_k\colon\mathrm{S}(\varSigma_{\!N}^+)\times\digamma_{\!n}^\natural\rightarrow\mathbb{R};\quad([\biota,\tau],\bb)\mapsto\Omega_k(\bb,A_{\sigma_\biota(\tau)})
\end{equation}
for $k=1,\dotsc,n$, where $\Omega_k$ is as in Definition~\ref{def5.5}, are called the ``Liao qualitative functions'' of the skew-product system (\ref{eq5.23}).
\end{defn}

Then, from Lemma~\ref{lem5.11} there follows at once that the following holds.

\begin{lem}\label{lem5.17}
The functions $\bomega_k, k=1,\dotsc,n$, all are bounded Borel-measurable on $\mathrm{S}(\varSigma_{\!N}^+)\times\digamma_{\!n}^\natural$, such that
\begin{equation*}
\mathds{R}_{[\biota,\tau],\bb}^{kk}(t)=\bomega_k\left(\mathfrak{F}^\natural(t,([\biota,\tau],\bb))\right)
\end{equation*}
for all $t\in\mathbb{R}_+$ and any $([\biota,\tau],\bb)\in\mathrm{S}(\varSigma_{\!N}^+)\times\digamma_{\!n}^\natural$.
\end{lem}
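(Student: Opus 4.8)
The plan is to read off both assertions of Lemma~\ref{lem5.17} from Lemma~\ref{lem5.11}.(1) together with the definitions of $\bomega_k$ (Definition~\ref{def5.16}), of the frame skew-product semiflow $\mathfrak{F}^\natural$ in $(\ref{eq5.16})$, and of the random matrix $\mathds{A}_{\bcdot}$ in $(\ref{eq3.5})^\prime$. No new analytic estimate is involved; the only thing that deserves a word is the measurability bookkeeping on the quotient space $\mathrm{S}(\varSigma_{\!N}^+)$.

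First I would settle the regularity claim. By Definition~\ref{def5.5} each $\Omega_k\colon\digamma_{\!n}^\natural\times\mathcal{A}\to\mathbb{R}$ is continuous, and since $\digamma_{\!n}^\natural$ is compact while $\mathcal{A}=\{A_1,\dotsc,A_N\}$ is finite, $\Omega_k$ is bounded. Next, the map
\begin{equation*}
\mathrm{S}(\varSigma_{\!N}^+)\times\digamma_{\!n}^\natural\ni([\biota,\tau],\bb)\longmapsto(\bb,A_{\sigma_\biota(\tau)})\in\digamma_{\!n}^\natural\times\mathcal{A}
\end{equation*}
is Borel measurable: it is well defined on the quotient because $\sigma_\biota(\tau+k)=\sigma_{\theta^k(\biota)}(\tau)$ (Section~\ref{sec3.2}), it is the pairing of the continuous map $([\biota,\tau],\bb)\mapsto\bb$ with the map $[\biota,\tau]\mapsto A_{\sigma_\biota(\tau)}$, and the latter is Borel by exactly the reasoning that makes the functions $\omega_i$ of $(\ref{eq4.11})$ Borel (on the fundamental domain $0<\tau\le1$ one has $\sigma_\biota(\tau)\equiv\biota_1$, which depends only on the first cylinder coordinate). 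Since $\bomega_k$ is the composition of this Borel map with the continuous, hence Borel, and bounded function $\Omega_k$, it is bounded and Borel measurable, which is the first assertion of the lemma.

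Next I would verify the cocycle identity. Fix $([\biota,\tau],\bb)\in\mathrm{S}(\varSigma_{\!N}^+)\times\digamma_{\!n}^\natural$, $t\in\mathbb{R}_+$, and $k\in\{1,\dotsc,n\}$. By Lemma~\ref{lem5.11}.(1),
\begin{equation*}
\mathds{R}_{[\biota,\tau],\bb}^{kk}(t)=\Omega_k\bigl(\mathfrak{F}_{\![\biota,\tau]}^\natural(t,\bb),\mathds{A}_{[\biota,\tau+t]}\bigr).
\end{equation*}
By $(\ref{eq3.5})^\prime$ we have $\mathds{A}_{[\biota,\tau+t]}=A_{\sigma_\biota(\tau+t)}$, and by $(\ref{eq5.16})$ we have $\mathfrak{F}^\natural(t,([\biota,\tau],\bb))=\bigl([\biota,\tau+t],\mathfrak{F}_{\![\biota,\tau]}^\natural(t,\bb)\bigr)$. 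Feeding these into Definition~\ref{def5.16} gives
\begin{equation*}
\bomega_k\bigl(\mathfrak{F}^\natural(t,([\biota,\tau],\bb))\bigr)=\Omega_k\bigl(\mathfrak{F}_{\![\biota,\tau]}^\natural(t,\bb),A_{\sigma_\biota(\tau+t)}\bigr)=\mathds{R}_{[\biota,\tau],\bb}^{kk}(t),
\end{equation*}
which is the asserted identity, valid for all $t\in\mathbb{R}_+$ and all $([\biota,\tau],\bb)$. I do not foresee any genuine difficulty here: once the (routine) measurability point of the second paragraph is in place, both parts of the lemma are immediate consequences of Lemma~\ref{lem5.11}.(1), $(\ref{eq3.5})^\prime$, $(\ref{eq5.16})$ and Definition~\ref{def5.16}.
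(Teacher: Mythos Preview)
Your proposal is correct and follows essentially the same route as the paper: the identity is read off directly from Lemma~\ref{lem5.11}.(1) together with $(\ref{eq3.5})^\prime$, $(\ref{eq5.16})$ and Definition~\ref{def5.16}, while measurability is reduced to the observation that $\sigma_\biota(\tau)=\biota_1$ on the fundamental domain $0<\tau\le1$. The paper's proof is terser (it declares the identity part evident and addresses only measurability), but the content is the same.
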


\begin{proof}
We need only to prove the Borel measurability. Noting $\sigma_\biota(\tau)=\biota_1$ for all $0<\tau\le1$, this measurability follows from $(\ref{eq3.1})^\prime$.
Note here that the section $\varSigma_{\!N}^+\times\{0\}$ has null measure in $\mathrm{S}(\varSigma_{\!N}^+)$.
\end{proof}

Now, we can use the Birkhoff ergodic theorem again to obtain the following spectral theorem which presents an integral expression of the Lyapunov exponent $\bbchi_{\vec{\alpha}}^+$.

\begin{lem}\label{lem5.18}%%%
Under the same context as Proposition~\ref{prop5.2}, it holds that
\begin{equation*}
\bbchi_{\vec{\alpha}}^+=\max\left\{\int_{\mathrm{S}(\varSigma_{\!N}^+)\times\digamma_{\!n}^\natural}\bomega_k\, \mathrm{d}\mathfrak{P}_{\vec{\alpha}}\,|\,k=1,\dotsc,n\right\}.
\end{equation*}
Here $\bbchi_{\vec{\alpha}}^+$ is given by Corollary~\ref{cor5.15}.
\end{lem}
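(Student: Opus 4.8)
The plan is to combine the upper-triangular Lyapunov-exponent formula of Theorem~\ref{thm2.1} with the Birkhoff ergodic theorem applied to the frame skew-product semiflow $\mathfrak{F}^\natural$, and then to identify the resulting $\mathfrak{P}_{\vec{\alpha}}$-a.s.\ constant with $\bbchi_{\vec{\alpha}}^+$ by means of Corollary~\ref{cor5.15}.

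First I would fix $([\biota,\tau],\bb)\in\mathrm{S}(\varSigma_{\!N}^+)\times\digamma_{\!n}^\natural$ and apply Theorem~\ref{thm2.1} to the linear system $(\mathds{R}_{[\biota,\tau],\bb})$. This is legitimate: by Lemma~\ref{lem5.11}.(1) the coefficient matrix $\mathds{R}_{[\biota,\tau],\bb}(t)$ is real upper-triangular, by Lemma~\ref{lem5.11}.(3) it is bounded on $\mathbb{R}_+$ (uniformly in the driving data), and it is piecewise continuous hence Borel-measurable, so hypotheses $(\ref{eq2.2})$ and the upper-triangularity of Theorem~\ref{thm2.1} are met; moreover $\varPsi_{[\biota,\tau],\bb}(t)$ is its principal matrix by Lemma~\ref{lem5.11}.(2). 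Since the diagonal entries are real, Theorem~\ref{thm2.1} then gives
\[
\limsup_{t\to+\infty}\frac{1}{t}\log\|\varPsi_{[\biota,\tau],\bb}(t)\|
=\max_{1\le k\le n}\left\{\limsup_{T\to+\infty}\frac{1}{T}\int_0^T\mathds{R}_{[\biota,\tau],\bb}^{kk}(t)\,\mathrm{d}t\right\}.
\]
By Lemma~\ref{lem5.17}, $\mathds{R}_{[\biota,\tau],\bb}^{kk}(t)=\bomega_k\!\left(\mathfrak{F}^\natural(t,([\biota,\tau],\bb))\right)$ for each $k$, so the inner $\limsup$ is precisely the time-average of $\bomega_k$ along the $\mathfrak{F}^\natural$-orbit of $([\biota,\tau],\bb)$.

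Next I would invoke ergodicity. By Lemma~\ref{lem5.13} the triple $(\mathrm{S}(\varSigma_{\!N}^+)\times\digamma_{\!n}^\natural,\mathfrak{F}^\natural,\mathfrak{P}_{\vec{\alpha}})$ is an ergodic semiflow, and by Lemma~\ref{lem5.17} each $\bomega_k$ is bounded and Borel-measurable, hence in $L^1(\mathfrak{P}_{\vec{\alpha}})$. Applying the Birkhoff ergodic theorem for continuous-time ergodic semiflows to each of the finitely many functions $\bomega_1,\dotsc,\bomega_n$ and intersecting the $n$ resulting full-measure sets, I obtain a set of full $\mathfrak{P}_{\vec{\alpha}}$-measure on which, for every $k$,
\[
\lim_{T\to+\infty}\frac{1}{T}\int_0^T\bomega_k\!\left(\mathfrak{F}^\natural(t,([\biota,\tau],\bb))\right)\mathrm{d}t
=\int_{\mathrm{S}(\varSigma_{\!N}^+)\times\digamma_{\!n}^\natural}\bomega_k\,\mathrm{d}\mathfrak{P}_{\vec{\alpha}};
\]
in particular the $\limsup$'s in the preceding display become genuine limits $\mathfrak{P}_{\vec{\alpha}}$-a.s. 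Combining the two displays, for $\mathfrak{P}_{\vec{\alpha}}$-a.s.\ $([\biota,\tau],\bb)$ one gets
\[
\lim_{t\to+\infty}\frac{1}{t}\log\|\varPsi_{[\biota,\tau],\bb}(t)\|
=\max_{1\le k\le n}\int_{\mathrm{S}(\varSigma_{\!N}^+)\times\digamma_{\!n}^\natural}\bomega_k\,\mathrm{d}\mathfrak{P}_{\vec{\alpha}}.
\]

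Finally, Corollary~\ref{cor5.15} identifies the left-hand side with the constant $\bbchi_{\vec{\alpha}}^+$ for $\mathfrak{P}_{\vec{\alpha}}$-a.s.\ $([\biota,\tau],\bb)$; intersecting once more with that full-measure set and noting that the right-hand side is already a constant, the two constants must agree, which is exactly the asserted formula. I do not expect a genuine obstacle here: the only points needing care are the verification that Theorem~\ref{thm2.1} is being applied to an admissible (bounded, measurable, upper-triangular) coefficient matrix — which is precisely the content of Lemma~\ref{lem5.11} together with Lemma~\ref{lem5.17} — and the bookkeeping of intersecting finitely many null sets before passing to a pointwise identity; both are routine.
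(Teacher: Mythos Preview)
Your proposal is correct and follows essentially the same approach as the paper: apply the Birkhoff ergodic theorem to the functions $\bomega_k$ under the ergodic semiflow $\mathfrak{F}^\natural$, use Lemma~\ref{lem5.17} to identify the diagonal entries of $\mathds{R}_{[\biota,\tau],\bb}(t)$, invoke Theorem~\ref{thm2.1} for the upper-triangular system, and then match the resulting constant with $\bbchi_{\vec{\alpha}}^+$ via Corollary~\ref{cor5.15}. Your write-up is in fact more careful than the paper's terse version in spelling out why Theorem~\ref{thm2.1} applies (via Lemma~\ref{lem5.11}) and in handling the finite intersection of full-measure sets.
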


\begin{proof}
From the Birkhoff ergodic theorem~\cite{NS,Walters82}, we see that for $k=1,\dotsc,n$,
\begin{equation*}
\int_{\mathrm{S}(\varSigma_{\!N}^+)\times\digamma_{\!n}^\natural}\bomega_k \, \mathrm{d}\mathfrak{P}_{\vec{\alpha}}=\lim_{T\to+\infty}\frac{1}{T}\int_0^T\bomega_k\left(\mathfrak{F}^\natural(t,([\biota,\tau],\bb))\right)\mathrm{d}t
\end{equation*}
for $\mathfrak{P}_{\vec{\alpha}}$-a.s. $([\biota,\tau],\bb)\in\mathrm{S}(\varSigma_{\!N}^+)\times\digamma_{\!n}^\natural$. Then, the statement follows immediately from Corollary~\ref{cor5.15}, Lemma~\ref{lem5.17} and Theorem~\ref{thm2.1}.
\end{proof}

Next, we will show that for $\mathfrak{P}_{\vec{\alpha}}$-a.s. $([\biota,\tau],\bb)\in\mathrm{S}(\varSigma_{\!N}^+)\times\digamma_{\!n}^\natural$, $(\mathds{R}_{[\biota,\tau],\bb})$ have got negative Liao-type exponents $\bbchi_*^+$. To this end, we need an other ergodic theorem.

\begin{thm}[\cite{DZ}]\label{thm5.19}%%%
Let $\phi\colon[0,\infty)\times X\rightarrow X$ be a semiflow on a compact metrizable space $X$, which preserves a probability measure $\mu$, and assume
$\{t_i\}_{i=1}^\infty$ is an arbitrarily given real sequence with property
\begin{equation*}
t_1\ge1,\; t_{i+1}=2t_i\;\forall i\ge1.
\end{equation*}
Then, for any $n$ real-valued functions $f_k(\bcdot)\in \mathscr{L}_\mathbb{R}^1(X,\mu), k=1,\dotsc,n$, there exists a Borel subset of $\mu$-measure $1$, write as $\widehat{\Gamma}$, such that for all $x\in\widehat{\Gamma}$,
\begin{equation*}
f_k^*(x)=\lim_{T\to+\infty}\frac{1}{T}\int_0^Tf_k(\phi(t,x))\,\mathrm{d}t\quad (k=1,\dotsc,n)
\end{equation*}
and
\begin{equation*}
\lim_{i\to\infty}\left\{\lim_{\ell\to\infty}\frac{1}{\ell}\sum_{j=0}^{\ell-1}\max_{1\le k\le n}\left\{|f_k^*(x)-\frac{1}{t_i}\int_{jt_i}^{(j+1)t_i}f_k(\phi(t,x))\,\mathrm{d}t|\right\}\right\}=0.
\end{equation*}
Particularly, if $f_k(\bcdot)\in\mathscr{L}_\mathbb{R}^\infty(X,\mu)$ and $\mu$ is $\phi$-ergodic, then $\widehat{\Gamma}$ is $\phi$-invariant.
\end{thm}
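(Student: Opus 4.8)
The plan is to deduce Theorem~\ref{thm5.19} from the classical Birkhoff ergodic theorem~\cite{Walters82}, exploiting the dyadic hypothesis $t_{i+1}=2t_i$. First I would apply the continuous-time Birkhoff theorem to the measure-preserving semiflow $\phi$ and to each $f_k\in\mathscr{L}_\mathbb{R}^1(X,\mu)$: this gives a Borel set $\Gamma_0$ with $\mu(\Gamma_0)=1$ on which every limit
\begin{equation*}
f_k^*(x)=\lim_{T\to+\infty}\frac{1}{T}\int_0^Tf_k(\phi(t,x))\,\mathrm{d}t
\end{equation*}
exists, with $f_k^*\in\mathscr{L}_\mathbb{R}^1(X,\mu)$, $\int_X f_k^*\,\mathrm{d}\mu=\int_X f_k\,\mathrm{d}\mu$, and $f_k^*$ forward $\phi$-invariant, i.e.\ $f_k^*(\phi(s,x))=f_k^*(x)$ for all $s\ge0$ (this last fact follows from the semiflow property by splitting $\int_0^T$ at $s$). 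Put $g_{k,i}(x)=\frac{1}{t_i}\int_0^{t_i}f_k(\phi(t,x))\,\mathrm{d}t$, let $T_i:=\phi(t_i,\bcdot)$ (a $\mu$-preserving map, and $T_{i+1}=T_i^2$), and set $h_i:=\max_{1\le k\le n}|f_k^*-g_{k,i}|$. The semiflow identity gives $\frac{1}{t_i}\int_{jt_i}^{(j+1)t_i}f_k(\phi(t,x))\,\mathrm{d}t=g_{k,i}(T_i^jx)$, while $f_k^*(T_i^jx)=f_k^*(x)$; hence the $j$-th summand appearing in the statement equals exactly $h_i(T_i^jx)$. Since $\mu$ is $\phi$-invariant, $\|g_{k,i}\|_{L^1}\le\|f_k\|_{L^1}$, so $h_i\in\mathscr{L}_\mathbb{R}^1(X,\mu)$.

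Next, for each fixed $i$ I would apply Birkhoff's theorem to the single transformation $(X,T_i,\mu)$ and to $h_i$: on a full-measure set the inner limit $H_i(x):=\lim_{\ell\to+\infty}\frac{1}{\ell}\sum_{j=0}^{\ell-1}h_i(T_i^jx)$ exists, and $\int_X H_i\,\mathrm{d}\mu=\int_X h_i\,\mathrm{d}\mu$. It then suffices to show $H_i\to0$ $\mu$-a.s.\ as $i\to+\infty$, and this is where $t_{i+1}=2t_i$ enters: splitting $\int_0^{2t_i}=\int_0^{t_i}+\int_{t_i}^{2t_i}$ and using the semiflow property yields the recursion $g_{k,i+1}=\tfrac{1}{2}(g_{k,i}+g_{k,i}\circ T_i)$, whence, using $f_k^*\circ T_i=f_k^*$, the pointwise bound $h_{i+1}\le\tfrac{1}{2}h_i+\tfrac{1}{2}h_i\circ T_i$. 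Averaging along the orbit of $T_{i+1}=T_i^2$ and using this bound,
\begin{align*}
\frac{1}{\ell}\sum_{j=0}^{\ell-1}h_{i+1}(T_{i+1}^jx)&=\frac{1}{\ell}\sum_{j=0}^{\ell-1}h_{i+1}(T_i^{2j}x)\\
&\le\frac{1}{2\ell}\sum_{m=0}^{2\ell-1}h_i(T_i^mx),
\end{align*}
and the right-hand side converges to $H_i(x)$ as $\ell\to+\infty$; hence $H_{i+1}(x)\le H_i(x)$ for $\mu$-a.s.\ $x$. Thus $\{H_i\}$ is a.s.\ nonincreasing and nonnegative, so $H_i\downarrow H_\infty\ge0$ pointwise a.s. Since the $L^1$ mean ergodic theorem gives $g_{k,i}\to f_k^*$ in $L^1(\mu)$ as $t_i\to+\infty$, we get $h_i\le\sum_{k}|f_k^*-g_{k,i}|\to0$ in $L^1$, hence $\int_X H_\infty\,\mathrm{d}\mu=\lim_i\int_X h_i\,\mathrm{d}\mu=0$ and $H_\infty=0$ a.s. Intersecting $\Gamma_0$ with the countably many full-measure sets used above gives the desired $\widehat\Gamma$.

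Finally, for the "particularly" clause, assume $f_k\in\mathscr{L}_\mathbb{R}^\infty(X,\mu)$ and $\mu$ is $\phi$-ergodic; then each $f_k^*\equiv\int_X f_k\,\mathrm{d}\mu$. Every condition cutting out $\widehat\Gamma$ asserts that a prescribed Birkhoff-type limit exists (with a given value); because the $f_k$ are bounded, the splitting argument used above shows each such condition holds at $x$ if and only if it holds at $\phi(s,x)$, with the same value, so each defining set is $\phi(s,\bcdot)$-invariant modulo a $\mu$-null set for every $s\ge0$. Replacing $\widehat\Gamma$ by the (still full-measure) set of points whose entire forward orbit lies in it produces a genuinely $\phi$-invariant version. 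I expect the main obstacle to be the passage $H_i\to0$ \emph{almost surely}: Birkhoff together with the mean ergodic theorem only delivers the $L^1$-statement $\int_X h_i\,\mathrm{d}\mu\to0$, and it is precisely the dyadic hypothesis $t_{i+1}=2t_i$ — via the block decomposition and the resulting monotonicity $H_{i+1}\le H_i$ — that upgrades it to pointwise convergence; without this structure one would only obtain convergence along a subsequence.
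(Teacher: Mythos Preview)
The paper does not prove Theorem~\ref{thm5.19} in the text; it is quoted from \cite{DZ} and used as a black box, so there is no in-paper argument to compare against. Your proof is correct and self-contained. The identification of the $j$-th summand with $h_i(T_i^jx)$, the application of Birkhoff to each discrete system $(X,T_i,\mu)$ to produce $H_i$, and above all the monotonicity $H_{i+1}\le H_i$ extracted from the dyadic recursion $g_{k,i+1}=\tfrac12(g_{k,i}+g_{k,i}\circ T_i)$ are exactly the right mechanism; the upgrade from $\int h_i\to0$ to $H_i\downarrow0$ a.s.\ via monotone convergence is the crux, and you have it. One minor tightening for the ``particularly'' clause: since each $H_i$ is already $T_i$-invariant as a Birkhoff limit, it suffices to check forward $\phi(s,\bcdot)$-invariance of the defining conditions for $0\le s<t_1$, where the boundedness of the $f_k$ makes the finite-time shift harmless; this is implicit in what you wrote but worth stating explicitly.
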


This is a strengthened version of the classical Birkhoff ergodic theorem. We will apply it to the case where $X=\mathrm{S}(\varSigma_{\!N}^+)\times\digamma_{\!n}^\natural$,
$\phi=\mathfrak{F}^\natural$ and $f_k(\bcdot)=\bomega_k(\bcdot)$ for $k=1,\dotsc,n$. Let
\begin{equation}
\bbchi_k^+=\int_{\mathrm{S}(\varSigma_{\!N}^+)\times\digamma_{\!n}^\natural}\bomega_k\, \mathrm{d}\mathfrak{P}_{\vec{\alpha}}\quad\forall k=1,\dotsc,n.
\end{equation}
Then, $\bbchi_{\vec{\alpha}}^+=\max\{\bbchi_k^+\,|\,k=1,\dotsc, n\}$ from Lemma~\ref{lem5.18}. And as a result of Theorem~\ref{thm5.19}, we have the following corollary by choosing $t_i=2^{i-1}$ for all $i=1,2,\dotsc$ and letting $\mu=\mathfrak{P}_{\vec{\alpha}}$:

\begin{cor}\label{cor5.20}%%%
Under the same context as Proposition~\ref{prop5.2}, there exists an $\mathfrak{F}^\natural$-invariant Borel subset $\varGamma$ of $\mathrm{S}(\varSigma_{\!N}^+)\times\digamma_{\!n}^\natural$ with $\mathfrak{P}_{\vec{\alpha}}$-measure $1$, such that
\begin{equation*}
\lim_{i\to\infty}\left\{\lim_{\ell\to\infty}\frac{1}{\ell}\sum_{j=0}^{\ell-1}\max_{1\le k\le n}\left\{|\bbchi_k^+-\frac{1}{2^{i-1}}\int_{j2^{i-1}}^{(j+1)2^{i-1}}\bomega_k(\mathfrak{F}^\natural(t,([\biota,\tau],\bb)))\,\mathrm{d}t|\right\}\right\}=0
\end{equation*}
for any $([\biota,\tau],\bb)\in\varGamma$.
\end{cor}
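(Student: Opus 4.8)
The plan is to obtain Corollary~\ref{cor5.20} as an immediate application of the strengthened Birkhoff theorem, Theorem~\ref{thm5.19}, to the frame skew-product semiflow. Concretely, I would invoke Theorem~\ref{thm5.19} with $X=\mathrm{S}(\varSigma_{\!N}^+)\times\digamma_{\!n}^\natural$, $\phi=\mathfrak{F}^\natural$ (the frame skew-product semiflow of (\ref{eq5.16})), $\mu=\mathfrak{P}_{\vec{\alpha}}$ (the ergodic measure produced by Lemma~\ref{lem5.13}), the functions $f_k=\bomega_k$ for $k=1,\dotsc,n$ (the Liao qualitative functions of Definition~\ref{def5.16}), and the dyadic sequence $t_i=2^{\,i-1}$, which indeed satisfies $t_1=1\ge1$ and $t_{i+1}=2t_i$ for all $i\ge1$ as Theorem~\ref{thm5.19} requires.

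First I would verify the hypotheses. The space $X=\mathrm{S}(\varSigma_{\!N}^+)\times\digamma_{\!n}^\natural$ is compact and metrizable, since $\mathrm{S}(\varSigma_{\!N}^+)$ is compact metrizable (Subsection~\ref{sec3.1}) and $\digamma_{\!n}^\natural$ is compact (Subsection~\ref{sec5.2.1}); the map $\mathfrak{F}^\natural$ is a continuous-time semiflow on $X$ by (\ref{eq5.16}); and $\mathfrak{P}_{\vec{\alpha}}$ is an $\mathfrak{F}^\natural$-ergodic, hence $\mathfrak{F}^\natural$-invariant, Borel probability measure by Lemma~\ref{lem5.13}. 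Finally, by Lemma~\ref{lem5.17} together with the compactness of $\digamma_{\!n}^\natural$, each $\bomega_k$ is bounded and Borel measurable, so $\bomega_k\in\mathscr{L}_\mathbb{R}^\infty(X,\mathfrak{P}_{\vec{\alpha}})$ for every $k=1,\dotsc,n$. Thus all assumptions of Theorem~\ref{thm5.19} — including the ones in its ``particularly'' clause — are in force.

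Applying Theorem~\ref{thm5.19} then yields a Borel set $\widehat{\varGamma}\subset X$ with $\mathfrak{P}_{\vec{\alpha}}(\widehat{\varGamma})=1$ on which, for each $k$, the time-average $\bomega_k^*([\biota,\tau],\bb)=\lim_{T\to+\infty}\frac1T\int_0^T\bomega_k(\mathfrak{F}^\natural(t,([\biota,\tau],\bb)))\,\mathrm{d}t$ exists and the double-limit identity of Theorem~\ref{thm5.19} holds with $f_k^*$ and $t_i$ specialized to $\bomega_k^*$ and $2^{\,i-1}$; moreover, since each $\bomega_k\in\mathscr{L}_\mathbb{R}^\infty$ and $\mathfrak{P}_{\vec{\alpha}}$ is ergodic, the ``particularly'' clause gives that $\widehat{\varGamma}$ is $\mathfrak{F}^\natural$-invariant. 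It then remains only to identify the constants: by the Birkhoff ergodic theorem and the ergodicity of $\mathfrak{P}_{\vec{\alpha}}$, the invariant full-measure convergence set carries $\bomega_k^*\equiv\int_X\bomega_k\,\mathrm{d}\mathfrak{P}_{\vec{\alpha}}=\bbchi_k^+$, so after replacing $\widehat{\varGamma}$ by its intersection with that set — still $\mathfrak{F}^\natural$-invariant and of $\mathfrak{P}_{\vec{\alpha}}$-measure $1$ — and calling the result $\varGamma$, the double-limit identity becomes exactly the asserted formula. I do not expect any genuine obstacle; the only points deserving a word of care are that $\varGamma$ inherits $\mathfrak{F}^\natural$-invariance from the ergodic case of Theorem~\ref{thm5.19}, and that the a priori point-dependent limits $\bomega_k^*$ reduce to the global constants $\bbchi_k^+$ on $\varGamma$, both of which are consequences of the ergodicity of $\mathfrak{P}_{\vec{\alpha}}$.
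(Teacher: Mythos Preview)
Your proposal is correct and follows exactly the approach the paper takes: apply Theorem~\ref{thm5.19} to the frame skew-product semiflow $\mathfrak{F}^\natural$ on $\mathrm{S}(\varSigma_{\!N}^+)\times\digamma_{\!n}^\natural$ with the ergodic measure $\mathfrak{P}_{\vec{\alpha}}$, the bounded measurable functions $f_k=\bomega_k$, and the dyadic sequence $t_i=2^{i-1}$, then use ergodicity to identify $\bomega_k^*$ with $\bbchi_k^+$ and to obtain the $\mathfrak{F}^\natural$-invariance of $\varGamma$. The paper states this in one line just before the corollary; your write-up simply supplies the routine hypothesis checks.
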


From this corollary, we now can choose the important Liao-type exponents as follows:

\begin{lem}\label{lem5.21}%%%
Under the same context as Proposition~\ref{prop5.2}, for
any $\varepsilon>0$ sufficiently small, one can find an integer $\ell\ge1$ and a
Borel subset $Z_{\varepsilon}$ of $\mathrm{S}(\varSigma_{\!N}^+)\times\digamma_{\!n}^\natural$ such
that
\begin{enumerate}
\item[$\mathrm{(1)}$] $\mathfrak{P}_{\vec{\alpha}}(Z_{\varepsilon})\ge 1-\varepsilon$;

\item[$\mathrm{(2)}$] for any $([\biota,\tau],\bb)\in Z_{\varepsilon}$,
\begin{equation*}
\limsup_{m\to+\infty}\frac{1}{m2^\ell}\sum_{i=0}^{m-1}\max_{1\le j\le
n}\left\{\int_{i2^{\ell}}^{(i+1)2^{\ell}}\bomega_k(\mathfrak{F}^\natural(t,([\biota,\tau],\bb)))\,\mathrm{d}t\right\}
\le\bbchi_{\vec{\alpha}}^++\varepsilon.
\end{equation*}
\end{enumerate}
Note: Here $\ell$ can be sufficiently large and $\bbchi_{\vec{\alpha}}^+$ is defined as in Corollary~\ref{cor5.15}.
\end{lem}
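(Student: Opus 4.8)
The plan is to derive Lemma~\ref{lem5.21} from Corollary~\ref{cor5.20} by an application of Egorov's theorem together with a routine block-average estimate. Let $\varGamma\subset\mathrm{S}(\varSigma_{\!N}^+)\times\digamma_{\!n}^\natural$ be the $\mathfrak{F}^\natural$-invariant Borel set with $\mathfrak{P}_{\vec{\alpha}}(\varGamma)=1$ furnished by Corollary~\ref{cor5.20}. For each integer $i\ge1$ put
\begin{equation*}
h_i(x):=\max_{1\le k\le n}\left\{\left|\bbchi_k^+-\frac{1}{2^{i-1}}\int_0^{2^{i-1}}\bomega_k(\mathfrak{F}^\natural(t,x))\,\mathrm{d}t\right|\right\},\qquad x\in\mathrm{S}(\varSigma_{\!N}^+)\times\digamma_{\!n}^\natural,
\end{equation*}
which is a bounded Borel function by Lemma~\ref{lem5.17}. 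By the cocycle property of $\mathfrak{F}^\natural$, the $j$-th term inside the sum appearing in Corollary~\ref{cor5.20} equals $h_i(\mathfrak{F}^\natural(j2^{i-1},x))$, so that the inner limit there, namely
\begin{equation*}
g_i(x):=\lim_{\ell\to\infty}\frac{1}{\ell}\sum_{j=0}^{\ell-1}h_i\!\left(\mathfrak{F}^\natural(j2^{i-1},x)\right),
\end{equation*}
exists for every $x\in\varGamma$, and Corollary~\ref{cor5.20} states precisely that $g_i(x)\to0$ as $i\to+\infty$ for every $x\in\varGamma$.

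First I would apply Egorov's theorem to the sequence $g_i\to0$ on the probability space $(\varGamma,\mathfrak{P}_{\vec{\alpha}})$: given $\varepsilon>0$, there is a Borel set $Z_{\varepsilon}\subset\varGamma$ with $\mathfrak{P}_{\vec{\alpha}}(Z_{\varepsilon})\ge1-\varepsilon$ on which $g_i\to0$ uniformly, hence an index $i_0$ (depending only on $\varepsilon$) with $g_i(x)\le\varepsilon$ for all $x\in Z_{\varepsilon}$ and all $i\ge i_0$. This already yields statement~$\mathrm{(1)}$; and since $i_0$ may be taken as large as one wishes, it also shows that the exponent $\ell$ chosen below can be made sufficiently large.

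Finally, fix any $i\ge i_0$, set $\ell:=i-1$, and let $x=([\biota,\tau],\bb)\in Z_{\varepsilon}$. Applying the elementary inequality $\max_k(a_k+b_k)\le\max_k a_k+\max_k|b_k|$ with $a_k=\bbchi_k^+$ and $b_k=\frac{1}{2^{\ell}}\int_{p2^{\ell}}^{(p+1)2^{\ell}}\bomega_k(\mathfrak{F}^\natural(t,x))\,\mathrm{d}t-\bbchi_k^+$, and using $\max_{1\le k\le n}\bbchi_k^+=\bbchi_{\vec{\alpha}}^+$ (Lemma~\ref{lem5.18}) together with $2^{\ell}=2^{i-1}$ and the fact that $\mathfrak{F}^\natural(p2^{\ell},x)$ carries the block $[p2^{\ell},(p+1)2^{\ell}]$ onto $[0,2^{\ell}]$, one gets for every integer $p\ge0$
\begin{equation*}
\max_{1\le k\le n}\left\{\frac{1}{2^{\ell}}\int_{p2^{\ell}}^{(p+1)2^{\ell}}\bomega_k(\mathfrak{F}^\natural(t,x))\,\mathrm{d}t\right\}\le\bbchi_{\vec{\alpha}}^++h_i\!\left(\mathfrak{F}^\natural(p2^{\ell},x)\right).
\end{equation*}
Averaging over $p=0,\dots,m-1$, letting $m\to+\infty$, and using that $\frac{1}{m}\sum_{p=0}^{m-1}h_i(\mathfrak{F}^\natural(p2^{\ell},x))\to g_i(x)\le\varepsilon$ for $x\in Z_{\varepsilon}$, we obtain statement~$\mathrm{(2)}$:
\begin{equation*}
\limsup_{m\to+\infty}\frac{1}{m2^{\ell}}\sum_{p=0}^{m-1}\max_{1\le k\le n}\left\{\int_{p2^{\ell}}^{(p+1)2^{\ell}}\bomega_k(\mathfrak{F}^\natural(t,x))\,\mathrm{d}t\right\}\le\bbchi_{\vec{\alpha}}^++\varepsilon.
\end{equation*}

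The only nontrivial step is the appeal to Egorov's theorem: Corollary~\ref{cor5.20} provides, for each individual driving point, an exponent past which the block-averaged diagonal terms stay below $\bbchi_{\vec{\alpha}}^++\varepsilon$, but a priori that exponent depends on the point; Egorov's theorem is exactly what lets us pick one $\ell$ that works simultaneously for a set of driving points of measure at least $1-\varepsilon$, which is why a single $\ell$ good for $\mathfrak{P}_{\vec{\alpha}}$-almost every point cannot be expected. Everything else is bookkeeping: recognising the block sums in Corollary~\ref{cor5.20} as Birkhoff averages for the time-$2^{\ell}$ map $\mathfrak{F}^\natural(2^{\ell},\bcdot)$ so that the relevant limits exist, and matching the relabelled number-of-blocks variable $m$ of Lemma~\ref{lem5.21} with the $\ell$ of Corollary~\ref{cor5.20}.
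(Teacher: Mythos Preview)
Your argument is correct. The paper does not actually prove Lemma~\ref{lem5.21}: it merely writes ``This can be proved by an argument similar to that of \cite[Theorem~3.7]{Dai06}. So, we omit the details here.'' Your proposal supplies precisely those details, and the route you take---identifying the inner sum in Corollary~\ref{cor5.20} as a Birkhoff average of the bounded Borel function $h_i$ under the time-$2^{i-1}$ map of $\mathfrak{F}^\natural$, then invoking Egorov's theorem on the pointwise convergence $g_i\to0$ to produce a single exponent $\ell$ valid on a set of measure at least $1-\varepsilon$---is the natural one and is almost certainly what the referenced argument in \cite{Dai06} amounts to in this setting.

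One small remark on measurability, which you pass over: the functions $g_i$ are defined as pointwise limits (on $\varGamma$) of Borel functions, hence are Borel on $\varGamma$; since $\mathfrak{P}_{\vec{\alpha}}(\varGamma)=1$, Egorov applies without issue. It would also be harmless to replace the inner $\lim$ by $\limsup$ throughout, which makes $g_i$ everywhere defined and still $\le\varepsilon$ on $Z_\varepsilon$; your final inequality only needs the $\limsup$ anyway.
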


\begin{proof}
This can be proved by an argument similar to that of \cite[Theorem~3.7]{Dai06}. So, we omit the details here.
\end{proof}

Because the qualitative functions $\bomega_k$ are bounded by Lemma~\ref{lem5.17}, we can improve the statement of Lemma~\ref{lem5.21} by choosing sufficiently large $\ell$, as follows:

\begin{cor}\label{cor5.22}%%%
Under the same context as Proposition~\ref{prop5.2}, for
any $\varepsilon>0$ so small that $\bbchi_{\vec{\alpha}}^++\varepsilon<0$, one can find an integer $\ell\ge1$ and a
Borel subset $Z_{\varepsilon}$ of the driving space $\mathrm{S}(\varSigma_{\!N}^+)\times\digamma_{\!n}^\natural$ such
that
\begin{enumerate}
\item[$\mathrm{(1)}$] $\mathfrak{P}_{\vec{\alpha}}(Z_{\varepsilon})\ge 1-\varepsilon$;

\item[$\mathrm{(2)}$] for any $([\biota,\tau],\bb)\in Z_{\varepsilon}$ with $0\le\tau<1$, letting $\{T_k\}_{k=0}^{+\infty}$ be defined by
\begin{equation*}
T_0=0,\quad T_k=k-\tau\;\forall k=1,2,\dotsc,
\end{equation*}
we have
\begin{equation*}\begin{split}
\bbchi_*^+([\biota,\tau],\bb)&:=\limsup_{m\to+\infty}\frac{1}{T_{m2^\ell}}\sum_{i=0}^{m-1}\max_{1\le
j\le
n}\left\{\int_{T_{i2^{\ell}}}^{T_{(i+1)2^{\ell}}}\bomega_k(\mathfrak{F}^\natural(t,([\biota,\tau],\bb)))\,\mathrm{d}t\right\}\\
&<0.
\end{split}\end{equation*}
\end{enumerate}
\end{cor}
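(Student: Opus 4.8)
The plan is to obtain Corollary~\ref{cor5.22} as an essentially bookkeeping consequence of Lemma~\ref{lem5.21}, the only extra ingredient being the uniform boundedness of the Liao qualitative functions $\bomega_1,\dotsc,\bomega_n$ from Lemma~\ref{lem5.17} (equivalently Lemma~\ref{lem5.11}.(3)). The point is that, for a representative with $0\le\tau<1$, each window $[T_{i2^\ell},T_{(i+1)2^\ell}]$ appearing in Corollary~\ref{cor5.22} is the dyadic window $[i2^\ell,(i+1)2^\ell]$ of Lemma~\ref{lem5.21} shifted by at most $\tau<1$, and the normalizing factor $T_{m2^\ell}=m2^\ell-\tau$ is asymptotic to $m2^\ell$; so the Liao-type limsup of Corollary~\ref{cor5.22} differs from that of Lemma~\ref{lem5.21} by a block-boundary error which, crucially, can be pushed below $-(\bbchi_{\vec{\alpha}}^++\varepsilon)$ by taking $\ell$ large.

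Concretely, I would first fix $\varepsilon>0$ so small that $\bbchi_{\vec{\alpha}}^++\varepsilon<0$ (and small enough for Lemma~\ref{lem5.21} to apply), set $\eta:=-(\bbchi_{\vec{\alpha}}^++\varepsilon)>0$, and let $\bbC<\infty$ be a bound with $|\bomega_j|\le\bbC$ on $\mathrm{S}(\varSigma_{\!N}^+)\times\digamma_{\!n}^\natural$ for every $j$ (Lemma~\ref{lem5.17}); for a point $([\biota,\tau],\bb)$ I abbreviate $g_j(t):=\bomega_j(\mathfrak{F}^\natural(t,([\biota,\tau],\bb)))$, so $|g_j|\le\bbC$. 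Applying Lemma~\ref{lem5.21} with this $\varepsilon$ and---as allowed by the note after that lemma---choosing the integer $\ell\ge1$ so large that $2\bbC/2^{\ell}<\eta$, I get a Borel set $Z_\varepsilon\subset\mathrm{S}(\varSigma_{\!N}^+)\times\digamma_{\!n}^\natural$ with $\mathfrak{P}_{\vec{\alpha}}(Z_\varepsilon)\ge1-\varepsilon$ such that, for all $([\biota,\tau],\bb)\in Z_\varepsilon$,
\begin{equation*}
\limsup_{m\to+\infty}\frac{1}{m2^{\ell}}\sum_{i=0}^{m-1}\max_{1\le j\le n}\left\{\int_{i2^{\ell}}^{(i+1)2^{\ell}}g_j(t)\,\mathrm{d}t\right\}\le\bbchi_{\vec{\alpha}}^++\varepsilon.
\end{equation*}

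Then, for $([\biota,\tau],\bb)\in Z_\varepsilon$ with $0\le\tau<1$ and $T_0=0$, $T_k=k-\tau$ $(k\ge1)$---so $T_{m2^\ell}=m2^\ell-\tau>0$ since $2^\ell\ge2>\tau$---I would compare windows: for each $i\ge0$ and each $j$ the interval $[T_{i2^\ell},T_{(i+1)2^\ell}]$ differs from $[i2^\ell,(i+1)2^\ell]$ by at most two end-intervals of length $\le\tau<1$, hence
\begin{equation*}
\int_{T_{i2^{\ell}}}^{T_{(i+1)2^{\ell}}}g_j(t)\,\mathrm{d}t=\int_{i2^{\ell}}^{(i+1)2^{\ell}}g_j(t)\,\mathrm{d}t+\delta_{i,j},\qquad|\delta_{i,j}|\le2\bbC,
\end{equation*}
uniformly in $i,j,m$ and in $\bb$. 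Taking $\max_{1\le j\le n}$ blockwise (using $\max_j(a_j+b_j)\le\max_j a_j+\max_j|b_j|$), summing over $0\le i\le m-1$, dividing by $T_{m2^\ell}=m2^\ell-\tau$, and letting $m\to+\infty$: the error contributes $2\bbC m/(m2^\ell-\tau)\to2\bbC/2^\ell$, while the main term has $\limsup$ equal to that of the previous display (because $\sum_{i}\max_j\int_{i2^\ell}^{(i+1)2^\ell}g_j$ has absolute value $\le m2^\ell\bbC$ and $T_{m2^\ell}/(m2^\ell)\to1$), i.e.\ at most $\bbchi_{\vec{\alpha}}^++\varepsilon$. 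Hence $\bbchi_*^+([\biota,\tau],\bb)\le\bbchi_{\vec{\alpha}}^++\varepsilon+2\bbC/2^{\ell}<\bbchi_{\vec{\alpha}}^++\varepsilon+\eta=0$, which is assertion~(2), the same $Z_\varepsilon$ and $\ell$ giving~(1).

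The hard part, such as it is, is simply to notice that the per-block discrepancy $\delta_{i,j}$ is only $O(1)$ rather than $o(1)$: it accumulates over the $m$ blocks to an $O(m)$ term whose normalized contribution $2\bbC/2^\ell$ is not automatically negligible. This is exactly why one must exploit the freedom, granted by the note after Lemma~\ref{lem5.21}, to enlarge $\ell$, fixing $\ell$ only after $\varepsilon$. Everything else is routine, including the passage to a representative with $0\le\tau<1$ (legitimate by $(\ref{eq3.5})^\prime$); for such a representative the $T_k$ are precisely the switching instants of the signal $\bbsigma_{[\biota,\tau]}$ of Lemma~\ref{lem5.12}, so $\bbchi_*^+([\biota,\tau],\bb)$ is the Liao-type exponent of $(\mathds{R}_{[\biota,\tau],\bb})$ associated to the $2^\ell$-sequence $\{m2^\ell\}_{m\ge0}$---precisely the input needed for Theorem~\ref{thm5.3}.
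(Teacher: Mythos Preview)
Your proposal is correct and follows precisely the route the paper indicates: the paper proves Corollary~\ref{cor5.22} only by the one-line remark that, since the $\bomega_k$ are bounded (Lemma~\ref{lem5.17}), Lemma~\ref{lem5.21} can be sharpened by taking $\ell$ sufficiently large. Your argument is exactly the fleshed-out version of this hint---comparing the shifted windows $[T_{i2^\ell},T_{(i+1)2^\ell}]$ with the dyadic ones $[i2^\ell,(i+1)2^\ell]$, bounding the per-block discrepancy by $2\bbC$, and absorbing the resulting $2\bbC/2^\ell$ into the gap $\eta=-(\bbchi_{\vec{\alpha}}^++\varepsilon)$ by enlarging $\ell$---so there is nothing to add.
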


So, for any $([\biota,\tau],\bb)\in Z_{\varepsilon}$ with $0\le\tau<1$, $\bbchi_*^+([\biota,\tau],\bb)$ is just the Liao-type exponent of $(\mathds{R}_{[\biota,\tau],\bb})$ associated to the switching signal $\bbsigma$ defined as in Lemma~\ref{lem5.12} with the $\mathrm{T}_{\!*}$-switching-time sequence $\{T_k\}_{k=0}^{+\infty}$ and $\{\mathbf{k}_m\}_{m=0}^{+\infty}$, where $\mathbf{k}_m=m2^\ell$ for all $m\in\mathbb{Z}_+$ and $\mathrm{T}_{\!*}=1$.

%%%%%%%%%%%%%%%%%%%%%%%%%%%%%%%%%%%%%%%%%%%%%%%%%%%%%%%%%%%%%%%%%%%%%%%%%%%%
\subsection{Proof of Proposition~\ref{prop5.2}}\label{sec5.4}%
Now we are ready to prove Proposition~\ref{prop5.2}, which implies Theorem~\ref{thm5.1} and further Theorem~\ref{thm1.2}.

\begin{proof}
For any $\varepsilon>0$ sufficiently small, let $\ell\ge1$ and $Z_{\varepsilon}\subset\mathrm{S}(\varSigma_{\!N}^+)\times\digamma_{\!n}^\natural$ be given by Corollary~\ref{cor5.22}.
Given any $([\biota,\tau],\bb)\in Z_{\varepsilon}$ with $0\le\tau<1$, we next consider the stability of the switching system
\begin{equation}\label{eq5.27}
\dot{x}(t)=A_{\sigma_{\biota}(\tau+t)}x(t)+f_{\sigma_{\biota}(\tau+t)}(x(t), t),\quad x(0)\in\mathbb{R}^n\textrm{ and }t\in\mathbb{R}_+,
\end{equation}
if $f_1(x,t), \dotsc, f_N(x,t)$ satisfy condition (\ref{eq5.1}) with $\bL$ sufficiently small. Here, as before, $\mathbb{R}_+=(0,+\infty)$.

Let $\bb(t)=\mathfrak{F}_{[\biota,\tau]}^\natural(t,\bb)$ as in (\ref{eq5.16}) and $G(t)=\mathbb{T}_{\bb(t)}^{-1}$ for all $t\in\mathbb{R}_+$, where $\mathbb{T}_{\bb(t)}$ is defined in the same way
as in (\ref{eq5.8}). Then, $G(t)$ is a family of orthogonal transformations and is piecewise smooth in $t$ with $G(0)=[\mathrm{col}_1\bb,\dotsc,\mathrm{col}_n\bb]^{-1}$. From the equation
\begin{equation}\label{eq5.28}
\dot{x}(t)=A_{\sigma_\biota(\tau+t)}x(t),\quad x(0)\in\mathbb{R}^n\textrm{ and }t\in\mathbb{R}_+
\end{equation}
via the nonautonomous coordinates transformations
\begin{equation}\label{eq5.29}
z(t)=G(t)x(t)\quad \forall t\in\mathbb{R}_+
\end{equation}
we can obtain the equation
\begin{equation}\label{eq5.30}
\dot{z}(t)=\left(\left\{\frac{d^-}{dt}G(t)\right\}G^{-1}(t)+G(t)A_{\sigma_\biota(\tau+t)}G^{-1}(t)\right)z(t),\quad z(0)\in\mathbb{R}^n\textrm{ and }t>0.
\end{equation}
So from Lemma~\ref{lem5.11}, it follows that
\begin{equation*}
\mathds{R}_{[\biota,\tau],\bb}(t)=\left\{\frac{d^-}{dt}G(t)\right\}G^{-1}(t)+G(t)A_{\sigma_\biota(\tau+t)}G^{-1}(t)\quad \forall t>0.
\end{equation*}
Write
\begin{equation}\label{eq5.31}
F_{\sigma_\biota(\tau+t)}(z,t)=G(t)f_{\sigma_\biota(\tau+t)}(G^{-1}(t)z,t).
\end{equation}
From (\ref{eq5.1}), it follows that
\begin{equation*}
\|F_{\sigma_\biota(\tau+t)}(z,t)\|\le \bL\|z\|.\leqno{(\ref{eq5.31})^\prime}
\end{equation*}
Then from (\ref{eq5.27}), under (\ref{eq5.29}) we have got the equation
\begin{equation}\label{eq5.32}
\dot{z}(t)=\mathds{R}_{[\biota,\tau],\bb}(t)z(t)+F_{\sigma_\biota(\tau+t)}(z(t),t),\quad z(0)\in\mathbb{R}^n\textrm{ and }t>0.
\end{equation}
Moreover, from $(\ref{eq5.21})^\prime$ and Lemma~\ref{lem5.12}, the equation (\ref{eq5.32}) becomes the following switching system
\begin{equation}\label{eq5.33}
\dot{z}(t)=S_{\bbsigma_{[\biota,\tau]}(t)}(t)z(t)+F_{\bbsigma_{[\biota,\tau]}(t)}(z(t),t),\quad z(0)\in\mathbb{R}^n\textrm{ and }t>0,
\end{equation}
which has the Liao-type exponent $\bbchi_*^+([\biota,\tau],\bb)$ from Corollary~\ref{cor5.22}.

Let $\bbC>0$ be given by Lemma~\ref{lem5.11}.(3) and $\bbchi_{\vec{\alpha}}^+<0$ by Corollary~\ref{cor5.15}. Applying Theorem~\ref{thm5.3} with $\bbalpha=\bbC, \pmb{\Delta}=2^\ell, \mathrm{T}_{\!*}=1$, and $\bbchi_*^+=\bbchi_*^+([\biota,\tau],\bb)$, there follows that one can find some constant
\begin{equation*}
\bbdelta=\bbdelta(\bbC, \bbchi_{\vec{\alpha}}^+, 2^\ell)>0
\end{equation*}
such that (\ref{eq5.33}) is globally exponentially stable if the constant $\bL\le\bbdelta$.

This completes the proof of Proposition~\ref{prop5.2}.
\end{proof}

\medskip
Then the statements of Theorem~\ref{thm1.2} hold.
%%%%%%%%%%%%%%%%%%%%%%%%%%%%%%%%%%%%%%%%%%%%%%%%%%%%%%%%%%%%%%%%%%
%%%%%%%%%%%%%%%%%%%%%%%%%%%%%%%%%%%%%%%%%%%%%%%%%%%%%%%%%%%%%%%%%%
%----------------------------------------------------------------
%\enlargethispage{12pt}
\subsection*{Acknowledgment}
The author is very grateful to the anonymous referees
for their insightful comments on this
manuscript.

%%%%%%%%%%%%%%%%%%%%%%%%%%%%%%%%%%%%%%%%%%%%%%%%%%%%%%%%%%%%%%%%%%%%%%%%%%


\begin{thebibliography}{99}
\bibitem{Abramov}
   \newblock {\sc L.\,M.~Abramov},
   \newblock {\it On the entropy of flows},
   \newblock Dokl. Akad. Nauk. SSSR, 128 (1959), pp.~873--876.

\bibitem{AL01}
   \newblock {\sc A.\,A.~Agrachev and D.~Liberzon},
   \newblock {\it Lie-algebraic stability criteria for switched systems},
   \newblock SIAM J. Control Optim., 40 (2001), pp.~253--269.

\bibitem{AI}
   \newblock {\sc Y.\,A.~Al'pin and K.\,D.~Ikramov},
   \newblock {\it Reducibility theorems for pairs of matrices as rational criteria},
   \newblock Linear Algebra Appl., 313 (2000), pp.~155--161.

\bibitem{Branicky98}
   \newblock {\sc M.\,S.~Branicky},
   \newblock {\it Multiple Lyapunov functions and other analysis tools for switched and hybrid systems},
   \newblock IEEE Trans. Automat. Control, 43 (1998), pp.~475--482.

\bibitem{BP}
   \newblock {\sc A.~Bressan and B.~Piccoli},
   \newblock {\it Introduction to the Mathematical Theory of Control},
   \newblock AIMS on Applied Math. Vol. 2, American Institute of Mathematical Science, 2007.

\bibitem{Dai06}
   \newblock {\sc X.~Dai},
   \newblock {\it Exponential stability of nonautonomous linear differential equations with linear perturbations by Liao methods},
   \newblock J. Differential Equations, {225} (2006), pp.~549--572.

\bibitem{Dai09}
   \newblock {\sc X.~Dai},
   \newblock {\it Integral expressions of Lyapunov exponents for autonomous ordinary differential systems},
   \newblock Sci. China Ser. A: Math., 52 (2009), pp.~195--216.

\bibitem{Dai10}
   \newblock {\sc X.~Dai},
   \newblock {\it Optimal state points of the subadditive ergodic theorem},
   \newblock Nonlinearity, 24 (2011), pp.~1565--1573.

\bibitem{DHX08}
   \newblock {\sc X.~Dai, Y.~Huang, and M.~Xiao},
   \newblock {\it Almost sure stability of discrete-time switched linear systems: A topological point of view},
   \newblock SIAM J. Control Optim., 47 (2008), pp.~2137--2156.

\bibitem{DHX10}
   \newblock {\sc X.~Dai, Y.~Huang, and M.~Xiao},
   \newblock {\it Criteria of stability for continuous-time switched systems by using Liao-type exponents},
   \newblock SIAM J. Control Optim., 48 (2009/10), pp.~3271--3296.

\bibitem{DHX11}
   \newblock {\sc X.~Dai, Y.~Huang, and M.~Xiao},
   \newblock {\it Stability of time-varying nonlinear switching systems under perturbations},
   \newblock Preprint, 2010, arXiv:1109.1102v2 [cs.SY] 20 Dec 2011.

\bibitem{DZ}
   \newblock {\sc X.~Dai and Z.-L.~Zhou},
   \newblock {\it A generalization of a theorem of Liao},
   \newblock Acta Math. Sin. (Engl. Ser.), 22 (2006), pp.~207--210.

\bibitem{DBPLA}
   \newblock {\sc R.\,A.~Decarlo, M.S.~Branicky, S.~Pettersson, B.~Lennartson, and P.J.~Antsaklis},
   \newblock {\it Perspectives and results on the stability and stabilizability of hybrid systems},
   \newblock in Proc. IEEE: Special Issue Hybrid Systems, 88 (2000), pp.~1069--1082.

\bibitem{FR}
   \newblock {\sc A.~Fryszowski and T.~Rze\v{z}uchowski},
   \newblock {\it Continuous version of Filippov-Wa\v{z}ewski relaxation theorem},
   \newblock J. Differential Equations, 94 (1991), pp.~254--265.

\bibitem{FK}
   \newblock {\sc H.~Furstenberg and H.~Kesten},
   \newblock {\it Products of random matrices},
   \newblock {Ann. Math. Statist., 31 (1960), pp.~457--469}.

\bibitem{Gurevich}
   \newblock {\sc B.\,M.~Gurevich},
   \newblock {\it Construction of increasing partitions for special flows},
   \newblock Theory Probab. Appl., 10 (1965), pp.~627--645.

\bibitem{Gurvits}
   \newblock {\sc L.~Gurvits},
   \newblock {\it Stability of discrete linear inclusion},
   \newblock Linear Algebra Appl., 231 (1995), pp.~47--85.

\bibitem{HBF}
   \newblock {\sc H.~Haimovich, J.\,H.~Braslavsky, and F.\,E.~Felicioni},
   \newblock {\it Feedback stabilisation of switching discrete-time systems via Lie-algebraic techniques},
   \newblock IEEE Trans. Automat. Control, 56 (2011), pp.~1129--1135.

\bibitem{Hum}
   \newblock {\sc J.\,E.~Humphreys},
   \newblock {\it Introduction to Lie Algebras and Representation Theory},
   \newblock GTM 9, Springer-Verlag, New York, 1972.

\bibitem{ISW}
   \newblock {\sc B.~Ingralls, E.\,D~Sontag, and Y.~Wang},
   \newblock {\it An infinite-time relaxation theorem for differential inclusions},
   \newblock Proc. Amer. Math. Soc., 131 (2003), pp.~487--499.

\bibitem{Laffey78}
   \newblock {\sc T.\,J.~Laffey},
   \newblock {\it Simultaneous triangularization of matrices\,---\,low rank case and the nonderogatory case},
   \newblock Linear Multilinear Algebras, 6 (1978), pp.~269--305.

\bibitem{Lalley}
   \newblock {\sc S.\,P.~Lalley},
   \newblock {\it Distribution of periodic orbits of symbolic and Axiom A flows},
   \newblock Adv. Appl. Math., 8 (1987), pp.~154--193.

\bibitem{Liao63}
   \newblock {\sc S.~Liao},
   \newblock {\it Certain ergodic properties of a differential system on a compact differentiable manifold},
   \newblock Acta Sci. Natur. Univ. Pekinensis, 9 (1963), pp.~309--327.

\bibitem{LHM99}
   \newblock {\sc D.~Liberzon, J.\,P.~Hespanha, and A.\,S.~Morse},
   \newblock {\it Stability of switched systems: a Lie-algebraic condition},
   \newblock Systems $\&$ Control Letters, 37 (1999), pp.~117--122.

\bibitem{LA09}
   \newblock {\sc H.~Lin and P.\,J.~Antsaklis},
   \newblock {\it Stability and stabilizability of switched linear systems: A survey of recent results},
   \newblock IEEE Trans. Automat. Control, 54 (2009), pp.~308--322.

\bibitem{Lya}
   \newblock {\sc A.~Lyapunov},
   \newblock {\it The General Problem of the Stability of Motion},
   \newblock Taylor $\&$ Francis, 1992.

\bibitem{Margaliot06}
   \newblock {\sc M.~Margaliot},
   \newblock {\it Stability analysis of switched systems using variational principles: An introduction},
   \newblock Automatica, 42 (2006), pp.~2059--2077.

\bibitem{ML06}
   \newblock {\sc M.~Margaliot and D.~Liberzon},
   \newblock {\it Lie-algebraic stability conditions for nonlinear switched systems and differential inclusions},
   \newblock Systems $\&$ Control Letters, 55 (2006), pp.~8--16.

\bibitem{NB94}
   \newblock {\sc K.\,S.~Narendra and J.~Balakrishnan},
   \newblock {\it A common Lyapunov function for stable LTI systems with commuting $A$-matrices},
   \newblock IEEE Trans. Automat. Control, 39 (1994), pp.~2469--2471.

\bibitem{NS}
   \newblock {\sc V.\,V.~Nemytskii and V.\,V.~Stepanov},
   \newblock {\it Qualitative Theory of Differential Equations},
   \newblock Princeton University Press, Princeton, New Jersey 1960.

\bibitem{Ose}
   \newblock {\sc V.\,I.~Oseledec},
   \newblock {\it A multiplicative ergodic theorem, Lyapunov characteristic numbers for dynamical systems},
   \newblock Trudy Mosk Mat. Obsec., 19 (1968), pp.~119--210.

\bibitem{Perron}
   \newblock {\sc O.~Perron},
   \newblock {\it Die Ordnunfszahlen linearer Differentialglwichungssyteme},
   \newblock Math. Zs., 31 (1930), pp.~748--766.

\bibitem{RR}
   \newblock {\sc H.~Radjavi and P.~Rosenthal},
   \newblock {\it Simultaneous Triangularization},
   \newblock Springer-Verlag, New York 2000.

\bibitem{SNS}
   \newblock {\sc H.~Shim, D.\,J.~Noh, and J.\,H.~Seo},
   \newblock {\it Common Lyapunov function for exponentially stable nonlinear systems},
   \newblock J. Korean Institute of Electrical Engineers, 11 (2001), pp.~108--111.

\bibitem{Sontag}
   \newblock {\sc E.\,D.~Sontag},
   \newblock {\it Mathematical Control Theory: Deterministic Finite-dimensional Systems},
   \newblock 2nd edition, TAM {6}. Springer-Verlag, New York Tokyo 1998.

\bibitem{Sun04}
   \newblock {\sc Z.~Sun},
   \newblock {\it Stabilizability and insensitivity of switched linear systems},
   \newblock IEEE Trans. Automat. Control, 49 (2004), pp.~1133--1137.

\bibitem{Sun06}
   \newblock {\sc Z.~Sun},
   \newblock {\it Stabilization and optimization of switched linear systems},
   \newblock Automatica, 42 (2006), pp.~783--788.

\bibitem{Tok87}
   \newblock {\sc J.~Tokarzewski},
   \newblock {\it Stability of periodically switched linear systems and the switching frequency},
   \newblock Int. J. Systems Sci., 18 (1987), pp.~697--726.

\bibitem{Utkin77}
   \newblock {\sc V.\,I.~Utkin},
   \newblock {\it Variable structure systems with sliding modes},
   \newblock IEEE Trans. Automat. Control, 22 (1977), pp.~212--222.

\bibitem{Walters82}
   \newblock {\sc P.~Walters},
   \newblock {\it An Introduction to Ergodic Theory}, GTM 79,
   \newblock Springer-Verlag, New York, 1982.

\bibitem{WPD}
   \newblock {\sc M.\,A.~Wicks, P.~Peleties, and R.\,A.~DeCarlo},
   \newblock {\it Switched controller synthesis for the quadratic stabilization of a pair of unstable linear systems},
   \newblock Eur. J. Control, 4 (1998), pp.~140--147.
\end{thebibliography}
\end{document}